\documentclass[final,onecolumn,cm]{llncs}

\newif\iffull
% \fullfalse
 \fulltrue

\newif\ifshort
\shorttrue
% \shortfalse

\usepackage[english]{babel}
\usepackage{times}

\usepackage{amsmath}
\usepackage{amsfonts,mathrsfs,amssymb}
\usepackage{amsthm}
\usepackage{mathtools}
\usepackage{stmaryrd}

\usepackage{multirow}

\usepackage{paralist}

\usepackage{booktabs}

\usepackage[final]{graphicx}
\usepackage[final]{listings}
\usepackage{pdfsync}

\newcommand{\keywords}[1]{\par\addvspace\baselineskip
\noindent\keywordname\enspace\ignorespaces#1}

\newtheorem{verificationproblem}{Verification Problem}
\newcounter{custthm}
\newtheorem{customtheorem}[custthm]{Theorem}

\newenvironment{proof*}
    {\begin{proof}[Proof \textup(Sketch\textup)]}
    {\end{proof}}

\usepackage[final,
             bookmarks,
             bookmarksopen,
             % bookmarksnumbered,
             colorlinks,
             final,
             linkcolor=blue,
             %citecolor=black!50!brown,
             citecolor=brown,
             backref,
             pdfstartview=FitH ]%
{hyperref}

\usepackage[nompar,nosign]{commenting}

\colorlet{defaultcol}{black!90!yellow}
\declareauthor{lo}{Luke}{red}
\authorcommand{lo}{comment}
\declareauthor{jk}{Jonathan}{blue!90!black}
\authorcommand{jk}{comment}

\input{macros}
%!TEX root = main.tex
%General
% \newcommand{\is}{\mathrel{\mathop:}=}
\newcommand{\N}{\mathbb{N}}

\newcommand{\ceil}[1]{\ensuremath{\lceil #1\rceil}}
\newcommand{\disjointunion}[0]{\dunion}
\newcommand{\mset}[1]{\ensuremath{\left[#1\right]}}
\newcommand{\implied}[0]{\Leftarrow}
\newcommand{\intersect}[0]{\cap}
\newenvironment{mathprooftree}
  {\varwidth{.9\textwidth}\centering\leavevmode}
  {\DisplayProof\endvarwidth}

\newcommand{\bigslant}[2]{{\raisebox{.15em}{$#1$}\hspace{-1mm}\left/\raisebox{-.15em}{$#2$}\right.}}

%Multisets

%PCCFG

\newcommand{\NonT}{\mathcal{N}}
\newcommand{\PPL}{\mathcal{L}}
\newcommand{\Rules}{\ensuremath{\mathcal{R}}}
\newcommand{\eqvI}[0]{\ensuremath{\simeq_I}}
\newcommand{\RHS}[0]{\ensuremath{\text{RHS}}}
\newcommand{\CommWords}{\ensuremath{\bigslant{(\Sigma \union \NonT)^{*}}{\eqvI}}}
\newcommand{\CommTermWords}{\ensuremath{\bigslant{\Sigma^{*}}{\eqvI}}}
\newcommand{\CommNonTermWords}{\ensuremath{\bigslant{\NonT^{*}}{\eqvI}}}
\newcommand{\ComSigma}[0]{{\ensuremath{\Sigma^{\text{com}}}}}
\newcommand{\NComSigma}[0]{{\ensuremath{\Sigma^{\neg\text{com}}}}}
\newcommand{\ComN}[0]{{\ensuremath{\NonT^{\text{com}}}}}
\newcommand{\NComN}[0]{{\ensuremath{\NonT^{\neg\text{com}}}}}

\newcommand\calG{\mathcal{G}}

%Transitionrelations
\newcommand{\toSF}[0]{\ensuremath{\to_{\text{seq}}}} % sequential faithful transition relation
\newcommand{\toCF}[0]{\ensuremath{\to_{\text{con}}}} % concurrent faithful transition relation
\newcommand{\toSM}[0]{\ensuremath{\to_{\text{seq}'}}} % sequential alternative transition relation
\newcommand{\toCM}[0]{\ensuremath{\to_{\text{con}'}}} % concurrent alternative transition relation

%ACPS
\newcommand\snd[2]{{{#1} \mathbin{\text{!}} {#2}}}
\newcommand\rec[2]{{{#1} \mathbin{\text{?}} {#2}}}
\newcommand{\geqShape}{\geq_{\text{shape}}}
\newcommand{\greaterShape}{>_{\text{shape}}}

%Statespace
% \newcommand{\cproc}[1]{\tuple{#1}} % exp pid queue
\newcommand{\cproc}[1]{{#1}}
\newcommand{\cchan}[2]{\mcchan{\mset{#1}}{#2}} % queue pid
\newcommand{\mcchan}[2]{{#1}^{#2}} % queue pid
\newcommand{\Queue}{\ensuremath{\mathit{Queue}}}
\newcommand{\Queues}{\ensuremath{\mathit{Queues}}}
\newcommand{\Config}{\ensuremath{\mathit{Config}}}
\newcommand{\Control}{\ensuremath{\mathit{Control}}}

\newcommand{\Chan}{\ensuremath{\mathit{Chan}}}
\newcommand{\TermCache}[0]{\ensuremath{\mathit{TermCache}}}
\newcommand{\NonTermCache}[0]{\ensuremath{\mathit{NonTermCache}}}
\newcommand{\MixedCache}[0]{\ensuremath{\mathit{MixedCache}}}
\newcommand{\Cache}[0]{\ensuremath{\mathit{Cache}}}
\newcommand{\CallStack}{\ensuremath{\mathit{CallStack}}}
\newcommand{\ControlState}{\ensuremath{\mathit{ControlState}}}
\newcommand{\DelayedControl}[0]{\ensuremath{\mathit{DelayedControl}}}
\newcommand{\InControl}[0]{\ensuremath{\mathit{NormalControl}}}

\renewcommand{\Msg}{\ensuremath{\mathit{Msg}}}
\newcommand{\ChanPar}{\ensuremath{\mathop{\,\lhd \,}}}
%\newcommand{\ChanPar[2]}{{#1} \; \triangleleft \; {#2}}

%reduction proof

\newcommand{\Reach}[0]{\ensuremath{\mathit{Reach}}}
\newcommand{\Pred}[0]{\ensuremath{\mathit{Pred}}}
\newcommand{\seqpar}[1]{\llparenthesis #1\rrparenthesis}
\newcommand{\seqM}[0]{\overline{\M}}
\newcommand{\simulated}[0]{\preccurlyeq}
\newcommand{\simulates}[0]{\succcurlyeq}
\newcommand{\simulatedS}[0]{\simulated_{\text{S}}}
\newcommand{\simulatedC}[0]{\simulated_{\text{C}}}
\newcommand{\simulatesS}[0]{\simulates_{\text{S}}}

% !TEX root = main.tex

\usepackage[final]{listings}

\usepackage{xcolor}
\colorlet{keyword}{blue!50!black}
\colorlet{atom}{red!50!black}
\colorlet{module}{green!30!black}
\colorlet{comment}{black!70}
\colorlet{coderules}{black!50}
\colorlet{lineno}{black!50}

\makeatletter
\newcommand{\srcsize}{\@setfontsize{\srcsize}{7.625pt}{7.625pt}}
\newcommand{\srcinsize}{\@setfontsize{\srcsize}{8.5pt}{8.5pt}}
\makeatother

\lstdefinelanguage{CoreErlang}
  {morekeywords={fun,and,case,letrec,let,in,catch,div,end,exit,export,halt,%
      if,import,link,make_ref,module,monitor,of,or,receive,self,send,spawn,throw,to,%
      unlink%
      % ,get, put, % The ones we want to ignore
      },%
   morekeywords={[2]error,false,nil,ok,true,undefined,pid},%
   otherkeywords={!},%
   %otherkeywords={[2]\{,\},[,]},%
   morecomment=[l]\%,%
   morestring=[b]"%
   %morestring=[b]'%
  }[keywords,comments,strings]%

\lstdefinestyle{sans}{
    xleftmargin=20pt,
    %xrightmargin=0pt,
    tabsize=4,
    showstringspaces=false,
    columns=[l]flexible,
    % basewidth=0.41em,
    breaklines,
    fontadjust,
    numbers=left,
    numberstyle={\tiny\color{lineno}},
    literate={
        {->}{{$\hspace{-1pt}\rightarrow$}}2
        {...}{{$\ldots$}}3
        %{fun}{{$\color{navy}\boldsymbol{\lambda}$}}1
    },
%    moredelim=[is][\it]{:}{(},
    basicstyle={\sffamily\srcsize}, % \ttfamily
    keywordstyle={\color{keyword}\bf\srcsize},
    keywordstyle={[2]\color{atom}\srcsize},
%    keywordstyle={[2]\bf},
%	comment=[l]{\%},
    commentstyle={\scriptsize\color{comment}},
    emphstyle={\color{atom}\srcsize},
    emphstyle={[2]\color{module}},
    moredelim=[is][emphstyle]{\#}{\#},
    moredelim=[is][emphstyle2]{@}{@},
    escapechar=§,mathescape
}
\lstdefinestyle{boxed}{frame=single,frameround=tttt,backgroundcolor=\color{keyword!5}}
\lstdefinestyle{head}{style=boxed, frame=tlr, frameround=tfft}
\lstdefinestyle{middle}{style=boxed, frame=lr, firstnumber=last}
\lstdefinestyle{tail}{style=boxed, frame=lrb, frameround=fttf, firstnumber=last}

\lstdefinestyle{inl}{columns=[c]flexible, basicstyle={\sffamily\srcinsize}}

\lstset{language=CoreErlang, style=sans}
\newcommand{\erl}[2][]{\ifmmode\expandafter\text\fi{\lstinline[style=inl,#1]{#2}}}

\lstnewenvironment{erlang}[1][]{
 \lstset{language=CoreErlang, style=sans, #1}
}{}
\lstnewenvironment{erlang*}[1][]{
	\lstset{language=CoreErlang, style=sans, numbers=none, #1}
}{}
\lstnewenvironment{erlango}[1][]{
  \lstset{language=CoreErlang, style=sans, style=boxed, frame=tlr,%
  frameround=ttff, #1} }{}

%\makeatother

\allowdisplaybreaks

\usepackage{tikz}
\usetikzlibrary{shapes,arrows,positioning,fit,matrix,automata}

\usepackage{pdfsync}
\usepackage[normalem]{ulem}
\usepackage{varwidth,bussproofs}

\usepackage[numbers]{natbib}

\begin{document}

%!TEX root = main.tex
\title{Safety Verification of Asynchronous Pushdown Systems with Shaped Stacks}
\titlerunning{Verifying Asynchronous Pushdown Systems with Shaped Stacks}

\author{Jonathan Kochems \qquad \qquad C.-H.~Luke Ong}
\institute{University of Oxford}
\maketitle 

\begin{abstract}
% \ifshort%
In this paper, we study the program-point reachability problem of concurrent pushdown systems that communicate via unbounded and unordered message buffers. Our goal is to relax the common restriction that messages can only be retrieved by a pushdown process when its stack is empty. 
We use the notion of partially commutative context-free grammars to describe a new class of asynchronously communicating pushdown systems with a mild shape constraint on the stacks for which the program-point cover\-a\-bil\-i\-ty problem remains decidable. 
Stacks that fit the shape constraint may reach arbitrary heights; further a process may execute any communication action (be it process creation, message send or retrieval) whether or not its stack is empty. 
This class extends previous computational models studied in the context of a\-syn\-chro\-nous programs, and enables the safety verification of a large class of message passing programs.
\keywords{Pushdown systems, asynchronous message passing, verification}
% \else%
% In this paper, we study the program-point reachability problem of concurrent pushdown systems that communicate asynchronously via unbounded and unordered message buffers. Our goal is to relax the common restriction that messages can only be retrieved by a pushdown process when its stack is empty. We introduce a new class of asynchronously communicating pushdown systems with a mild shape constraint on the stacks, and show that the program-point cover\-a\-bil\-i\-ty problem remains decidable: stacks that fit the shape may reach arbitrary heights; further a process may execute any communication action (be it process creation, message send or retrieval) whether or not its stack is empty. This class extends previous computational models studied in the context of a\-syn\-chro\-nous programs, and enables the safety verification of a large class of message passing programs.
% \keywords{Concurrent pushdown systems, asynchronous message passing, verification}
% \fi%
\end{abstract}
%!TEX root = main.tex
\section{Introduction}

The safety verification of concurrent and distributed systems, such as client-server environments, peer-to-peer networks and the myriad web-based applications, is an important topic of research. We consider \emph{asynchronously communicating pushdown systems} (ACPS), a model of computation for such systems suitable for the algorithmic analysis of the reachability problem. Each process of the model is a pushdown system; processes may be spawned dynamically and they communicate asynchronously via a number of unbounded message buffers which may be ordered or unordered. In order to obtain a decision procedure for reachability, some models restrict the retrieval (or, dually, the sending) of messages or the scheduling of tasks, allowing it to take place only when the call stack is empty. 

Can these restrictions on call stacks be relaxed? Unfortunately\footnote{Any analysis that is both context-sensitive and synchronisation-sensitive is undecidable \cite{Ramalingam:2000}.}
some form of constraint on the call stacks in relation to the communication actions is unavoidable. Inspired by the work on asynchronous procedure calls \cite{Sen:2006,Jhala:2007,Ganty:2012}, we consider processes that communicate asynchronously via a fixed number of unbounded and unordered message buffers which we call channels. Because channels are unordered, processes cannot observe the precise sequencing of such concurrency actions as message send and process creation; however, the sequencing of other actions, notably blocking actions such as message retrieval which requires synchronisation, is observable. If the behaviour of a process is given by its action sequences, then we may postulate that certain actions \emph{commute} with each other (over sequential composition) while others do not. To formalise these assumptions, we make use of \emph{partially commutative context-free grammars} (PCCFG) \cite{Czerwinski:09}, introduced recently by \citeauthor{Czerwinski:09} as a study in process algebra. A PCCFG is just a context-free grammar equipped with an irreflexive symmetric relation, called \emph{independence}, over an alphabet $\Sigma$ of terminal symbols, which precisely captures the symbols that \emph{commute} with each other. In our model, a process is described by a PCCFG that generates the set of its action sequences; terminal symbols represent concurrency and communication actions, while the non-terminal symbols represent procedure calls; and there is an induced notion of commutative procedure calls. 
With a view to deciding reachability, a key innovation of our work is to summarise the effects of the commutative procedure calls on the call stack. Rather than keeping track of the contents of the stack, we precompute the actions of those procedure calls that produce only commutative side-effects, and store them in caches on the call stack. The non-commutative procedure calls, which are left on the stack \emph{in situ}, act as separators for the caches of commutative actions. As soon as the top non-commutative non-terminal on the stack is popped, which may be triggered by a concurrency action, the cache just below it is unlocked, and all the cached concurrency actions are then despatched at once. 

In order to obtain a decision procedure for (a form of reachability called) \emph{coverability}, we place a natural constraint on the shape of call stacks: at all times, no more than an \emph{a priori} fixed number of \emph{non-commutative non-terminals} may reside in the stack. Note that because the constraint does not apply to commutative non-terminals, call stacks can grow to arbitrary heights. Thanks to the shape constraint, we can prove that the coverability problem is decidable by an encoding into well-structured transition systems. To our knowledge, this class extends previous computational models studied in the context of asynchronous programs. Though our shape constraint is semantic, we give a simple sufficient condition which is expressed syntactically, thus enabling the safety verification of a large class of message-passing programs. 

\begin{example}\label{eg:rw-pattern}
In Figures~\ref{fig:example:resource_and_distributor} and~\ref{fig:example:worker} we give an example program written in a version of Erlang that employs channels (as opposed to Actor-style mailboxes), implementing a simple replicated workers pattern. It consists of a distributor process that initially spawns a number of workers, sets up a single shared resource, and distributes one task per worker over a one-to-many channel. 
%%%%%%%------ Example -----------%%%%%%%
%!TEX root = ../main.tex
\begin{figure}[t]%
\begin{minipage}{0.49\textwidth}
\begin{erlang}[emph={c1,c2,toResource,toWorkers,toDistributor,isReady,init,false,true,dist,redist,local,which,globalstate,ready,result, unlock_req, lock_req, update, locked, getState,state}]%
main() -> setup_network(), 
          redistribute().

setup_network() ->
  spawn(worker),
  case (*) of
    true  -> setup_network();
    false -> 
      spawn(res_start(init)),
      toResource ! isReady,
      receive toDistributor:
         ready -> ()
      end;
  end, toWorkers ! task.

redistribute() ->
  receive toDistributor:
    redist(Task) -> toWorkers ! Task;
    result(Result) -> print(Result);
  end, redistribute().
\end{erlang}
\end{minipage}
\begin{minipage}{0.49\textwidth}
\begin{erlang}[firstnumber=21,emph={c1,c2,toResource,toWorkers,toDistributor,isReady,init,false,true,dist,redist,local,which,globalstate,ready,result, unlock_req, lock_req, update, locked, getState,state}]
% Resource
res_start(S) = 
  fun() -> toDistributor ! ready, 
           resource(S) 
  end.
resource(S) ->
  receive toResource:
    lock_req -> 
      toWorkers ! locked, 
      resource_locked(S)
  end.

resource_locked(S) ->
  receive toResource:
    unlock_req -> resource(S);
    getState   -> 
      toWorkers ! state(S), 
      resource_locked(S);
    update(X)  -> resource_locked(X)
  end.
\end{erlang}
\end{minipage}
% \linebreak
\vspace{-4mm}
\caption{A resource and a task distributor.}
\label{fig:example:resource_and_distributor}
\end{figure}
\begin{figure}[tbh]
\begin{minipage}{0.49\textwidth}
\begin{erlang}[emph={c1,c2,toResource,toWorkers,toDistributor,init,false,true,dist,redist,local,which,globalstate,ready,result,getState,update,locked_req,locked,unlock_req,state}]%
worker() ->
  receive toWorkers:
    Task -> 
      result = do_task(Task),
      toDistributor ! result;
  end, worker().

do_task(Task) ->
  case decompose(Task) of
    local(Task',Int_result) -> 
      Result = do_task(Task'),
      Result' = 
        combine(Result,Int_result)
      return Result';
    redist(Task',Task'') ->  
      Result = do_task(Task'),
      toDistributor ! Redist(Task''),
      return Result;
  end.

combine(res,res')    -> ...
\end{erlang}
\end{minipage}
\begin{minipage}{0.49\textwidth}
\begin{erlang}[firstnumber=22,emph={c1,c2,toResource,toWorkers,toDistributor,init,false,true,dist,redist,local,which,globalstate,ready,result,getState,update,locked_req,locked,unlock_req,state}]%
decompose(Task) -> 
  lock(toResource),
  toResource ! getState,
  ?label("critical"),
  receive toWorkers: 
    state(State) ->
      (Result,Update) = 
        decompose_task(Task, State)
  end,
  toResource ! update(Update),
  unlock(toResource),
  return Result.

lock(C) -> 
  C ! lock_req,
  receive toWorkers:
    locked -> ()
  end.
unlock(C) -> C ! unlock_req.

decompose_task(Task,State) -> ...
\end{erlang}
\end{minipage}
\vspace{-4mm}
\caption{A worker that recursively solves tasks and shares its workload.}
\label{fig:example:worker}
\end{figure}%
%%%%%%%------ Example -----------%%%%%%%
Each worker runs a task-processing loop. Upon reception of a task, the worker recursively decomposes it, which involves communicating with the shared resource at each step. Note that the communication of each worker with the resource is protected by a lock. For the worker, the decomposition has two possible outcomes: \begin{inparaenum}[(i)] 
\item the task is partially solved, generating one subtask and an intermediate result or
\label{ex:task_decomp:one}
\item the task is broken down into one subtask and one new distributable task. 
\label{ex:task_decomp:two}
\end{inparaenum}
In case (\ref{ex:task_decomp:one}) the worker recursively solves the subtask and combines the result with the intermediate result. In case (\ref{ex:task_decomp:two}) the worker recursively solves the subtask and subsequently dispatches the newly generated distributable task before returning. When a worker has finished processing a task, it relays the result to the server and awaits a new task to process. We have left the implemention of the functions \erl{decompose_task} and \erl{combine} open; for the purpose of this example we only assume that they do not perform any concurrency actions, but they may be recursive functions.

Note that the call stacks of both the distributor and the workers may reach arbitrary heights, and communication actions may be performed by a process at any stage of the computation, regardless of stack height. For example the worker sends and receives messages at every decomposition, and each recursive call increases the height of the call stack.

An interesting verification question for this example program is whether the locking mechanism for the shared resource guarantees exclusive access to the shared resource for each worker process in its critical section.
\end{example}

\paragraph{A Computational Model.} 
To verify programs such as the above we need a computational model that allows us to model recursive procedure calls, message passing concurrency actions and process creation. Once the obvious abstractions are applied to make the data and message space finite, we arrive at a network of pushdown systems (equivalently context-free grammars) which can communicate asynchronously over a finite number of  channels with unbounded capacity. Since we are interested in a class of such systems with decidable verification problems we assume that channels are unordered (FIFO queues with finite control are already Turing powerful \cite{Brand:83}). 

\paragraph{Outline.}  The rest of the paper is organised as follows. In Section~\ref{sec:acps} we
present our model of asynchronous partially commutative pushdown systems (APCPS), its (standard)
semantics and a verification problem. In Section~\ref{sec:altsem} we investigate an alternative
semantics for APCPS, a corresponding verification problem, and relate it to the verification problem
of Section~\ref{sec:acps}. In Section~\ref{sec:shapedstacks} we introduce the class of APCPS with
shaped stacks and show that the verification problems are decidable for this class. In
Section~\ref{sec:related} we discuss related work and then conclude.% 
\iffull%
Owing to space constraints we
have relegated proofs to the appendix.%
\else%
Owing to space constraints proofs are omitted and can be found in the long version of the paper \cite{APCPS-long}.%
\fi

%!TEX root = main.tex
\paragraph{Notation.}
We write $\M[U]$ for the set of multisets over the set $U$, and we use $\mset{\cdot}$ to denote multisets explicitly e.g.~we write $\mset{u,u,v,v}$ to mean the multiset containing two occurrences each of $u$ and $v$. Given multisets $M_1$ and $M_2$, we write $M_1 \oplus M_2$ for the multiset union of $M_1$ and $M_2$. We write $U^*$ for the set of finite sequences over $U$, and let $\alpha, \beta, \gamma, \mu, \nu, \ldots$ range over $U^{*}$. We define the \emph{Parikh image} of $\alpha \in U^*$ to be the multiset over $U$, 
$\M_U(\alpha) : u \mapsto |\{i \mid \alpha(i) = u \}|$; 
we drop the subscript and write $\M(\alpha)$ whenever it is clear from the context. We order multisets in the usual way: $M_1 \leq_{\M} M_2$ just if for all $u$, $M_1(u) \leq M_2(u)$. 
Let $M \in \M[U]$ and $U_0 \subseteq U$. We define $M \restriction U_0$ to be the multiset $M$ restricted to $U_0$ i.e.~$(M \restriction U_0) : u \mapsto M(u)$ if $u \in U_0$, and 0 otherwise. We write $U \uplus V$ for the disjoint union of sets $U$ and $V$.

\section{Asynchronous Communicating Pushdown Systems}
\label{sec:acps}
In this section we introduce our model of concurrency, \emph{asynchronous partially commutative pushdown systems}. Processes are modelled by a variant of context-free grammars, which distinguish commutative and non-commutative concurrency actions. Communication between processes is asynchronous, via a fixed number of unbounded and unordered message buffers, which we call \emph{channels}. 

\subsubsection{Preliminaries.}

An \emph{independence relation} $I$ over a set $U$ is a symmetric irreflexive relation over $U$. 
It induces a congruence relation $\eqvI$ on $U^*$ defined as the least equivalence relation $R$ containing $I$ and satisfying: $(\mu, \mu') \in R \; \implies \; \forall \nu_0, \nu_1 \in U^{\ast} : (\nu_0 \, \mu \, \nu_1, \nu_0 \, \mu' \, \nu_1) \in R.$

Let $I$ be an independence relation over $U$. An element $a \in U$ is \emph{non-commutative} (with respect to $I$) just if $\forall b \in U : (a,b) \notin I$ i.e.~$a$ does not commute with any other element. An element $b$ is \emph{commutative} (with respect to $I$) just if for each $c \in U$, if $c$ is not non-commutative then $(c,b) \in I$; intuitively it means that $b$ commutes with all elements of $U$ except those that are non-commutative. We call an independence relation $I$ \emph{unambiguous} if just every element of $U$ is either commutative or non-commutative. 

\begin{definition}
Let $\Sigma$ be an alphabet of terminal symbols 
and $I \subseteq \Sigma \times \Sigma$ an independence relation over $\Sigma$. A \emph{partially commutative context-free grammar} (PCCFG) is a quintuple \changed[lo]{$\calG = (\Sigma, I, \NonT, \Rules, S)$} where $S \in \NonT$ is a distinguished start symbol, and $\Rules$ is a set of rewrite rules of the following types:\footnote{Identifying rules of type (\ref{item:tail-rec}), which is a special case of type (\ref{item:rec}), allows us to distinguish tail-recursive and non-tail recursive calls, which will be handled differently in the sequel, beginning with Definition~\ref{def:alternative_operational_semantics}. 
} let $A \in \NonT$
\begin{inparaenum}[(i)]
\item \label{item:singleton} $A \rightarrow a$ where $a \in \Sigma \cup \makeset{\epsilon}$,
\item \label{item:tail-rec} $A \rightarrow a \, B$ where $a \in \Sigma$, $B \in \NonT$,
\item \label{item:rec} $A \rightarrow B \, C$ where $B,C \in \NonT$.
\end{inparaenum}
\changed[lo]{We refer to each $\rho \in \Rules$ as a $\calG$-\emph{rule}.}
\end{definition}
The (leftmost) derivation relation ${\toSF}$ is a binary relation over $\CommWords$ defined as $X\, \alpha \toSF \beta \, \alpha$ if $X \rightarrow \beta$ is a $\calG$-rule.  Note the derivation relation is defined over the quotient by $\eqvI$, so the words generated are congruence classes induced by $\eqvI$. As usual we denote the $n$-step relation as $\toSF^n$ and reflexive, transitive closure as $\toSF^*$. 

We further define a $k$-index derivation to be a derivation in which every term contains at most $k$ occurrences of non-terminals.  
Recent work \cite{Esparza:ArXiv:2010,Esparza:2011} has shown that for every commutative context-free grammar $\calG$ there exists $k \geq 1$ such that the entire language of $\calG$ can be generated by derivations of index $k$.

PCCFG was introduced by \citeauthor{Czerwinski:09} as a study in process algebra. They investigated \cite{Czerwinski:09} the decidability of bisimulation for a class of processes described by PCCFG where the commutativity of the sequential composition is constrained by an independence relation on non-terminals. We propose to use words generated by PCCFGs to represent the sequence of concurrency actions of processes.

\subsection{Asynchronous Partially Commutative Pushdown Systems}

Our model of computation, asynchronous partially commutative pushdown systems, are in essence PCCFGs equipped with an independence relation over an alphabet $\Sigma$ of terminal symbols, which represent the concurrency actions and program point labels. First some notation. Let $\Chan$ be a finite set of \emph{channel names} ranged over by $c$, $\Msg$ be a finite \emph{message alphabet} ranged over by $m$, and $\PPL$ be a finite set of \emph{program point labels} ranged over by $l, l', l_1$, etc. Further let $\NonT$ be a finite set of non-terminal symbols. We derive an alphabet $\Sigma$ of terminal symbols 
\begin{equation}
\Sigma \; := \; \PPL \union \{\snd{c}{m}, \rec{c}{m} \mid c \in \Chan, m \in \Msg \} \union \{\nu X \mid X \in \NonT\}.
\label{eq:sigma}
\end{equation}
An action of the form $\snd{c}{m}$ denotes the sending of the message $m$ to channel $c$, $\rec{c}{m}$ denotes the {retrieval} 
of message $m$ from channel $c$, and $\nu X$ denotes the spawning of a new process that begins execution from $X$. We will use $a,a',b,$ etc.~to range over $\Sigma$. Our computational model will emit program point labels in its computation, allowing us to pose questions of reachability. We will now define the computational power of our processes in terms of PCCFGs.
 
The words that are generated by a process \emph{qua} PCCFG represent its action sequences. Because channels are unordered, processes will not be able to observe the precise sequencing of concurrency actions such as message send and process creation; however the sequencing of other actions such as message retrieval is observable. Using the language of partially commutative context-free grammar, we can make this sensitivity to sequencing precise by an independence relation on actions. 

\paragraph{An Independence Relation for the Concurrency Actions.}
Let $\Xi \subseteq \Sigma$, we define the independence relation over $\Sigma$ generated by $\Xi$ as
\[
\mathit{IndRel}_\Sigma(\Xi) := \makeset{(a, a'), (a', a) \mid a, a' \in \Xi, a \neq a'}
\]
Now let $\Sigma^\flat := \PPL \union \makeset{\snd{c}{m} \mid c \in \Chan, m \in \Msg} \union \makeset{\nu X \mid X \in \NonT}$ be the subset of $\Sigma$ consisting of the  program point labels and the send and spawn actions. It is straightforward to see that $\mathit{IndRel}_\Sigma(\Sigma^\flat)$ is, by construction, an unambiguous independence relation over $\Sigma$. Thus $\mathit{IndRel}_\Sigma(\Sigma^\flat)$ allows us to commute all concurrency actions \emph{except} receive. Further we allow program point labels to commute. This is harmless, since our goal is to analyse (a form of) control-state reachability, i.e.~the question whether a particular label can be reached, as opposed to questions that require sequential reasoning such as whether label $l_1$ will be reached before $l_2$ is reached.

We can now lift the independence relation to the non-terminals of a PCCFG $\calG$. Let $I$ be the \changed[lo]{least subset of $(\NonT \cup \Sigma)^2$} such that 
\begin{inparaenum}[(i)]
\item $\mathit{IndRel}_\Sigma(\Sigma^\flat) \subseteq I$, and
\item for all $b \in \Sigma \union \NonT$ and $A \in \NonT$, if $\forall a \in \RHS(A) : (a,b) \in I$ then $\{(A,b),(b,A)\} \subseteq I$,
where $\RHS(A) := \{ {a \in \NonT \cup \Sigma} \mid {A \rightarrow \alpha} \in \calG, a \text{ occurs in } \alpha\}$. 
\end{inparaenum}
We note that $I$, which is well-defined, 
is an unambiguous independence relation over $\NonT \union \Sigma$. Thus we can partition both $\Sigma$ and $\NonT$ into $\ComSigma$ and $\ComN$, the \emph{commutative} actions and non-terminals respectively, and $\NComSigma$ and $\NComN$ their \emph{non-commutative} counterparts respectively.

We can now define our model of computation.
\begin{definition} \label{def:APCPS}
Assume $\PPL, \Chan, \Msg$ and $\NonT$ as introduced earlier, and the derived alphabet $\Sigma$ of terminals as defined in (\ref{eq:sigma}). An \emph{asynchronous partially commutative pushdown system} (APCPS) is just a PCCFG $\calG = (\Sigma, I, \NonT, \Rules, S)$.

\smallskip

\noindent\emph{Henceforth we fix $\PPL, \Chan, \Msg$ and $\NonT$, and the derived (\ref{eq:sigma}) alphabet $\Sigma$ of terminals.} 
\end{definition}

\subsection{Standard Semantics}
The operational semantics is given as a transition system.
A configuration of the system is a pair, consisting of a parallel composition of processes and a set of channels.
We represent the state of a single process as an element of $\Control := \CommWords$. The derivation relation of PCCFGs, $\toSF$, defines how processes make \emph{sequential} transitions. 
Processes interact concurrently by message passing via a fixed set of unbounded and unordered channels.

\begin{definition}[Standard Concurrent Semantics] \label{def:APCPS-semantics}
The \emph{configurations} are elements of $\M[\Control] \times (\Chan \rightarrow \M[\Msg])$. For simplicity, we write a configuration (say) $(\mset{\alpha, \beta}, \{ c_1 \mapsto \mset{m_a,m_b,m_b}, c_2 \mapsto \mset{}\})$ as ${\cproc{\alpha} \parallel \cproc{\beta} \ChanPar \cchan{m_a,m_b,m_b}{c_1}, \cchan{}{c_2}}$. 
We abbreviate a set of processes running in parallel as $\Pi$ and a set of channels by $\Gamma$ with names in $\Chan$. The operational semantics for APCPS, a binary relation $\toCF$ over configurations, is then defined by induction over the rule:
\begin{equation}
\begin{mathprooftree}\label{def:conc_faith_inter}
  \AxiomC{
    $\alpha \toSF \alpha'$
  }
  \UnaryInfC{
        $\cproc{\alpha}   \parallel \Pi \ChanPar \Gamma  \toCF
                \cproc{\alpha'}  \parallel \Pi \ChanPar \Gamma$
    }
\end{mathprooftree}
\end{equation}
and the following axioms: let $m \in \Msg, c \in \Chan, l \in \PPL$ and $X \in \NonT$
\begin{align} 
% Interleaving
    \label{def:conc_faith_rec}
    \cproc{(\rec{c}{m})\,\alpha}   \parallel \Pi \ChanPar \mcchan{(\mset{m} \oplus q)}{c}, \Gamma &\toCF
    \cproc{\alpha}                 \parallel \Pi \ChanPar \mcchan{q}{c}, \Gamma\\
% Dispatch
    \label{def:conc_faith_send}
    \cproc{(\snd{c}{m})\,\alpha}  \parallel \Pi \ChanPar \mcchan{q}{c}, \Gamma &\toCF
    \cproc{\alpha}                \parallel \Pi \ChanPar \mcchan{(\mset{m} \oplus q)}{c}, \Gamma\\
  \label{def:conc_faith_label}
    \cproc{l\,\alpha}  \parallel \Pi \ChanPar \Gamma &\toCF
    \cproc{\alpha}     \parallel \Pi \ChanPar \Gamma\\
% Spawn
    \label{def:conc_faith_spawn}
    \cproc{(\nu X)\,\alpha}  \parallel \Pi \ChanPar \Gamma &\toCF
    \cproc{\alpha}         \parallel \cproc{X} \parallel \Pi \ChanPar \Gamma.
\end{align}
\end{definition}
\changed[lo]{The \emph{start configuration} is $S \ChanPar \emptyset$.} We define a partial order on configurations: $\Pi \ChanPar \Gamma \, \leq \, \Pi' \ChanPar \Gamma'$ just if \changed[lo]{$\Pi \leq_{\M} \Pi'$} and for every $c \in \Chan$, \changed[lo]{$\Gamma(c) \leq_{\M} \Gamma'(c)$}.

\subsection{Program-Point Coverability}
In the sequential setting of (ordinary) pushdown systems, the control-state reachability problem is of central interest. In our notation, it asks, given a control-state $A$, if it is possible to reach a process-configuration $\cproc{A\, \alpha}$ where $A$ is the control-state and $\alpha$ is some call stack. It should be clear that an equivalent problem is to ask whether $\cproc{l \, \alpha}$ is reachable, where $l$ is a program-point label. We prefer a formulation that uses program-point labels because it simplifies our argument (and is equi-expressive).

In the concurrent setting, we wish to know whether, given an APCPS and program-point labels $l_1,\ldots, l_n$, there exist call stacks $\alpha_1, \ldots, \alpha_n$ and channel contents $\Gamma$ such that the configuration $\cproc{l_1 \, \alpha_1} \parallel \cdots \parallel \cproc{l_n \, \alpha_n} \ChanPar \Gamma$ is $\toCF$-reachable, possibly in parallel with some other processes. Note that this question allows us to express not just control-state reachability queries but also mutual exclusion properties. We state the problem of program-point coverability more formally as follows.

\begin{verificationproblem}[Program-Point Coverability] \label{prob:vp-cover} \em
Given an APCPS $\calG$ and program point labels $l_1, \ldots, l_n$, a tuple $(\calG; l_1, \ldots, l_n)$ is a yes-instance of the \emph{program-point coverability} problem just if there exist a configuration $\Pi \ChanPar \Gamma$ and $\alpha_1, \ldots, \alpha_n \in \CommWords$ such that $\Pi \ChanPar \Gamma$ is $\toCF$-reachable and $\cproc{l_1 \alpha_1} || \cdots ||\,\cproc{l_n \alpha_n} \ChanPar \emptyset \, \leq \, \Pi \ChanPar \Gamma$.
\end{verificationproblem}

The program-point coverability problem allows us to characterise \lq\lq bad-config\-u\-ra\-tions\rq\rq\ $c_{\text{bad}}$ in terms of program-point labels. We regard a configuration $c$ that covers $c_{\text{bad}}$, in the sense that ($c_{\text{bad}} \leq c$), also as \lq\lq bad\rq\rq. Using program-point coverability, we can express whether any such configuration is reachable 

\begin{example}
Consider the program in Figures \ref{fig:example:resource_and_distributor} and  \ref{fig:example:worker} and call it $P$. The problem of whether each worker has exclusive access to the shared resource in its critical section is expressible as a program-point coverability problem. A bad configuration is one in which two worker processes are executing the line marked by \erl{?label("critical")}. We can thus see that $(P;$\erl{?label("critical")}$,$\erl{?label("critical")}$)$ is an instance of the program-point coverability problem; a no answer implies mutual exclusion, a yes answer tells us that two worker processes can be simultaneously within their critical section.
\end{example}

The program-point coverability problem is undecidable for unconstrained APCPS. In fact APCPS is Turing powerful: it is straightforward to simulate a system with two synchronising pushdown systems.

%!TEX root = main.tex
\section{An Alternative Semantics for APCPS}
\label{sec:altsem}

In this section we present an alternative semantics for APCPS which captures enough information to solve the program-point coverability problem. The key idea is to summarise the effects of the commutative non-terminals. % on the call stack. 
In the alternative semantics, rather than keeping track of the contents of the call stack, we precompute the actions of those procedure calls that produce only \emph{commutative} side-effects, i.e.~sends, spawns and program point labels, and store them in caches on the call stack. The non-com\-mu\-ta\-tive procedure calls, which are left on the call stack, then act as separators for the caches of commutative side-effects. As soon as the top non-commutative non-terminal on the stack is popped, which may be triggered by a concurrency action, the cache just below it is unlocked. The cached actions are made effective instantaneously. This is enough to ensure a precise correspondence between the program-point coverability problem for APCPS and a corresponding coverability problem for our alternative semantics.

\subsubsection{An Alternative Semantics.}
First we introduce a representation of the states of a process.
Let $k \in \N \union \{\infty\}$.
\begin{align*}
                     \TermCache          :=\,& \M[\ComSigma] 
 \hspace{10mm}       \MixedCache         :=\, \M[\ComSigma \union \ComN] \\
                     \NonTermCache       :=\,& \M[\ComN] 
 \hspace{10mm}       \Cache              :=\, \TermCache \disjointunion \MixedCache\\
                     \CallStack^{\leq k} :=\,& (\NComN \cdot \Cache)^{\leq k} \\
                     \DelayedControl     :=\,& \TermCache
                                                  \disjointunion \MixedCache
                                                  \disjointunion \NonTermCache\\
                     \InControl          :=\,& (\NonT\cdot \Cache) \disjointunion 
                                                  (\Sigma \cdot \NonT \cdot \Cache) \disjointunion 
                                                  (\Sigma \cdot \Cache)\\
                    \ControlState        :=\,& \InControl  \disjointunion \DelayedControl\\
\gamma,\delta \in    \Control^{\leq k}   :=\,& \ControlState \cdot \CallStack^{\leq k}\\
                    \Queue               :=\,& \M[\Msg]
 \hspace{10mm}      \Queues              :=\, \Chan \to \Queue\\
                    \Config^{\leq k}     :=\,& \M\left[\Control^{\leq k}\right] \times \Queues
\end{align*}

Note that we assume the equality $\epsilon = \emptyset$ to simplify notation. 
We write $\Control^\M := \Control^{\leq \infty}$ and $\CallStack^\M := \CallStack^{\leq \infty}$. \changed[jk]{\footnote{Defining $\Cache$ as a distjoint union enables a definition by cases according to the type of cache, thus rendering $\toCM$ monotone with respect to an ordering.}} 

\begin{definition}[Alternative Sequential Semantics]\label{def:alternative_operational_semantics}
Let $\calG$ be a PCCFG. We define a transition relation ${\toSM}$ on $\Control^{\M}$ by induction over the following rules:
\begin{align} 
\shortintertext{If $A \rightarrow B\,C$ is a $\calG$-rule, $C$ commutative and $C \toSF^* w \in (\ComN \union \ComSigma)^*$ then}
A\, M\, \gamma &\toSM B\, (\M(w)\, \oplus\, M)\,  \gamma  \label{def:seq_mult_sem_nonblock} 
\shortintertext{If  $A \rightarrow B\,C$ is a $\calG$-rule and $C$ non-commutative then}
A\, M\, \gamma &\toSM B\, C\, M\,  \gamma \label{def:seq_mult_sem_block}
\shortintertext{If  $A \rightarrow a\,B$ is a $\calG$-rule and $a \in \Sigma$
 and $B \in \NonT$ then}
A\, M\, \gamma &\toSM a\, B\, M\,  \gamma \label{def:seq_mult_sem_tailrec}
\shortintertext{If  $A \rightarrow a$ is a $\calG$-rule where $a \in \Sigma \union \{\epsilon\}$ then}
A\, M\, \gamma &\toSM a\, M\,  \gamma \label{def:seq_mult_sem_action}
\end{align}
where $\gamma \in \CallStack^\M$, $M \in \Cache$, and $A,B$ and $C$ range over non-terminals.
\end{definition}

From the alternative sequential semantics, we derive a corresponding alternative concurrent semantics,
using the following notation: for $M \in \M[\ComSigma]$ and $w \in (\ComSigma)^*$
\[
\begin{array}{ll}
\multicolumn{2}{l}{\Gamma \oplus \Gamma' := \{c \mapsto \Gamma(c) \oplus \Gamma'(c) \mid c \in \Chan\}}\\
\Gamma(M) := \left\{c \mapsto \sum_{\snd{c}{m} \in M}M(\snd{c}{m}) \mid c \in \Chan \right\}
\quad &
\Gamma(w) := \Gamma(\M(w))
\\
\Pi(M) := \{\cproc{X} \mapsto M(\nu X) \mid X \in \NonT\}
&
\Pi(w) := \Pi(\M(w))
\\
\end{array}\]

\begin{definition}[Alternative Concurrent Semantics]
We define a binary relation $\toCM$ over $\M[\Control^\M] \times (\Chan \rightarrow \M[\Msg])$ by induction over the following rules:
\begin{align}  
% Interleaving
    \shortintertext{If $\gamma \in \InControl\cdot\CallStack^\M$, $\gamma \toSM \gamma'$ then}
      \cproc{\gamma}   \parallel \Pi \ChanPar \Gamma &\toCM 
      \cproc{\gamma'}  \parallel \Pi \ChanPar \Gamma \label{def:con_mult_sem_interleave} 
% Reception
    \shortintertext{If $(\rec{c}{m})\, \gamma \in \InControl\cdot\CallStack^\M$, $m \in \Msg$ then}
    \cproc{(\rec{c}{m})\, \gamma}   \parallel \Pi \ChanPar  \mcchan{(\mset{m} \oplus q)}{c}, \Gamma&\toCM 
    \cproc{\gamma}                  \parallel \Pi \ChanPar  \mcchan{q}{c}, \Gamma \label{def:con_mult_sem_rec}
% Spawn
    \shortintertext{If $X \in \NonT$, $(\nu X)\, \gamma \in \InControl\cdot \CallStack^\M$ then}
    \cproc{(\nu X)\, \gamma} \parallel \Pi               \ChanPar \Gamma &\toCM 
    \cproc{\gamma}           \parallel \cproc{X} \parallel \Pi  \ChanPar \Gamma \label{def:con_mult_sem_spawn}   
% Send
    \shortintertext{If $(\snd{c}{m})\, \gamma \in \InControl\cdot\CallStack^\M$, $m \in \Msg$ then}
    \cproc{(\snd{c}{m})\, \gamma}   \parallel \Pi \ChanPar  \mcchan{q}{c}, \Gamma&\toCM 
    \cproc{\gamma}                  \parallel \Pi \ChanPar  \mcchan{(\mset{m} \oplus q)}{c}, \Gamma \label{def:con_mult_sem_send} 
% Label
    \shortintertext{If $l\, \gamma \in \InControl\cdot\CallStack^\M$, $l \in \PPL$ then}
    \cproc{l\, \gamma} \parallel \Pi \ChanPar  \Gamma&\toCM 
    \cproc{\gamma}     \parallel \Pi \ChanPar  \Gamma \label{def:con_mult_sem_label} 
% Dispatch Terminated
    \shortintertext{If $M\, X \,\gamma \in \DelayedControl \cdot \CallStack^{\M}$,
                  $M \in \TermCache$, 
                  $\Gamma' = \Gamma \oplus \Gamma(M)$, 
                  $\Pi' = \Pi \oplus \Pi(M)$
                  then
              }
    \cproc{M\, X \,\gamma} \parallel \Pi  \ChanPar \Gamma &\toCM 
    \cproc{X\, \gamma}  \parallel \Pi' \ChanPar \Gamma' \label{def:con_mult_sem_disp}\\
% Dispatch Non-Terminated
    \shortintertext{If $M\, \gamma \in \DelayedControl \cdot \CallStack^{\M}$,
                  $M \in \MixedCache$, 
                  $\Gamma' = \Gamma \oplus \Gamma(M)$, 
                  $\Pi' = \Pi \oplus \Pi(M)$ and 
                  \changed[lo]{$M' = M \restriction {(\ComN \union \PPL)}$}
                  then
              }
    \cproc{M\,  \gamma}  \parallel \Pi  \ChanPar \Gamma &\toCM 
    \cproc{M'\, \gamma}  \parallel \Pi' \ChanPar \Gamma' \label{def:con_mult_sem_non-term}
\end{align}   
\end{definition}
 
The alternative semantics precomputes the actions of commutative non-terminals on the call stacks. This is achieved by rule (\ref{def:seq_mult_sem_nonblock}) in the alternative sequential semantics. The rules (\ref{def:con_mult_sem_disp}) and (\ref{def:con_mult_sem_non-term}) are the concurrent counterparts; they ensure that the precomputed actions are rendered effective at the appropriate moment. Rule (\ref{def:con_mult_sem_disp}) is applicable when the precomputed cache $M$ contains exclusively commutative actions; such a cache denotes a sequence of commutative non-terminals whose computation terminates and generates concurrency actions. Rule (\ref{def:con_mult_sem_non-term}), on the other hand, handles the case where the cache $M$ contains non-terminals. An interpretation of such a cache is a partial computation of a sequence of commutative non-terminals. In this case rule (\ref{def:con_mult_sem_non-term}) dispatches all commutative actions and then blocks. It is necessary to consider this case since not all non-terminals have terminating computations. Thus rule (\ref{def:seq_mult_sem_nonblock}) may non-deterministically decide to abandon the pre-compution of actions. 

We give a variant of the program-point coverability problem tailored to the alternative semantics and show its equivalence with the program-point coverabilily problem.% of Section \ref{sec:acps}.

\begin{verificationproblem}[Alternative Program-Point Coverability] \em
Given an APCPS $\calG$ and a set of program point labels $l_1, \ldots, l_n$, a tuple $(P; l_1, ..., l_n)$ is a yes-instance of the \emph{alternative program-point coverability} problem just if there exist a $\toCM$-reachable configuration $\Pi \ChanPar \Gamma$ such that for every $i \in \makeset{1,\ldots, n}$ there exists $\cproc{\lambda_i \, \gamma_i} \in \Pi$ such that either $\lambda_i = l_i$, or $\lambda_i = M_i$ and $l_i \in M_i$?
\end{verificationproblem}

\iffull
In the appendix we show that the standard semantics \emph{weakly simulates} the alternative semantics for APCPS (Proposition \ref{apx:prop:concurrent_simulation}). Thus for every configuration reachable in the alternative semantics there is a corresponding configuration reachable in the standard semantics. Owing to the nature of precomputations and caches, it is more difficult to relate runs of the standard semantics to those of the alternative semantics. However, in the appendix, we show that for every run in the standard semantics reaching a configuration, there exists a run in the alternative semantics reaching a corresponding configuration (Proposition \ref{apx:prop:conc_reduction_simulation}).
\else
In the long version of this paper \cite{APCPS-long} we show that the standard semantics \emph{weakly simulates} the alternative semantics for APCPS.
Thus for every configuration reachable in the alternative semantics there is a corresponding configuration reachable in the standard semantics. Owing to the nature of precomputations and caches, it is more difficult to relate runs of the standard semantics to those of the alternative semantics. However,
for every run in the standard semantics reaching a configuration, there exists a run in the alternative semantics reaching a corresponding configuration.
\fi

\begin{theorem}[Reduction of Program-Point Coverability]\label{thm:reduction_coverability}
A tuple  $(P;l_1, \ldots, l_n)$ is a yes-instance of the program-point coverabililty problem if, and only if, $(P;l_1, \ldots, l_n)$ is a yes-instance of the alternative program-point coverability problem.
\end{theorem}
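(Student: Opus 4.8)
The plan is to obtain the theorem by combining the two simulation results that relate the standard and the alternative concurrent semantics (Propositions~\ref{apx:prop:concurrent_simulation} and~\ref{apx:prop:conc_reduction_simulation}) with a \emph{witness-transfer} argument that matches the two coverability conditions across the corresponding-configuration relation. First I would make precise the relation $\simulatedC$ relating an alternative configuration to a standard one by \emph{flattening}: each cache $M$ sitting on a call stack, and each delayed control, is replaced \emph{in situ} by some word $w$ over $\ComSigma \union \ComN$ with $\M(w) = M$, whereas the sends and spawns recorded in caches that have \emph{already been dispatched} along the run are reflected as messages already present in the channels, respectively as processes already running, in the standard configuration. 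With this relation fixed, Proposition~\ref{apx:prop:concurrent_simulation} states that the standard semantics weakly simulates the alternative one along $\simulatedC$, so every $\toCM$-reachable configuration corresponds to a $\toCF$-reachable one; and Proposition~\ref{apx:prop:conc_reduction_simulation} states the converse, that every $\toCF$-reachable configuration is the flattening of some $\toCM$-reachable one. Together these say: a configuration is $\toCF$-reachable if and only if it corresponds to a $\toCM$-reachable configuration.

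Granting these, the remaining step is the witness-transfer lemma: if $c_a \simulatedC c_s$, then $c_s$ covers $\cproc{l_1\,\alpha_1} \parallel \cdots \parallel \cproc{l_n\,\alpha_n} \ChanPar \emptyset$ for some call stacks $\alpha_1,\dots,\alpha_n$ if and only if $c_a$ satisfies the alternative program-point coverability condition for $l_1,\dots,l_n$. Both sides are upward closed in the configuration order, so it suffices to match the sub-multiset of $n$ processes one by one, using that $\simulatedC$ comes with a matching of the processes of $c_a$ and $c_s$. For the direction from the alternative side, a process of $c_a$ with control state of the form $l_i\,\gamma_i$, or $M_i\,\gamma_i$ with $l_i \in M_i$, flattens to a process of $c_s$ whose control word has — down to the topmost non-commutative non-terminal — only commutative symbols and commutative non-terminals in front; since $I$ is unambiguous, $l_i$ commutes past all of these, so the $\eqvI$-class of that word has a representative $l_i\,\alpha_i$, and $c_s$ covers the pattern. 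For the reverse direction one traces an exposed occurrence of $l_i$ in a representative $l_i\,\alpha_i$ of a control word of $c_s$ back, through the flattening, to an occurrence of $l_i$ in the corresponding control state of $c_a$ that is not separated from the head by any non-commutative non-terminal; inspecting the shape $\ControlState = \InControl \disjointunion \DelayedControl$ then shows the control state is — possibly after advancing one further $\toCM$-step to push a commutative non-terminal that would precede $l_i$, which keeps the $\simulatedC$-relationship to $c_s$ intact up to $\eqvI$ — already of one of the admissible forms. Finally, the forward implication of the theorem follows from Proposition~\ref{apx:prop:conc_reduction_simulation} together with the witness-transfer lemma, and the backward implication from Proposition~\ref{apx:prop:concurrent_simulation} together with the same lemma.

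The hard part will be the witness-transfer lemma, and within it the careful treatment of program-point labels. Unlike in the standard semantics, where a label surfaces transiently at the head of a control word, in the alternative semantics a label may be \emph{absorbed into a cache} by rule~(\ref{def:seq_mult_sem_nonblock}) during precomputation and only re-exposed later, via rule~(\ref{def:con_mult_sem_disp}) or~(\ref{def:con_mult_sem_non-term}), when the relevant non-commutative non-terminal is popped; this is exactly why the alternative coverability condition needs the disjunct ``$\lambda_i = M_i$ with $l_i \in M_i$''. The correspondence must therefore be designed so that a label cached in the alternative semantics matches a label exposed (up to $\eqvI$) in the standard semantics — which is why the flattening keeps caches on the stack rather than dispatching them — and one must check that no problematic configuration, such as a commutative, non-terminating non-terminal shielding a cache that holds $l_i$, breaks the match; here I would appeal to the structure of derivations in the alternative semantics and, if necessary, to the freedom to take further $\toCM$-steps before reading off coverability. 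The genuinely combinatorial work — establishing the weak simulation and the reduction simulation themselves — is the content of Propositions~\ref{apx:prop:concurrent_simulation} and~\ref{apx:prop:conc_reduction_simulation} and is taken as given here.
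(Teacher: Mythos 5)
Your overall architecture (combine Propositions~\ref{apx:prop:conc_reduction_simulation} and~\ref{apx:prop:concurrent_simulation} with a transfer of the coverability witnesses) is the same as the paper's, but there is a genuine gap in the transfer step, and it stems from your choice of flattening. You replace a cache $M$ \emph{in situ} by a word $w$ over $\ComSigma \union \ComN$ with $\M(w)=M$, so that a cached label $l_i$ occurs literally in the standard word and can be commuted to the front. The relation for which the paper proves weak simulation is different: there $\sem{M}$ consists of words of commutative \emph{non-terminals only}, each of which merely \emph{derives} part of $M$ (see the definition of $\ceil{M}$). This choice is forced: to match the precomputation rule~(\ref{def:seq_mult_sem_nonblock}), $A\,M\,\gamma \toSM B\,(\M(w)\oplus M)\,\gamma$, the standard side can only do $A\,\beta\,\alpha_0 \toSF B\,C\,\beta\,\alpha_0$ and must leave $C$ underived below $B$ --- leftmost derivation cannot expand $C$ there, and the actions in $w$ must not yet be emitted. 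Under your Parikh-image flattening the simulating move would have to produce a word of image $\M(w)\oplus M$ below $B$, which is impossible, so Proposition~\ref{apx:prop:concurrent_simulation} is not available for your relation; conversely, under the paper's relation your witness-transfer ``iff'' fails in exactly the direction needed for the $\Leftarrow$ implication: if $l_i$ sits in a cache $M_i$, the related standard process has head $\beta_i \in \ComN^*$, which does \emph{not} contain the letter $l_i$ (and $\eqvI$ only permutes letters, it does not derive non-terminals), so the configuration reached via the simulation need not cover $\cproc{l_i\,\alpha_i}$.

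The repair is not a static matching but extra reachability on the \emph{standard} side, which is what the paper's proof does: after obtaining the $\toCF$-reachable configuration from Proposition~\ref{apx:prop:concurrent_simulation}, continue the run --- since $\beta_i \in \sem{M_i}$ and $l_i \in M_i$, one can choose rewrites so that $\beta_i \toSF^* w'^0_i\, l_i\, \beta'_i$ with $w'^0_i \in \ComSigma^*$, and then dispatch the sends, spawns and labels of $w'^0_i$ by $\toCF$-steps; this only adds to the channels and processes, so the resulting configuration covers the required pattern. Your closing remark about taking further steps points at $\toCM$-steps on the alternative side, which concerns the opposite direction; the direction that genuinely needs further steps is the standard one. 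With that correction your argument coincides with the paper's; your treatment of the forward implication (via Proposition~\ref{apx:prop:conc_reduction_simulation}, observing that the image of $\seqM$ places each exposed $l_i$ either at the head or into the first cache) is fine as sketched.
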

%!TEX root = main.tex
\section{APCPS with Shaped Stacks}
\label{sec:shapedstacks}
In this section we present a natural restriction on the shape of the call stacks of APCPS processes. This shape restriction says that, at all times, at most an \emph{a priori} fixed number of non-commutative non-terminals may reside in the call stack. Because the restriction does not apply to commutative non-terminals, call stacks can grow to arbitrary heights. We show that the alternative semantics for such shape-constrained APCPS gives rise to a well-structured transition system, thus allowing us to show the decidability of the alternative program-point coverability problem. 
\begin{definition}
Define $\Reach_{\toCM} := \{\Pi \ChanPar \Gamma \mid [\cproc{S}] \ChanPar \emptyset \toCM^* \Pi \ChanPar \Gamma\}.$ Let $k \in \N$, we say an APCPS $\calG$ \emph{has $k$-shaped stacks} just if $\Reach_{\toCM} \subseteq \Config^{\leq k}$. An APCPS $\calG$ \emph{has shaped stacks} just if $\calG$ has $k$-shaped stacks for some $k \in \N$.
\end{definition}
It follows from the definition that,  in the alternative semantics, processes of an APCPS with $k$-shaped stacks have the form:
$\gamma \, X_1 \, M_1 \, X_2 \, M_2 \cdots X_j \, M_j$ where $\gamma \in \ControlState$, $X_i \in \NComN$ and $j \leq k$. Relating this
to the standard semantics, processes of an APCPS with $k$-shaped stacks are always of the form
$\alpha \, X_1 \, \beta_1 \, X_2 \, \beta_2  \cdots X_j \, \beta_j$ where $\alpha \in (\NonT \union (\Sigma\cdot\NonT) \union (\Sigma \union \{\epsilon\})) \cdot \ComN^*$ and $\beta_i \in \ComN^*$. It is this shape that lends itself to the name APCPS. Even though the shaped stacks constraint is semantic, we can give a \emph{syntactic} sufficient condition: (the simple proof is omitted.)

\begin{proposition}\label{prop:sufficient_cond}
Let $\calG$ be an APCPS. If there is a well-founded partial order $\geqShape$ such that 
for every $A \in \NonT$ and $B \in \RHS(A) \intersect \NonT$:
\begin{inparaenum}[(i)] 
\item $A \geqShape B$, and 
\item $\exists C \in \NComN : \hbox{$A \rightarrow B\, C$ is a $\calG$-rule} \; \implies \; A \greaterShape B$,
then $\calG$ has shaped stacks. 
\end{inparaenum}
\end{proposition}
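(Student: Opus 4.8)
The plan is to prove the sharper statement that $\calG$ has $k$-shaped stacks for $k := |\NonT|$ (in fact $k$ may be taken to be the length of a longest strictly $\greaterShape$-decreasing sequence of non-terminals), by exhibiting an invariant of $\toCM$ that bounds the number of stack frames of every process. First observe that the only rule of $\toSM$ or $\toCM$ that prepends a non-commutative non-terminal to a call stack is rule~(\ref{def:seq_mult_sem_block}): when it fires, the control state is of the form $A\, M$ and the $\calG$-rule applied is $A \rightarrow B\, C$ with $C \in \NComN$, so $C$ is pushed and $B$ becomes the new control non-terminal. I call $A$ the \emph{pusher} and $B$ the \emph{post-push control} of that frame. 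Hypothesis~(i) gives $A \geqShape C$, and hypothesis~(ii) --- whose antecedent is satisfied because $A \rightarrow B\, C$ is a $\calG$-rule with $C$ non-commutative --- gives $A \greaterShape B$.

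Next I would prove, by induction on the length of a $\toCM$-run from $[\cproc{S}] \ChanPar \emptyset$, that in every reachable configuration each process, written $\xi\, X_1\, M_1 \cdots X_j\, M_j$ with control state $\xi$ and stack frames $(X_i, M_i)$ (so $X_i \in \NComN$), carries pushers $A_1, \dots, A_j$ and post-push controls $B_1, \dots, B_j$ with: (a) $A_i \rightarrow B_i\, X_i$ a $\calG$-rule; (b) $B_{i+1} \geqShape A_i$ for $1 \le i < j$; and (c) whenever $\xi$ exposes a control non-terminal $D$ (i.e.\ $\xi$ equals $D\, M$ or $a\, D\, M$), $B_1 \geqShape D$. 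From (a) and hypothesis~(ii) one gets $A_i \greaterShape B_i$, which with (b) yields $A_{i+1} \greaterShape B_{i+1} \geqShape A_i$; hence $A_j \greaterShape \cdots \greaterShape A_1$ is a strict $\greaterShape$-chain, the $A_i$ are pairwise distinct, and $j \le |\NonT|$. As every process comes into being --- either as the initial $\cproc{S}$ or via a spawn --- with an empty call stack, establishing this invariant proves the proposition.

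The induction is a case analysis over the rules, and only those touching the call stack matter. When rule~(\ref{def:seq_mult_sem_block}) fires, the new frame has pusher $A$ (the old control non-terminal) and post-push control $B$ (the new one), so its instance of~(a) is just the rule applied and its instance of~(c) is the triviality $B \geqShape B$; the new instance of~(b) linking it to the frame below reads $B_1^{\mathrm{old}} \geqShape A$, which is precisely the old instance of~(c). When rule~(\ref{def:con_mult_sem_disp}) pops the top frame $(X_1, M_1)$ and installs $X_1$ as the control non-terminal, the new instance of~(c) is $B_2^{\mathrm{old}} \geqShape X_1$, which follows from the old~(b) ($B_2^{\mathrm{old}} \geqShape A_1$) and hypothesis~(i) ($A_1 \geqShape X_1$ because $X_1 \in \RHS(A_1)$ by~(a)). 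Every other rule either leaves the call stack fixed while moving the control non-terminal down in $\geqShape$ (hypothesis~(i) applied to rules~(\ref{def:seq_mult_sem_nonblock}) and~(\ref{def:seq_mult_sem_tailrec})), or merely strips a leading terminal from, or rewrites a cache in, the control state, so~(c) is preserved or becomes vacuous while~(a) and~(b) are untouched; spawned processes start with $j = 0$, for which the invariant is trivial.

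The hard part is getting the invariant right: the naive guess that the stack non-terminals $X_1, \dots, X_j$ themselves form a $\greaterShape$-chain is false, since $X_i$ is related to its pusher $A_i$ only by $\geqShape$, so consecutive stack symbols need not be $\geqShape$-comparable at all. The remedy is to carry the pushers and post-push controls explicitly and to use clause~(c), which says that once a frame has been placed below the control, the control non-terminal can only descend in $\geqShape$; this is what places the strict step $A_{i+1} \greaterShape B_{i+1}$ above $B_{i+1} \geqShape A_i$. The one point in the induction that needs genuine care is keeping clause~(c) alive through the cache-dispatch rules~(\ref{def:con_mult_sem_disp}) and~(\ref{def:con_mult_sem_non-term}), where the control momentarily degenerates to a cache with no control non-terminal before re-exposing an older stack frame; this is exactly where hypothesis~(i) and the structural fact that caches never contain non-commutative non-terminals are needed.
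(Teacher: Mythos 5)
The paper omits its own proof of this proposition (it is stated with ``the simple proof is omitted''), so there is nothing to compare against; judged on its own, your argument is correct and supplies exactly the kind of inductive invariant the claim needs. Your bookkeeping checks out: rule~(\ref{def:seq_mult_sem_block}) is indeed the only rule that creates a stack frame, hypothesis~(ii) gives the strict step $A_i \greaterShape B_i$ for each frame's pusher, clause~(c) is handed down through rules~(\ref{def:seq_mult_sem_nonblock}) and~(\ref{def:seq_mult_sem_tailrec}) via hypothesis~(i), it turns into the new instance of~(b) at a push and is re-established after a pop by rule~(\ref{def:con_mult_sem_disp}) using $B_2 \geqShape A_1 \geqShape X_1$, so the pushers form a strict $\greaterShape$-chain and the frame count is bounded by $|\NonT|$, giving $k$-shaped stacks with $k = |\NonT|$.
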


\begin{example}
Proposition \ref{prop:sufficient_cond} tells us that the program in Figures \ref{fig:example:resource_and_distributor} and \ref{fig:example:worker} can be modelled by an APCPS with shaped stacks. Non-tail recursive calls are potentially problematic. In our example the recursive call to \erl{setup_network()} in the definition of \erl{setup_network} is non-tail recursive, but only places a send action on the call stack, thus causing no harm.
\ifshort%
The only other non-tail recursive calls occur in \erl{do_task}: the call to \erl{decompose_task} poses no threat since \erl{decompose_task} does not invoke \erl{do_task} again.
The two recursive calls to \erl{do_task} either place procedure calls with send or no concurrent actions on the stack.
\else%
The only other non-tail recursive calls occur in \erl{do_task}: the call to \erl{decompose_task} poses no threat since \erl{decompose_task} does not invoke \erl{do_task} again.
The two recursive calls to \erl{do_task} which either place procedure calls without concurrent actions or send actions on the stack.
\fi%
\end{example}

\subsection{APCPS with Shaped Stacks and Well-Structured Transition Systems}
We will now show the decidability of the alternative program-point coverability problem for APCPS with shaped stacks. First we recall the definition of well-structured transition systems \cite{Finkel:01}.
Let $\leq$ be an ordering over a set $U$; we say $\leq$ is a \emph{well-quasi-order} (wqo) just if for all infinite sequences $u_1,u_2,\ldots$ there exists $i,j$ such that $u_i \leq u_j$. A \emph{well-structured transition system} (WSTS) is a quadruple $(S,\rightarrow,\leq,s_0)$ such that $s_0 \in S$, $\leq$ is a wqo over $S$ and ${\rightarrow} \subseteq S \times S$ is monotone with respect to $\leq$, i.e. if $s \rightarrow s'$ and $s \leq t$ then there exists $t'$ such that $t \rightarrow t'$.

WSTS are an expressive class of infinite state systems that enjoy good model checking properties. A decision problem for WSTS of particular interest to verification is the \emph{coverability problem} i.e.~given a state $s$ is it the case that $s_0 \rightarrow^* s'$ and $s \leq s'$. For $U \subseteq S$ define the sets $\Pred(U) := \{ s \mid s \rightarrow u, u \in U\}$ and $\uparrow U := \{ u' \mid u \leq u', u \in U\}$.
For WSTS the coverability problem is decidable \cite{Finkel:01} provided that for any given $s \in S$ the set $\uparrow\Pred(\uparrow \{s\})$ is effectively computable. Wqos can be composed in various ways which makes decision results for WSTS applicable to a wide variety of infinite state models. In the following we recall a few results on the composition of wqos.%
{\small%
\begin{enumerate}[(WQO-a)]
  \item If $(A_i,\leq_i)$ are wqo sets for $i = 1,...,k$ then $(A_1 \times \cdots \times A_k, \leq_1 \times \cdots \times \leq_k)$ is a wqo set. (\emph{Dickson's Lemma})
  \item If $A$ is a finite set then $(A,=)$ is a wqo set.
  \item If $(A,\leq)$ is a wqo then $(\M[A],\leq_{\M[A]})$ is a wqo set where $M_1 \leq_{\M[A]} M_2$ just if for all $a \in A$ there exists an $a' \geq a$ such that $M_1(a) \leq M_2(a')$ \cite{Wehrman:2006}.  \label{prop:wqo:varhigman}
  \item If $(A, \leq_A)$ and $(B, \leq_B)$ are wqo sets, then $(A \cdot B, \leq_A \cdot \leq_B)$ is a wqo set, where $\gamma \cdot \gamma' \leq_A \cdot \leq_B \delta \cdot \delta$ just if $\gamma \leq_A \delta$ and $\gamma' \leq_B \delta'$. 
  \item If $(A, \leq_A)$ and $(B, \leq_B)$ are wqo set, then $(A \disjointunion B, \leq_A \disjointunion \leq_B)$ is a wqo set, where $a \leq_A \disjointunion \leq_B b$ just if $a,b \in A$ and $a \leq_A b$ or $a,b \in B$ and $a\leq_B b$. 
\end{enumerate}
}
\newcommand{\eqX}[1]{\mathop{=_{#1}}}

\subsection{A Well-Quasi-Order for the Alternative Semantics}
Fix a $k$. Our goal is to construct a well-quasi-order for $\Config^{\leq k}$ as a first step to showing the alternative semantics gives rise to a WSTS for APCPS with shaped stacks.

We order the multi-sets \TermCache, \NonTermCache, \MixedCache\ and $\Queue$ with the multi-set inclusion $\leq_{\M}$ which is a well-quasi-order.
Since \Chan\ is a finite set and $\Queues = \Chan \rightarrow \M[\Msg] \cong \M[\Msg]^{|\Chan|}$ we obtain a well-quasi-order for $\Chan \rightarrow \M[\Msg]$ using a generalisation of Dickson's lemma.
We then compose the wqo of \TermCache\ and \MixedCache\ to obtain a wqo $\leq_{\Cache} := \leq_{\TermCache} \disjointunion \leq_{\MixedCache}$ for \Cache.
For each $j \in \{1\ldots k\}$ we define 
$$X_1 \, M_1\, X_2\, M_2\cdots X_j\, M_j \leq X_1\, M'_1 \,X_2\, M'_2\cdots X_j \, M'_j \quad \hbox{iff} \quad  \forall i : M_i \leq_{\Cache} M'_i$$
which gives a well-quasi-order for $\CallStack^{\leq k}$. 
We obtain a wqo for \DelayedControl\ by composing the wqos of \TermCache, \NonTermCache\ and \MixedCache: 
$$\leq_{\DelayedControl}\, \is\, \leq_{\TermCache} \disjointunion 
 \leq_{\NonTermCache} \disjointunion \leq_{\MixedCache}.$$ 
Since $\Sigma$ and $\NonT$ are finite sets, $(\Sigma,\eqX{\Sigma})$ and $(\NonT,\eqX{\NonT})$ are wqo sets, and so, we can compose a wqo for \InControl:
$$\leq_{\InControl}\, \is \left(\eqX{\Sigma} \cdot \leq_{\Cache}\right) \mathbin{\disjointunion} 
                        \left(\eqX{\Sigma} \cdot \eqX{\NonT} \cdot \leq_{\Cache}\right) \mathbin{\disjointunion}
                        \left(\eqX{\NonT} \cdot \leq_{\Cache}\right).$$
Similarly we can construct wqos for \ControlState\ and $\Control^{\leq k}$ by composition:
\begin{align*}
\leq_{\ControlState}     &:=\,\leq_{\InControl} \disjointunion \leq_{\DelayedControl}\\
\leq_{\Control^{\leq k}} &:=\,\leq_{\ControlState} \cdot \leq_{\CallStack^{\leq k}}.
\end{align*}
As a last step we use (WQO-\ref{prop:wqo:varhigman}) to construct a wqo for $\M\left[\Control^{\leq k}\right]$ which then allows us to define a wqo for $\Config^{\leq k}$ by 
$\leq_{\Config^{\leq k}} := \leq_{\M\left[\Control^{\leq k}\right]} \times \leq_{\Queues}.$

To prove the decidability of the coverability problem for APCPS with shaped stacks, it remains to show that $\toCM$ is monotonic and $\uparrow\Pred(\uparrow \{\gamma\})$ is computable.

\begin{lemma}[Monotonicity]\label{lemma:monotonicity_of_inst_chan}
The transition relation $\toCM$ is monotone with respect to the well-order ${\leq_{\Config^{\leq k}}}$.
\end{lemma}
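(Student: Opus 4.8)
The plan is to prove monotonicity by a case analysis over the rule of $\toCM$ used, after isolating two auxiliary observations. The first is that the alternative \emph{sequential} relation $\toSM$ is itself monotone (with respect to $\leq_{\Control^{\leq k}}$, and more generally its analogue on $\Control^{\M}$), which follows from the \emph{rigidity} of the order: if $\gamma \leq_{\Control^{\leq k}} \delta$, then $\gamma$ and $\delta$ lie in the same summand of $\ControlState \cdot \CallStack^{\leq k}$, they agree \emph{exactly} on their $\Sigma$- and $\NonT$-labelled components and on the sequence $X_1 \cdots X_j$ of non-commutative non-terminals residing in the stack (in particular $j$ is the same, since $\leq_{\CallStack^{\leq k}}$ only relates stacks of equal height), and they differ only in that $\delta$ carries $\leq_{\M}$-larger caches. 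Hence whichever of the rules (\ref{def:seq_mult_sem_nonblock})--(\ref{def:seq_mult_sem_action}) fires on $\gamma$ fires on $\delta$, using the same $\calG$-rule and --- in case (\ref{def:seq_mult_sem_nonblock}) --- the same commutative word $w$; and since a cache is only ever enlarged ($M \mapsto \M(w)\oplus M$) or uniformly restricted ($M \mapsto M \restriction (\ComN \union \PPL)$ in rule (\ref{def:con_mult_sem_non-term})), operations monotone in $M$, the successors are again ordered. The second observation is the standard property of the multiset extension order of (WQO-\ref{prop:wqo:varhigman}): $\Pi \leq \Pi'$ yields an injection matching each process of $\Pi$ to a $\leq_{\Control^{\leq k}}$-larger process of $\Pi'$, and both deleting a matched pair and simultaneously adjoining $y$ to $\Pi$ and some $y' \geq y$ to $\Pi'$ preserve the order.

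With these, suppose a $\toCM$-step acts on a distinguished process $\cproc{\gamma}$ of the smaller configuration, and let $\cproc{\delta}$ be the matched process of the larger configuration, so $\gamma \leq_{\Control^{\leq k}} \delta$ and the larger configuration has a pointwise-$\leq_{\M}$-larger channel map. For rule (\ref{def:con_mult_sem_interleave}) we invoke sequential monotonicity. For rules (\ref{def:con_mult_sem_rec})--(\ref{def:con_mult_sem_label}), rigidity forces $\delta$ to have the same head symbol as $\gamma$, so the same rule is enabled on $\cproc\delta$; for the receive rule we additionally use that the larger channel $c$ still contains the consumed message (its contents dominate $\mset{m}\oplus q$), and for the send rule we add the message to both channels. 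For the dispatch rules (\ref{def:con_mult_sem_disp}) and (\ref{def:con_mult_sem_non-term}), rigidity forces the control state of $\cproc\delta$ to be a cache $\bar M$ of the \emph{same kind} as the triggering cache $M$, with $M \leq_{\M} \bar M$, sitting above the same non-commutative non-terminal $X$ when one is present; the rule fires on $\cproc\delta$, dispatching the larger $\Gamma(\bar M)$ to the channels and the larger $\Pi(\bar M)$ to the process pool. In each case one checks directly, using the two observations together with monotonicity of $M \mapsto \Gamma(M)$, $M \mapsto \Pi(M)$ and $M \mapsto M \restriction (\ComN \union \PPL)$, that the successor of the larger configuration dominates the successor of the smaller one. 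Since domination preserves the number of non-commutative non-terminals in each stack, this larger successor again lies in $\Config^{\leq k}$ whenever the smaller successor does.

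I expect the real work to be bookkeeping the cache \emph{tags} (term / mixed / non-terminal) through the case analysis --- this is precisely why $\Cache$ and $\DelayedControl$ are defined as \emph{disjoint} unions and $\leq_{\Cache}$, $\leq_{\DelayedControl}$ as summand-wise orders. One must check that every transition respects the tagging, so that caches related by $\leq_{\Cache}$ (hence lying in a common summand) are sent to caches again lying in a common summand: e.g. rule (\ref{def:con_mult_sem_non-term}) maps a \MixedCache\ to a \MixedCache\ (using $\PPL \subseteq \ComSigma$), and rule (\ref{def:seq_mult_sem_nonblock}) assigns its enlarged cache a tag determined solely by $w$ --- hence the same tag for $\gamma$ and for $\delta$. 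Once this check is in place, each individual case of the case analysis is a routine one-line verification.
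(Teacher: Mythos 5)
Your proposal is correct and follows essentially the same route as the paper: a separate sequential monotonicity lemma for $\toSM$ (exploiting that $\leq_{\Control^{\leq k}}$ only relates states with identical head symbols and identical stack skeletons $X_1\cdots X_j$, differing only in $\leq_{\M}$-larger caches), followed by a rule-by-rule case analysis of $\toCM$ using the matched larger process and monotonicity of $\oplus$, $\Gamma(\cdot)$, $\Pi(\cdot)$ and $\restriction$. Your extra attention to the cache tags is consistent with the paper's own footnote on why $\Cache$ is a disjoint union, and the remaining case checks are the same routine verifications carried out in the appendix.
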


\begin{corollary}
The transition system ${\left(\Config^{\leq k},\toCM,\leq_{\Config^{\leq k}}\right)}$ 
is a well-structured transition system.
\end{corollary}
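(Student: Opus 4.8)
The plan is to do a case analysis on the $\toCM$-rule used, after first reducing the multiset hypothesis to a statement about the one active process. Concretely, suppose $c_1 \toCM c_2$ via some rule and $c_1 \leq_{\Config^{\leq k}} d_1$; write $c_1 = \cproc{\gamma} \parallel \Pi \ChanPar \Gamma$ where $\gamma$ is the process rewritten by that rule (for the dispatch rules, the process whose control state is the delayed cache). Since $\leq_{\Config^{\leq k}} \,=\, \leq_{\M[\Control^{\leq k}]} \times \leq_{\Queues}$ and a multiset ordering is witnessed by a domination-respecting matching, I can extract the partner of $\gamma$ and write $d_1 = \cproc{\gamma^\uparrow} \parallel \Pi^\uparrow \ChanPar \Gamma^\uparrow$ with $\gamma \leq_{\Control^{\leq k}} \gamma^\uparrow$, $\Pi \leq \Pi^\uparrow$ in $\M[\Control^{\leq k}]$, and $\Gamma(c) \leq_{\M} \Gamma^\uparrow(c)$ for every channel $c$. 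It then suffices to construct $d_2$ with $d_1 \toCM d_2$ and $c_2 \leq_{\Config^{\leq k}} d_2$.

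The step that makes everything go through is a structural observation read off from the layered definition of $\leq_{\Control^{\leq k}}$: because $\leq_{\ControlState}$ is a disjoint sum whose summands all pair an $\eqX{\Sigma}$- or $\eqX{\NonT}$-factor on the head symbols with a $\leq_{\Cache}$-factor, and because $\leq_{\CallStack^{\leq k}}$ only compares call stacks carrying the identical sequence of non-commutative non-terminals, the relation $\gamma \leq \gamma^\uparrow$ forces $\gamma^\uparrow$ to have the same ``control skeleton'' as $\gamma$ — the same head terminal/non-terminal symbols and the same stack skeleton $X_1 \cdots X_j$ (with $j \leq k$) — differing only by possibly larger caches, while $d_1$ may additionally carry extra processes and extra queued messages. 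Hence $\gamma^\uparrow$ is enabled for exactly the same $\toCM$-rule as $\gamma$, instantiated with the same $\calG$-rule, channel, message, spawned symbol, and revealed top-of-stack symbol. As a preliminary I would establish the analogous monotonicity of $\toSM$ on $\Control^\M$ by the same kind of case analysis over the four rules of Definition~\ref{def:alternative_operational_semantics} — the active non-terminal and the stack skeleton are preserved under $\leq$, and $\M(w)\oplus(-)$ is $\leq_{\M}$-monotone, so the same derivation $C \toSF^* w$ can be reused in rule (\ref{def:seq_mult_sem_nonblock}) — which disposes of the interleaving rule (\ref{def:con_mult_sem_interleave}), noting that $\gamma\in\InControl\cdot\CallStack^\M$ entails $\gamma^\uparrow\in\InControl\cdot\CallStack^\M$.

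The remaining cases are then mechanical. The communication and spawn rules (\ref{def:con_mult_sem_rec}), (\ref{def:con_mult_sem_send}), (\ref{def:con_mult_sem_label}), (\ref{def:con_mult_sem_spawn}) fire on $d_1$ since $\gamma^\uparrow$ has the same head action and, for reception, the consumed message is still present because $\Gamma(c) \leq_{\M} \Gamma^\uparrow(c)$; then $c_2 \leq_{\Config^{\leq k}} d_2$ is immediate because $\leq_{\M}$ is a congruence for $\oplus$ on both the process multiset and the queues. For the dispatch rules (\ref{def:con_mult_sem_disp}) and (\ref{def:con_mult_sem_non-term}), the delayed cache $M$ of $\gamma$ is dominated by a cache $M^\uparrow$ of the same kind with $M \leq_{\M} M^\uparrow$, and since $M \mapsto \Pi(M)$, $M \mapsto \Gamma(M)$ and $M \mapsto M\restriction(\ComN\union\PPL)$ are all $\leq_{\M}$-monotone, firing the same rule on $d_1$ again produces $d_2 \geq c_2$; the revealed control state ($X$, resp.\ the residual cache) matches by the structural observation. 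Finally, a $\toCM$-step out of $\Config^{\leq k}$ never pushes the skeleton depth above $k$, and the mirrored step preserves the skeleton of $\gamma$, so $d_2 \in \Config^{\leq k}$. I expect the main obstacle to be purely bookkeeping: keeping the disjoint-union tags of caches synchronised between $\gamma$ and $\gamma^\uparrow$ (they always sit in the same summand, being produced by the same rule) and the clean extraction of the dominating process from the multiset order. There is no genuine mathematical difficulty — monotonicity has been engineered into the construction of $\leq_{\Config^{\leq k}}$, and the actual content is exactly the case check above that no $\toCM$-rule can distinguish a configuration from a larger one.
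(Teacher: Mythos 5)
Your proposal is correct and follows essentially the same route as the paper: a preliminary sequential monotonicity lemma for $\toSM$ (case analysis over the four rules of Definition~\ref{def:alternative_operational_semantics}), then concurrent monotonicity of $\toCM$ by a rule-by-rule case analysis exploiting that $\gamma \leq_{\Control^{\leq k}} \gamma^\uparrow$ forces an identical control/stack skeleton with only larger caches, plus $\leq_{\M}$-monotonicity of $\oplus$, $\Pi(\cdot)$, $\Gamma(\cdot)$ and restriction, combined with the wqo already constructed for $\Config^{\leq k}$. This matches the paper's proof (Lemmas~\ref{apx:lemma:seq_monotonicity} and~\ref{apx:lemma:monotonicity_of_inst_chan}), so nothing further is needed.
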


To see that $\uparrow\Pred(\uparrow \{\gamma\})$ is computable is mostly trivial;
only predecessors\linebreak generated by rule (\ref{def:seq_mult_sem_nonblock}) are not immediately obvious. Given ${M' \in \Cache}$ we observe that it is enough to be able to compute the set $P_{M'} := {\uparrow\{(C,M) \mid C \in \ComN,}\linebreak{C \toSF^* w, M'' = M \oplus \M(w),  M' \leq_{\M} M''\}}$. Now $C \toSF^* w$ is a computation of a commutative context-free grammar (CCFGs) for which an encoding into Petri nets has been shown by \citeauthor{Ganty:2012} \cite{Ganty:2012}.
Their encoding builds on work by \citeauthor{Esparza:1997} \cite{Esparza:1997} modelling CCFG in Petri nets. Their translation leverages a recent result \cite{Esparza:ArXiv:2010}: every word of a CCFG has a \emph{bounded-index} derivation i.e.~every term of the derivation uses no more than an \emph{a priori} fixed number of occurrences of non-terminals. A budget counter constrains the Petri net encoding of a CCFG to respect boundedness of index; termination of a CCFG computation can be detected by a transition that is only enabled when the full budget is available. This result allows us to compute the set $P_{M'}$ using a backwards coverability algorithm for Petri nets.

\begin{theorem}
  The alternative program-point coverability problem, and hence the program-point coverability problem, for APCPS with $k$-shaped stacks are decidable for every $k \geq 0$.
\end{theorem}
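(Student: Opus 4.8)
The plan is to instantiate the generic backward-coverability procedure for well-structured transition systems on the system $\left(\Config^{\leq k},\toCM,\leq_{\Config^{\leq k}}\right)$, which the preceding Corollary already identifies as a WSTS, and then to transfer decidability to the original problem via Theorem~\ref{thm:reduction_coverability}. Recall that a WSTS has decidable coverability provided it has a computable initial state and $\uparrow\Pred(\uparrow\{s\})$ is effectively computable for every $s$; so the work splits into (a) setting up the target set, (b) showing $\uparrow\Pred(\uparrow\{\cdot\})$ is computable, and (c) invoking the algorithm.

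First I would fix the APCPS $\calG$ and labels $l_1,\dots,l_n$. Since $\calG$ has $k$-shaped stacks, $\Reach_{\toCM}\subseteq\Config^{\leq k}$; in particular $[\cproc{S}]\ChanPar\emptyset\in\Config^{\leq k}$, and because every $\toCM$-successor of a reachable configuration is again reachable, restricting the state space and transition relation to $\Config^{\leq k}$ does not change which configurations are reachable from the start configuration. On this restricted system $\toCM$ is monotone (Lemma~\ref{lemma:monotonicity_of_inst_chan}) with respect to the composed wqo $\leq_{\Config^{\leq k}}$ built above, so it is a WSTS. Next I would exhibit a finite basis of the upward-closed set of yes-witnesses: a $\toCM$-reachable $\Pi\ChanPar\Gamma$ is a yes-witness iff for each $i$ it contains a process state that dominates one of the finitely many minimal witnesses for $l_i$ --- namely $l_i$ with empty cache and empty call stack, or the singleton delayed control state $\mset{l_i}$ with empty call stack (using that $l_i\in\ComSigma$). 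Taking one such minimal process state per $i$ and forming the multiset (with empty queues) gives, over all choices, a finite set $B\subseteq\Config^{\leq k}$ with $\uparrow B$ the set of bad configurations.

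The core of the argument is the computability of $\uparrow\Pred(\uparrow\{\gamma\})$ for $\gamma\in\Config^{\leq k}$, where $\Pred$ is taken inside $\Config^{\leq k}$ (predecessors leaving this set are simply discarded, which is sound by the remark above). For every rule of the alternative concurrent and sequential semantics except (\ref{def:seq_mult_sem_nonblock}) this is a routine finite case analysis on the shape of the control state and on which channel or spawned process was touched; the only non-obvious predecessors are those arising from a commutative-call summarisation $A\,M\,\delta\toSM B\,(\M(w)\oplus M)\,\delta$ with $A\to B\,C$ a $\calG$-rule and $C\toSF^* w\in(\ComN\cup\ComSigma)^*$. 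As noted in the text, for a given $M'\in\Cache$ it suffices to compute the upward-closed set $P_{M'}=\mathord{\uparrow}\{(C,M)\mid C\in\ComN,\ C\toSF^* w,\ M'\leq_{\M} M\oplus\M(w)\}$. Since $C\toSF^* w$ is a derivation of a commutative context-free grammar, I would invoke the Petri-net encoding of CCFGs of \citeauthor{Ganty:2012}~\cite{Ganty:2012}, building on \citeauthor{Esparza:1997}~\cite{Esparza:1997} and on the bounded-index theorem for CCFGs \cite{Esparza:ArXiv:2010}: a budget counter forces derivations to respect the index bound, and a distinguished transition, enabled only when the full budget is restored, detects termination of the derivation and exposes its Parikh image. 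A backward-coverability computation on this Petri net then yields $P_{M'}$, from which $\uparrow\Pred(\uparrow\{\gamma\})$ for rule (\ref{def:seq_mult_sem_nonblock}) is assembled; together with the trivial cases this makes $\uparrow\Pred(\uparrow\{\gamma\})$ computable.

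Finally I would run the standard saturation $I_0=\mathord{\uparrow}B$, $I_{m+1}=I_m\cup\uparrow\Pred(I_m)$; by the wqo $\leq_{\Config^{\leq k}}$ this increasing chain of upward-closed sets stabilises after finitely many steps, its limit is the set of configurations that can cover a bad one, and the instance is a yes-instance of alternative program-point coverability iff $[\cproc{S}]\ChanPar\emptyset$ lies in the limit --- a decidable membership test on a finite basis. Theorem~\ref{thm:reduction_coverability} then yields decidability of program-point coverability for $k$-shaped APCPS, for every $k\geq 0$. I expect the main obstacle to be exactly the computation of $P_{M'}$: one must check that the reduction to Petri-net coverability faithfully captures \emph{terminating} CCFG derivations together with their Parikh images (this is precisely where the bounded-index theorem and the budget-counter trick are indispensable), and must also be careful that restricting $\Pred$ to $\Config^{\leq k}$ is sound for the $k$-shaped instance; everything else is bookkeeping over the finite control and the already-established wqo.
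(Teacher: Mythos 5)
Your route is the paper's own: build the composed wqo on $\Config^{\leq k}$, get a WSTS from monotonicity (Lemma~\ref{lemma:monotonicity_of_inst_chan}), argue that $\uparrow\Pred(\uparrow\{\cdot\})$ is computable with rule~(\ref{def:seq_mult_sem_nonblock}) as the only non-trivial case --- handled, exactly as in the paper, via the set $P_{M'}$ and the Petri-net encoding of bounded-index CCFG derivations --- then run backward coverability and transfer the answer through Theorem~\ref{thm:reduction_coverability}. All of that matches.

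There is, however, a genuine gap in your finite basis for the set of yes-witnesses. You take as minimal witnesses only $l_i$ (or $\mset{l_i}$) with empty cache \emph{and empty call stack}, and claim that a reachable configuration is a yes-witness iff it dominates one of these. The \lq\lq only if\rq\rq\ direction fails for the order you are using: $\leq_{\CallStack^{\leq k}}$ relates two stacks only when they have the same length and the same sequence of non-commutative non-terminals ($X_1M_1\cdots X_jM_j \leq X_1M'_1\cdots X_jM'_j$ iff $M_i \leq_{\Cache} M'_i$ for all $i$), so a process such as $\cproc{l_i\,X\,M}$ with $X \in \NComN$ --- the typical situation in which the label is reached while non-commutative frames are still pending, e.g.\ the critical-section label of the running example --- is a legitimate witness of alternative program-point coverability yet does \emph{not} dominate $\cproc{l_i}$ with empty stack. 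A backward saturation started from your $\uparrow B$ can therefore answer \lq\lq no\rq\rq\ on a yes-instance. This is precisely why the paper's target set $U$ enumerates, underneath each $\widehat{l_i}\in\{l_i,\mset{l_i}\}$, \emph{all} stack skeletons $X_i^1\cdots X_i^{j_i}$ with $X_i^j\in\NComN$ and $0\leq j_i\leq k$ (with empty caches); this is still a finite basis, since $\NComN$ is finite and $j_i\leq k$, and with this corrected basis the rest of your argument goes through unchanged. Two smaller remarks of the same flavour: because $\leq_{\Cache}$ and $\leq_{\DelayedControl}$ are \emph{disjoint} unions, $\mset{l_i}$ should appear in the basis both as a terminal cache and as a mixed cache; and one cannot dodge the skeleton issue by coarsening the order so that shorter stacks embed into longer ones, since Lemma~\ref{lemma:monotonicity_of_inst_chan} establishes monotonicity of $\toCM$ only for the skeleton-preserving order.
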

%!TEX root = main.tex
\section{Related Work and Discussion}
\label{sec:related}
\paragraph{Partially Commutative Context-Free Grammars (PCCFG).}
\citeauthor{Czerwinski:09} introduced PCCFG as a study in process algebra \cite{Czerwinski:09}.
They proved that bisimulation is NP-complete for a class of processes extending BPA and BPP \cite{Esparza:1997} where the sequential composition of certain processes is commutative. Bisimulation is defined on the traces of such processes, although there is no synchronisation between processes. In \cite{Czerwinski:Concur:2012} the problem of word reachability for partially commutative context-free languages was shown to be NP-complete. 

\paragraph{Asynchronous Procedure Calls.}
Petri net models for finite state machines that communicate asynchronously via unordered message buffers were first investigated by \citeauthor{Mukund:Icalp:1998} \cite{Mukund:Asian:1998, Mukund:Icalp:1998}. In an influential paper \cite{Sen:2006} in 2006, \citeauthor{Sen:2006} showed that safety verification is decidable for first-order programs with atomic asynchronous methods. Building on this, \citeauthor{Jhala:2007} \cite{Jhala:2007} constructed a VAS that models such asynchronous programs on-the-fly. Liveness properties, such as fair termination and starvation, of asynchronous programs were extensively studied by \citeauthor{Ganty:2009} in \cite{Ganty:2009,Ganty:2012}. 
In our more general APCPS framework, we may view the asynchronous programs considered by \citeauthor{Ganty:2012} in \cite{Ganty:2012} as APCPS running a single \lq\lq scheduler\rq\rq\  process. Task bags can be modelled as channels in our setting and the posting of a task can be modelled by sending a message; the scheduling of a procedure call can be simulated as a receive of a non-deterministically selected channel which unlocks a commutative procedure call defined by rules of types (\ref{item:singleton}) \changed[lo]{and (\ref{item:tail-rec})} and rules of type (\ref{item:rec}) where $C \in \ComN$, in the sense of Definition~\ref{def:APCPS}. It is thus easy to see that APCPS with shaped stacks subsume programs with asynchronous procedure calls. In light of the fact that their safety verification is {\scshape ExpSpace}-complete we can infer that the program-point coverability problem for APCPS with shaped stacks is {\scshape ExpSpace}-hard.

Various extensions of \citeauthor{Sen:2006}'s model \cite{Chadha:2007} and applications to real-world asynchronous task scheduling systems \cite{Geeraerts:2012} have been investigated. From the standpoint of message-passing concurrency, a key restriction of many of the models considered is that messages may only be retrieved by a communicating pushdown process when its stack is empty. The aim of this paper is to relax this restriction while retaining decidability of safety verification.

\paragraph{Communicating Pushdown Systems.}
The literature on communicating pushdown systems is vast. Numerous classes with decidable verification problems have been discovered. Heu{\ss}ner et~al.~\cite{Heussner:2010} studied a restriction on pushdown processes that communicate asynchronously via FIFO channels: a process may send a message only when its stack is empty, while message retrieval is unconstrained. Several other communicating pushdown systems have been explored: parallel flow graph systems \cite{Esparza:2000}, visibly pushdown automata that communicate over FIFO-queues \cite{Babic:2011}, pushdown systems communicating over locks \cite{Kahlon:2009}, and recursive programs with hierarchical communication \cite{Bouajjani:2005,Bouajjani:Popl:2012}.

Verification techniques that over-approximate correctness properties of concurrent pushdown systems have been studied \cite{Flanagan:2003,Henzinger:2003}. Under-approximation techniques typically impose constraints, such as bounding the number of context switches \cite{Torre:2009,Lal:2009}, bounding the number of times a process can switch from a send-mode to receive-mode \cite{Bouajjani:Tacas:2012}, or allowing symbols pushed onto the stack to be popped only within a bounded number of context switches \cite{Torre:2011}. Another line of work focuses on pushdown systems that communicate synchronously over channels, restricting model checking to synchronisation traces that fall within a restricted regular language \cite{Esparza:2011}; this approach has been developed into an effective CEGAR method \cite{Long:2012}.
%!TEX root = main.tex
\subsubsection*{Future Directions and Conclusion.}
We have introduced a new class of asynchronously communicating pushdown systems, APCPS, and shown that the program-point coverability problem is decidable and \textsc{ExpSpace}-hard for the subclass of APCPS with shaped stacks. We plan to investigate the precise complexity of the program-point coverability problem, construct an implementation and integrate it into {\scshape Soter} \cite{Soter:2012,Soter:2013}, a safety verifier for Erlang programs, to study APCPS empirically.

\subsubsection*{Acknowledgments.}
Financial support by EPSRC (research grant EP/F036361/1 and
OUCL DTG Account doctoral studentship for the first author) is gratefully acknowledged. \shortfalse% 
\ifshort%
We would like to thank the anonymous reviewers for their detailed comments.  
\else% 
We would like to thank Matthew Hague, Subodh Sharma, Michael Tautschnig and Emanuele D'Osualdo for helpful discussions and insightful comments, and the anonymous reviewers for their detailed reports.
\fi%
\bibliographystyle{abbrvnat}
{\small
\bibliography{bibliography/pccfg,%
              bibliography/ref,%
              bibliography/vass,%
              bibliography/asyncpc,%
              bibliography/concpds%
}
}
\iffull%
\newpage
\appendix
%!TEX root = main.tex
\section{Proof of Theorem~\ref{thm:reduction_coverability}}

\subsection{Direction: $\Leftarrow$}

%!TEX root = main.tex
We lift define a function $\M\seqpar{\cdot}$ over sequences $\NonT\ComN^*(\NComN\ComN^*)^*$ in the following way:
\begin{align*}
\M\seqpar{C_1\cdots C_n}         &= \left\{ \bigoplus_{i=1}^n \M(w_i) \mid C_i \toSF^* w_i, w_i \in \CommWords\right\}\\
\M\seqpar{C_1\cdots C_n Z\alpha} &= \M\seqpar{C_1\cdots C_n} \cdot Z \cdot \M\seqpar{\alpha}\\
\M\seqpar{X\alpha} &= X \cdot \M\seqpar{\alpha}\\
\M\seqpar{a\alpha} &= a \cdot \M\seqpar{\alpha}\\
\end{align*}
Let $U,V \subseteq \Control^\M$, we define $U \toSM V$ just if for all $\gamma' \in V$ there exists a $\gamma \in U$ such that $\gamma \toSM \gamma'$.
\begin{lemma}\label{apx:lemma:mseqpar_sim_SF}
If $\alpha \toSF \beta$ such that $\alpha \in \NonT^*$ then $\M\seqpar{\alpha} \toSM \M\seqpar{\beta}$.
\end{lemma}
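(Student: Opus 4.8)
The plan is to prove the claim by a single case analysis on the $\calG$-rule underlying the leftmost step $\alpha\toSF\beta$; no induction on the lemma is needed, since one step of $\toSF$ is to be matched by one step of $\toSM$ at the level of interpretations. Recall that $\M\seqpar{\alpha}\toSM\M\seqpar{\beta}$ means exactly: for every $\gamma'\in\M\seqpar{\beta}$ there is some $\gamma\in\M\seqpar{\alpha}$ with $\gamma\toSM\gamma'$. So I would fix an arbitrary $\gamma'\in\M\seqpar{\beta}$ and construct the witness $\gamma$ together with the rule of the alternative sequential semantics that takes $\gamma$ to $\gamma'$.

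First I would set up notation. As $\alpha\in\NonT^*$ is nonempty and $\toSF$ is leftmost, write $\alpha = A\,\alpha'$ with $A\in\NonT$, $\alpha'\in\NonT^*$, and let $A\to\eta$ be the applied rule, so $\beta = \eta\,\alpha'$ in $\CommWords$. I would decompose $\alpha' = C_1\cdots C_n\,\mu$ (with $n\geq 0$), where $C_1\cdots C_n\in\ComN^*$ is the maximal commutative-non-terminal prefix of $\alpha'$ and $\mu$ is empty or begins with a non-commutative non-terminal. Unwinding the clauses defining $\M\seqpar{\cdot}$ — which keep the head symbol $A$ as the active control while precomputing the Parikh images of derivations of $C_1,\ldots,C_n$ into the cache below $A$ — every element of $\M\seqpar{\alpha}$ has the shape $A\;\bigl(\bigoplus_{i=1}^n\M(w_{C_i})\bigr)\;\sigma$ for some derivations $C_i\toSF^* w_{C_i}$ and some $\sigma\in\M\seqpar{\mu}$. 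Here I would first record the auxiliary fact that every symbol reachable from a commutative non-terminal is itself commutative (an easy consequence of the definition of the lifted independence relation $I$: if some $a\in\RHS(C)$ were non-commutative then nothing could force $(C,b)\in I$ for a commutative $b$, contradicting commutativity of $C$), so that $w_{C_i}\in(\ComN\cup\ComSigma)^*$ and these caches are well typed. For the fixed $\gamma'$, let $\sigma'\in\M\seqpar{\mu}$ be the tail it determines; in every case below the witness $\gamma$ will be chosen to reuse this same tail and the same commutative-prefix derivations appearing in $\gamma'$.

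Then I would case-split on $\eta$. If $\eta = a$ with $a\in\Sigma\cup\{\epsilon\}$, then $\gamma' = a\;M\;\sigma'$ with $M = \bigoplus_i\M(w_{C_i})$, and $\gamma := A\;M\;\sigma'$ works via rule (\ref{def:seq_mult_sem_action}). If $\eta = a\,B$ with $a\in\Sigma$, $B\in\NonT$, then $\gamma' = a\;B\;M\;\sigma'$ and $\gamma := A\;M\;\sigma'$ works via rule (\ref{def:seq_mult_sem_tailrec}). If $\eta = B\,C$ with $C$ non-commutative, then $C$ becomes a fresh separator, $\gamma' = B\;\emptyset\;C\;M\;\sigma'$ (with an empty cache between $B$ and $C$), and $\gamma := A\;M\;\sigma'$ works via rule (\ref{def:seq_mult_sem_block}). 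Each of these three cases is routine: $\gamma\in\M\seqpar{\alpha}$ because $\gamma$ reuses the very $C_i$-derivations and tail that $\gamma'$ fixed, and the verification $\gamma\toSM\gamma'$ is a one-line unfolding.

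The hard part is the case $\eta = B\,C$ with $C$ commutative, which I would treat last. Here $C$ is absorbed into the cache below the new active control $B$, so $\gamma' = B\;N\;\sigma'$ with $N = \M(w_C)\oplus\bigoplus_i\M(w_{C_i})$ for some derivation $C\toSF^* w_C$ and some derivations $C_i\toSF^* w_{C_i}$; by the auxiliary fact $w_C\in(\ComN\cup\ComSigma)^*$, so the side condition of rule (\ref{def:seq_mult_sem_nonblock}) holds. The obstacle to navigate is that rule (\ref{def:seq_mult_sem_nonblock}) deposits the whole of $\M(w_C)$ into the cache in one shot through its choice of $w$, whereas the contribution $\bigoplus_i\M(w_{C_i})$ of the commutative non-terminals already present in $\alpha'$ must be present in the cache of the witness before the step. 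So I would take $\gamma := A\;\bigl(\bigoplus_i\M(w_{C_i})\bigr)\;\sigma'\in\M\seqpar{\alpha}$, reusing exactly the derivations $C_i\toSF^* w_{C_i}$ read off from $\gamma'$; then rule (\ref{def:seq_mult_sem_nonblock}) with $A\to B\,C$ and $C\toSF^* w_C$ rewrites $\gamma$ to $B\;\bigl(\M(w_C)\oplus\bigoplus_i\M(w_{C_i})\bigr)\;\sigma' = \gamma'$. I expect this split of the head cache of $\gamma'$ — i.e. showing that the look-ahead precomputation frozen into $\M\seqpar{\beta}$ is faithfully realised by a single $\toSM$-step — to be the only real content; the remaining bookkeeping (the possibly empty cache that always separates the active control from the first non-commutative non-terminal, and the convention that the head symbol of a state is never absorbed into a cache) is mechanical.
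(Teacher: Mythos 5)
Your proposal is correct and takes essentially the same route as the paper's own proof: pick an arbitrary element of $\M\seqpar{\beta}$, case-split on the rule applied to the head non-terminal, and exhibit a one-step $\toSM$-witness in $\M\seqpar{\alpha}$ by reusing the same commutative-prefix derivations and tail, with the only substantive case ($A \rightarrow B\,C$, $C$ commutative) handled exactly as in the paper by splitting the head cache as $\M(w_C)\oplus M$ and applying rule~(\ref{def:seq_mult_sem_nonblock}). Your explicit auxiliary remark that commutative non-terminals only derive words in $(\ComN\union\ComSigma)^*$ (justifying the side condition of that rule) is a detail the paper leaves implicit, not a different approach.
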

\begin{proof}
Since $\alpha \toSF \beta$ we have $\alpha = X\alpha_0$ and $\beta = \alpha_1\alpha_0$ such that 
$X \rightarrow \alpha_1$. And so $\M\seqpar{\alpha} = X\M\seqpar{\alpha_0}$. We will proceed by case analysis on $X \rightarrow \alpha_1$.
\begin{itemize}
 	\item $X \rightarrow a$, $a \in \Sigma \union \{\epsilon\}$. \newline 
 		  Take $a\delta \in a\cdot\M\seqpar{\alpha_0} = \M\seqpar{a\alpha_0} = \M\seqpar{\beta}$.
 		  Then $X\delta \in \M\seqpar{\alpha}$ and $X\delta \toSM a\delta$. Hence $\M\seqpar{\alpha} \toSM \M\seqpar{\beta}$.
 	\item $X \rightarrow aA$, $a \in \Sigma$. \newline 
 		  Take $aA\delta \in a\cdot A\cdot\M\seqpar{\alpha_0} = \M\seqpar{aA\alpha_0} = \M\seqpar{\beta}$, then since
 		  $X\delta \in \M\seqpar{\alpha}$ and $X\delta \toSM aA\delta$. Hence $\M\seqpar{\alpha} \toSM \M\seqpar{\beta}$.	 
 	\item $X \rightarrow AB$, $B \in \NComN$. \newline 
 		  Suppose $AB\delta \in A \cdot B \cdot \M\seqpar{\alpha_0} = \M\seqpar{\alpha_1\alpha_0} = \M\seqpar{\beta}$, then
 		  $X\delta \in \M\seqpar{\alpha}$ and $X\delta \toSM AB\delta$. Hence $\M\seqpar{\alpha} \toSM \M\seqpar{\beta}$.	
 	\item $X \rightarrow AB$, $B \in \ComN$. \newline 
 		  Suppose ${A M'\delta \in \M\seqpar{AB\alpha_0}} = \M\seqpar{\alpha_1\alpha_0} = \M\seqpar{\beta}$. Then clearly ${M' = \M(w) \oplus M}$ such that $B \toSF^* w$ and $M\delta \in \M\seqpar{\alpha_0}$.
 		  Now then $X M\delta \in \M\seqpar{\alpha}$ and $X M\delta \toSM A(\M(w) \oplus M)\delta$.
 		  Thus $\M\seqpar{\alpha} \toSM \M\seqpar{\beta}$.	 	   
 \end{itemize} 

\end{proof}

For $U, V \subseteq \M[\Control^\M]$ define
\begin{align*}
% \cproc{A} &:= \{\cproc{a} \mid a \in A\}\\
U \parallel V &:= \{\Pi_0 \parallel \Pi_1 \mid \Pi_0 \in U, \Pi_1 \in V\}\\
\intertext{Further we define}
% \M\seqpar{\cproc{\alpha}} &:= \cproc{\M\seqpar{\alpha}}\\
\M\seqpar{\Pi \,\parallel\, \Pi'} &:= \M\seqpar{\Pi} \parallel \M\seqpar{\Pi'}
\end{align*}

We say that for $U, V \subseteq \M[\Control^\M]$ $U \ChanPar \Gamma \toCM V \ChanPar \Gamma'$, just if for all $\Pi' \in V$ there exists $\Pi \in U$ such that $\Pi \ChanPar \Gamma \toCM \Pi' \ChanPar \Gamma'$. Note this means that if $\M\seqpar{\alpha} \toSM \M\seqpar{\beta}$ then clearly $\M\seqpar{\cproc{\alpha} \parallel \Pi} \ChanPar \Gamma \toCM \M\seqpar{\cproc{\beta} \parallel \Pi} \ChanPar \Gamma$ for all $\Pi$, $\Gamma$.

\begin{lemma}\label{apx:lemma:leftmost_term_simulation}
\begin{enumerate}
	\item If ${a \in \ComSigma}$ and
			  $\cproc{a\alpha\alpha'} \parallel \Pi \ChanPar \Gamma \toCF 
			   \cproc{\alpha\alpha'}  \parallel \Pi \oplus \Pi(a) \ChanPar \Gamma \oplus \Gamma(a)$
			where 
				  $\alpha \in {(\NonT \union \{\epsilon\})\ComN^*}$, 
				  $\alpha' \in (\NComN \ComN)^*$, 
			then 
			$$\M\seqpar{\cproc{a\alpha\alpha'} \parallel \Pi} \ChanPar \Gamma \toCM 
			  \M\seqpar{\cproc{\alpha\alpha'}  \parallel \Pi \oplus \Pi(a)} \ChanPar \Gamma \oplus \Gamma(a).$$
	\item If $\cproc{(\rec{c}{m})\alpha\alpha'} \parallel \Pi \ChanPar \Gamma \oplus \Gamma(\snd{c}{m}) 
			  \toCF 
			  \cproc{\alpha\alpha'}           \parallel \Pi \ChanPar \Gamma $
			where 
				  $\alpha \in {(\NonT \union \{\epsilon\})\ComN^*}$, 
				  $\alpha' \in (\NComN \ComN)^*$, 
			then 
			$$\M\seqpar{\cproc{(\rec{c}{m})\alpha\alpha'} \parallel \Pi} \ChanPar \Gamma \oplus \Gamma(\snd{c}{m})\toCM 
			  \M\seqpar{\cproc{\alpha\alpha'}             \parallel \Pi} \ChanPar \Gamma.$$
\end{enumerate}
\end{lemma}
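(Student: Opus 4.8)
The plan is to prove both parts uniformly, by unfolding the set-level relation $\toCM$: fixing an arbitrary configuration $\Pi_1 \ChanPar \Gamma_1$ in the right-hand set, I must produce a member of the left-hand set that $\toCM$-steps to it. Two elementary facts do most of the work. First, $\M\seqpar{\cdot}$ commutes with parallel composition and with multiset union of process pools, so that $\M\seqpar{\cproc{\alpha\alpha'} \parallel \Pi \oplus \Pi(a)} = \M\seqpar{\cproc{\alpha\alpha'}} \parallel \M\seqpar{\Pi} \parallel \M\seqpar{\Pi(a)}$, whence $\Pi_1 = \cproc{\gamma_1} \parallel \Pi_1' \parallel \Pi_1''$ with $\gamma_1 \in \M\seqpar{\alpha\alpha'}$, $\Pi_1' \in \M\seqpar{\Pi}$ and $\Pi_1'' \in \M\seqpar{\Pi(a)}$. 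Second, the clause $\M\seqpar{a\sigma} = a \cdot \M\seqpar{\sigma}$ of the definition of $\M\seqpar{\cdot}$ shows that the re-prefixed pool $\cproc{a \cdot \gamma_1} \parallel \Pi_1'$ lies in $\M\seqpar{\cproc{a\alpha\alpha'} \parallel \Pi}$. This $\cproc{a \cdot \gamma_1} \parallel \Pi_1'$ is the source I propose; in the spawn case one additionally peels the fresh process off $\Pi_1''$, as discussed below.

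Next I would split on the shape of the head symbol. Since $\ComSigma = \Sigma^\flat$ and $\NComSigma$ is exactly the set of receive symbols, in part~(i) $a$ is a program-point label, a send $\snd{c}{m}$, or a spawn $\nu X$, while in part~(ii) it is $\rec{c}{m}$; and $\Gamma(a)$, $\Pi(a)$ are explicit in each case (no channel effect and empty pool for a label; one $m$ added to $c$ and empty pool for a send; no channel effect and the singleton pool $\mset{\cproc{X}}$ for a spawn; one $m$ removed from $c$ and empty pool for a receive). Comparing this against the concurrent rules, the standard step is simulated by exactly one of (\ref{def:con_mult_sem_label}), (\ref{def:con_mult_sem_send}), (\ref{def:con_mult_sem_spawn}), (\ref{def:con_mult_sem_rec}), whose effect on $\Gamma$ resp.\ $\Pi$ is precisely $\Gamma(a)$ resp.\ $\Pi(a)$ --- this is routine bookkeeping. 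For the spawn case I would note that the fresh process rule~(\ref{def:con_mult_sem_spawn}) introduces --- namely $X$ at the control position with empty cache and empty call stack --- is among the $\M\seqpar{\cdot}$-representatives of $\cproc{X}$ (indeed the only one, $\M\seqpar{X} = \{X\}$, when $X \in \NComN$), since the alternative sequential semantics only begins to pre-expand a commutative non-terminal via later $\toSM$-steps; hence this fresh process can be taken to be the one peeled off $\Pi_1''$.

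The only genuinely delicate step is to check the applicability side-condition $a \cdot \gamma_1 \in \InControl \cdot \CallStack^\M$ required by all four concurrent rules. This reduces to a short case analysis of the recursive definition of $\M\seqpar{\cdot}$ on words of the shape $\alpha\alpha'$ with $\alpha \in (\NonT \union \{\epsilon\})\ComN^*$ and $\alpha' \in (\NComN\ComN^*)^*$: according to whether $\alpha$ begins with a non-commutative non-terminal, begins with a commutative non-terminal (so the whole leading commutative block collapses into a single cache), or is $\epsilon$, and whether $\alpha' = \epsilon$, one finds that every $\gamma_1 \in \M\seqpar{\alpha\alpha'}$ has the form $A\,M\,\delta$ or $M\,\delta$ with $A \in \NonT$, $M \in \Cache$ and $\delta \in \CallStack^\M$; prefixing $a \in \Sigma$ therefore yields an element of $\Sigma\cdot\NonT\cdot\Cache\cdot\CallStack^\M$ or of $\Sigma\cdot\Cache\cdot\CallStack^\M$, both contained in $\InControl \cdot \CallStack^\M$. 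The subtlety to be careful about is that a collapsed commutative block may contribute a cache consisting only of commutative non-terminals, or only of commutative terminals; here one uses that a commutative non-terminal has a right-hand side over $\ComSigma \union \ComN$ only, and that --- as collections of multisets --- $\NonTermCache$ and $\TermCache$ are subsumed by $\MixedCache$ and hence by $\Cache$, so such a prefix still counts as an element of $\Cache$ for the purpose of the $\Sigma\cdot\Cache$ summand of $\InControl$.

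With the source fixed and the side-condition verified, applying the matching rule gives $\cproc{a \cdot \gamma_1} \parallel \Pi_1' \ChanPar \Gamma \toCM \cproc{\gamma_1} \parallel \Pi_1' \ChanPar \Gamma \oplus \Gamma(a)$ (with the extra fresh process on the right in the spawn case), and the decomposition of the first paragraph identifies the target with $\Pi_1 \ChanPar \Gamma_1$, closing the case. I expect this side-condition check --- the verification that every prefixed $\M\seqpar{\cdot}$-image lands in $\InControl \cdot \CallStack^\M$ --- to be the main obstacle, as it is the only point where the precise typing of cache and control states in $\Control^\M$ must be exploited; everything else is a direct unwinding of $\Gamma(\cdot)$, $\Pi(\cdot)$, $\M\seqpar{\cdot}$ and $\toCM$. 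Together with Lemma~\ref{apx:lemma:mseqpar_sim_SF}, which covers the complementary case of a non-terminal expansion, this lemma delivers a step-by-step simulation of the standard concurrent semantics by the alternative one along $\M\seqpar{\cdot}$.
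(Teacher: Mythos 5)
Your proposal matches the paper's proof essentially step for step: both unfold the set-level $\toCM$, take an arbitrary element of the target set, re-prefix it with the action via the clause $\M\seqpar{a\alpha} = a\cdot\M\seqpar{\alpha}$ to obtain a source in the left-hand set, and conclude by a case split on $a$ being a label, send or spawn (peeling off the fresh $\cproc{X}$ in the spawn case) resp.\ a receive, applying the matching concurrent rule (\ref{def:con_mult_sem_label}), (\ref{def:con_mult_sem_send}), (\ref{def:con_mult_sem_spawn}) or (\ref{def:con_mult_sem_rec}). The only difference is that you explicitly verify the applicability condition $a\cdot\gamma_1 \in \InControl\cdot\CallStack^\M$, which the paper's proof leaves implicit --- added care rather than a different route.
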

\begin{proof}[Claim 1]
We show that 
$\M\seqpar{\cproc{a\alpha\alpha'} \parallel \Pi} \ChanPar \Gamma \toCM 
  \M\seqpar{\cproc{\alpha\alpha'} \parallel \Pi'} \ChanPar \Gamma' $
by case analysis on $a$:
\begin{itemize}
	% \item $a = \epsilon$ \newline 
	\item $a = \snd{c}{m}$ \newline
	Then $\Pi \oplus \Pi(a) = \Pi$, 

	Take $\cproc{\gamma} \parallel \pi \in \M\seqpar{\cproc{\alpha\alpha'} \parallel \Pi}$ then 
	$\cproc{(\snd{c}{m})\gamma} \parallel \pi \in \M\seqpar{\cproc{(\snd{c}{m})\alpha\alpha'} \parallel \Pi}$. 
	Using rule \ref{def:con_mult_sem_send} we see that 
	$${\cproc{(\snd{c}{m})\gamma} \parallel \pi \ChanPar \Gamma \toCM \cproc{\gamma} \parallel \pi \ChanPar \Gamma \oplus \Gamma(\snd{c}{m})}.$$
	
	Hence we conclude 
	${\M\seqpar{\cproc{(\snd{c}{m})\alpha\alpha'} \parallel \Pi} \ChanPar \Gamma \toCM 
	     \M\seqpar{\cproc{\alpha\alpha'} \parallel \Pi} \ChanPar \oplus \Gamma(\snd{c}{m})}$.
	\item $a = \nu X$ \newline
	Then $\Pi \oplus \Pi(\nu X) = \Pi \parallel \cproc{X}$, 
	     $\Gamma = \Gamma$

	Take $\cproc{\gamma} \parallel \cproc{X} \parallel \pi \in \M\seqpar{\cproc{\alpha\alpha'} \parallel \cproc{X} \parallel\Pi} = \M\seqpar{\cproc{\alpha\alpha'} \parallel \Pi \oplus \Pi(\nu X)}$ then
	$\cproc{(\nu X)\gamma} \parallel \pi \in \M\seqpar{\cproc{(\nu X)\alpha\alpha'} \parallel \Pi}$.
	Using rule \ref{def:con_mult_sem_spawn} we see that 
	$${\cproc{(\nu X)\gamma} \parallel \pi \ChanPar \Gamma \toCM \cproc{\gamma} \parallel \cproc{X} \parallel \pi \ChanPar \Gamma}.$$ 
	
	Hence we conclude 
	${\M\seqpar{\cproc{(\nu X)\alpha\alpha'} \parallel \Pi} \ChanPar \Gamma \toCM 
	     \M\seqpar{\cproc{\alpha\alpha'} \parallel \Pi \oplus \Pi(\nu X)} \ChanPar \Gamma}$.
	\item $a = l$ \newline
	Then $\Pi    \oplus \Pi(l)    = \Pi$, 
	     $\Gamma \oplus \Gamma(l) = \Gamma$.
	Take $\cproc{\gamma} \parallel \pi \in \M\seqpar{\cproc{\alpha\alpha'} \parallel\Pi}$ then
	$\cproc{l\gamma} \parallel \pi \in \M\seqpar{\cproc{l\alpha\alpha'} \parallel \Pi}$. 
	Using rule \ref{def:con_mult_sem_label} we see that 
	$${\cproc{l\gamma} \parallel \pi \ChanPar \Gamma \toCM \cproc{\gamma} \parallel \pi \ChanPar \Gamma}.$$ 
	Hence we conclude 

	${\M\seqpar{\cproc{l\alpha\alpha'} \parallel \Pi} \ChanPar \Gamma \toCM 
	     \M\seqpar{\cproc{\alpha\alpha'} \parallel \Pi} \ChanPar \Gamma}$.
\end{itemize}
\end{proof}
\begin{proof}[Claim 2]
	Take $\cproc{\gamma} \parallel \pi \in \M\seqpar{\cproc{\alpha\alpha'} \parallel \Pi} = \M\seqpar{\cproc{\alpha\alpha'} \parallel \Pi}$
	then $\cproc{(\rec{c}{m})\gamma} \parallel \pi \in \M\seqpar{\cproc{(\rec{c}{m})\alpha\alpha'} \parallel \Pi}$

	Then using rule \ref{def:con_mult_sem_rec} we see that 
	$${\cproc{(\rec{c}{m})\gamma} \parallel \pi \ChanPar \Gamma \oplus \Gamma(\snd{c}{m}) \toCM 
	\cproc{\gamma} \parallel \pi \ChanPar \Gamma}.$$ 
	
	Hence we conclude 
	${\M\seqpar{\cproc{(\rec{c}{m})\alpha\alpha'} \parallel \Pi} \ChanPar \Gamma \oplus \Gamma(\snd{c}{m}) \toCM 
	     \M\seqpar{\cproc{\alpha\alpha'} \parallel \Pi} \ChanPar \Gamma}$.
\end{proof}
\pagebreak
\begin{lemma}\label{apx:lemma:leftmost_term_simulation_with_comsideeffects}
	\item If $a_1 \cdots a_n \in \ComSigma^*$, $\alpha_i \in \ComN^*$ and $\alpha' \in (\NComN \ComN)^*$, 
			\begin{align*}
			 \cproc{\alpha_1\alpha'} \parallel \Pi(\epsilon) \ChanPar \Gamma(\epsilon) &\toCF^*  
			 \cproc{a_1\alpha_2\alpha'} \parallel \Pi(\epsilon) \ChanPar \Gamma(\epsilon) \\
			 & \toCF \cproc{\alpha_2\alpha'} \parallel \Pi(a_1) \ChanPar \Gamma(a_1) \\
			 & \toCF^* \cdots \toCF^* \\
			 & \cproc{a_n\alpha_{n+1}\alpha'} \parallel \Pi(a_1\cdots a_{n-1}) \ChanPar \Gamma(a_1\cdots a_{n-1}) \\
			 & \toCF^* \cproc{\alpha_{n+1}\alpha'} \parallel \Pi(a_1\cdots a_{n}) \ChanPar \Gamma(a_1\cdots a_{n})
			\end{align*}  
			then 
			$$\M\seqpar{\cproc{\alpha_1\alpha'} \parallel \Pi(\epsilon)} \ChanPar \Gamma(\epsilon) \toCM^* 
			\M\seqpar{\cproc{\alpha_{n+1}\alpha'}  \parallel \Pi(a_1\cdots a_{n})} \ChanPar \Gamma(a_1\cdots a_{n}).$$
\end{lemma}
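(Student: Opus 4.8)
The statement packages a whole \emph{commutative phase} of a standard-semantics run: a prefix $\alpha_1\in\ComN^*$ of commutative non-terminals is expanded, emitting the commutative actions $a_1,\dots,a_n$ one after another in front of a fixed tail $\alpha'$, and we must show the alternative semantics can replay this modulo $\M\seqpar{\cdot}$. The plan is to split the given $\toCF^*$-run into $n$ consecutive \emph{blocks}, where block $i$ carries the configuration from $\cproc{\alpha_i\alpha'}\parallel\Pi(a_1\cdots a_{i-1})\ChanPar\Gamma(a_1\cdots a_{i-1})$ to $\cproc{\alpha_{i+1}\alpha'}\parallel\Pi(a_1\cdots a_{i})\ChanPar\Gamma(a_1\cdots a_{i})$, to simulate each block separately, and to glue the simulations using transitivity of $\toCM^*$ on subsets of $\M[\Control^\M]$. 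For the bookkeeping I would first record the elementary identities $\Pi(a_1\cdots a_i)=\Pi(a_1\cdots a_{i-1})\oplus\Pi(a_i)$ and $\Gamma(a_1\cdots a_i)=\Gamma(a_1\cdots a_{i-1})\oplus\Gamma(a_i)$, so that the last transition of each block is literally an instance of the emission rule covered by Lemma~\ref{apx:lemma:leftmost_term_simulation}(1).

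Inside block $i$, all transitions but the last leave $\Pi$ and $\Gamma$ untouched, so by inspection of the $\toCF$-rules they are applications of rule~(\ref{def:conc_faith_inter}), i.e.\ a sequential derivation $\alpha_i\alpha'\toSF^* a_i\alpha_{i+1}\alpha'$. Since $\alpha_i\in\ComN^*$, every prefix derived from it stays in $(\ComN\cup\ComSigma)^*$ — a commutative non-terminal rewrites only to commutative symbols, which follows from the least-fixed-point definition of $I$ — and, because $\alpha_{i+1}$ is a word of non-terminals, this derivation exposes no commutative terminal other than $a_i$; hence every intermediate word lies in $\NonT^*$ and the final $\toSF$-step is the one that brings $a_i$ to the head. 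I can therefore iterate Lemma~\ref{apx:lemma:mseqpar_sim_SF} along this derivation to obtain $\M\seqpar{\alpha_i\alpha'}\toSM^*\M\seqpar{a_i\alpha_{i+1}\alpha'}$, lift it to the concurrent level by the remark following the set-valued definition of $\toCM$ to get $\M\seqpar{\cproc{\alpha_i\alpha'}\parallel\Pi(a_1\cdots a_{i-1})}\ChanPar\Gamma(a_1\cdots a_{i-1})\toCM^*\M\seqpar{\cproc{a_i\alpha_{i+1}\alpha'}\parallel\Pi(a_1\cdots a_{i-1})}\ChanPar\Gamma(a_1\cdots a_{i-1})$, and then append the emission step, which is exactly Lemma~\ref{apx:lemma:leftmost_term_simulation}(1) with commutative terminal $a_i$, tail $\alpha'$, and $\alpha:=\alpha_{i+1}\in\ComN^*\subseteq(\NonT\cup\{\epsilon\})\ComN^*$. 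This gives the block-$i$ simulation $\M\seqpar{\cproc{\alpha_i\alpha'}\parallel\Pi(a_1\cdots a_{i-1})}\ChanPar\Gamma(a_1\cdots a_{i-1})\toCM^*\M\seqpar{\cproc{\alpha_{i+1}\alpha'}\parallel\Pi(a_1\cdots a_i)}\ChanPar\Gamma(a_1\cdots a_i)$.

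Chaining the $n$ block simulations — the base case $n=0$ being vacuous since then $\alpha_{n+1}=\alpha_1$ and there is nothing to emit — and using that $\toCM^*$ on sets is transitive (immediate by concatenating the witnessing backward chains) yields exactly $\M\seqpar{\cproc{\alpha_1\alpha'}\parallel\Pi(\epsilon)}\ChanPar\Gamma(\epsilon)\toCM^*\M\seqpar{\cproc{\alpha_{n+1}\alpha'}\parallel\Pi(a_1\cdots a_n)}\ChanPar\Gamma(a_1\cdots a_n)$.

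The heavy lifting is already done by Lemmas~\ref{apx:lemma:mseqpar_sim_SF} and~\ref{apx:lemma:leftmost_term_simulation}, so what remains is essentially combinatorial. I expect the one genuinely delicate point to be the argument that every intermediate word of the block-$i$ derivation still lies in $\NonT^*$ — so that Lemma~\ref{apx:lemma:mseqpar_sim_SF} applies at every step and $\M\seqpar{\cdot}$ is never forced to descend past an already-surfaced terminal — which rests squarely on the hypothesis $\alpha_{i+1}\in\ComN^*$ ruling out a second commutative terminal appearing before $a_i$ is despatched.
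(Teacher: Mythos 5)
Your proposal is correct and follows essentially the same route as the paper: the paper argues by induction on $n$, whose inductive step is exactly your block-$i$ simulation, obtained (as you do) by repeatedly applying Lemma~\ref{apx:lemma:mseqpar_sim_SF} along the sequential derivation $\alpha_i\alpha' \toSF^* a_i\alpha_{i+1}\alpha'$ and then discharging the emitted commutative action $a_i$ via Lemma~\ref{apx:lemma:leftmost_term_simulation}. Your explicit block decomposition plus chaining is just a rephrasing of that induction, and your added care about intermediate words staying in $\NonT^*$ is a detail the paper leaves implicit.
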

\begin{proof}
We prove the claim by induction on $n$.
For $n = 0$ the claim is vacuously true.

For $n = k + 1$, assuming the claim holds for $k$ it is enough to show that if
\begin{align*}
\cproc{\alpha_{k+1}\alpha'} \parallel \Pi(a_1\cdots a_{k}) \ChanPar \Gamma(a_1\cdots a_{k}) &\toCF^*
\cproc{a_{k+1}\alpha_{k+2}\alpha'} \parallel \Pi(a_1\cdots a_{k}) \ChanPar \Gamma(a_1\cdots a_{k})\\ 
&\toCF^* \cproc{\alpha_{k+2}\alpha'} \parallel \Pi(a_1\cdots a_{k+1}) \ChanPar \Gamma(a_1\cdots a_{k+1})
\end{align*}
then 
$$\M\seqpar{\cproc{\alpha_{k+1}\alpha'} \parallel \Pi(a_1\cdots a_{k})} \ChanPar \Gamma(a_1\cdots a_{k}) 
\toCM^*
\M\seqpar{\cproc{\alpha_{k+2}\alpha'}  \parallel \Pi(a_1\cdots a_{k+1})} \ChanPar \Gamma(a_1\cdots a_{k+1})$$
which we obtain by repeatedly applying Lemma \ref{apx:lemma:mseqpar_sim_SF} and then Lemma \ref{apx:lemma:leftmost_term_simulation}.
\end{proof}

\begin{lemma}\label{apx:lemma:term_sideeffect_pop}
If 
$\cproc{a\alpha\alpha'} \parallel \Pi \ChanPar \Gamma \toCF 
	\cproc{\alpha\alpha'} \parallel \Pi' \ChanPar \Gamma' \toCF^* 
	\cproc{\alpha'} \parallel \Pi'' \ChanPar \Gamma''$ 
where ${a \in \Sigma}$, 
	  $\alpha \in \ComN^*$, 
	  $\alpha' \in (\NComN \ComN)^*$, 
	  ${\alpha \toSF^* w \in \CommTermWords}$, 
	  $\Pi'' = \Pi' \oplus \Pi(w)$, 
	  $\Gamma'' = \Gamma' \oplus \Gamma(w)$
then 
$$\M\seqpar{\cproc{a\alpha\alpha'} \parallel \Pi} \ChanPar \Gamma \toCM \M\seqpar{\cproc{\alpha\alpha'} \parallel \Pi'} \ChanPar \Gamma' \toCM \M\seqpar{\cproc{\alpha'} \parallel \Pi''} \ChanPar \Gamma''.$$
\end{lemma}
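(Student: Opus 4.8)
The plan is to verify the two $\toCM$-steps in the conclusion separately. For the first one, $\M\seqpar{\cproc{a\alpha\alpha'}\parallel\Pi}\ChanPar\Gamma\toCM\M\seqpar{\cproc{\alpha\alpha'}\parallel\Pi'}\ChanPar\Gamma'$, I would simply invoke Lemma~\ref{apx:lemma:leftmost_term_simulation}: since $\alpha\in\ComN^{*}\subseteq(\NonT\cup\{\epsilon\})\ComN^{*}$ and $\alpha'\in(\NComN\ComN)^{*}$, the hypotheses of that lemma hold, so one applies its first assertion when $a\in\ComSigma$ (a send, spawn, or program-point label, in which case $\Pi'=\Pi\oplus\Pi(a)$, $\Gamma'=\Gamma\oplus\Gamma(a)$, matching whichever standard-semantics axiom fired) and its second assertion when $a=\rec{c}{m}$ is a receive (in which case $\Pi'=\Pi$ and $\Gamma=\Gamma'\oplus\Gamma(\snd{c}{m})$). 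Only the second step needs a fresh argument.

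For the second step I would write $\alpha=C_{1}\cdots C_{n}$ with each $C_{i}\in\ComN$. Because a commutative non-terminal derives only commutative symbols, a derivation $\alpha\toSF^{*}w\in\CommTermWords$ gives $\M(w)=\bigoplus_{i=1}^{n}\M(w_{i})$ for suitable $C_{i}\toSF^{*}w_{i}$ with $w_{i}\in\ComSigma^{*}$; hence $\M(w)\in\M\seqpar{C_{1}\cdots C_{n}}$ and, $w$ being a word over commutative terminals, $\M(w)\in\TermCache$. Then I would fix an arbitrary $\cproc{\lambda}\parallel\pi\in\M\seqpar{\cproc{\alpha'}\parallel\Pi''}\ChanPar\Gamma''$, so $\lambda\in\M\seqpar{\alpha'}$, and, using $\Pi''=\Pi'\oplus\Pi(w)$, write $\pi=\pi'\oplus\Pi(w)$ with $\pi'\in\M\seqpar{\Pi'}$. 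Since $\M(w)\in\M\seqpar{C_{1}\cdots C_{n}}$, the configuration $\cproc{\M(w)\,\lambda}\parallel\pi'$ belongs to $\M\seqpar{\cproc{C_{1}\cdots C_{n}\,\alpha'}\parallel\Pi'}=\M\seqpar{\cproc{\alpha\alpha'}\parallel\Pi'}$; moreover, when $\alpha'\neq\epsilon$, $\lambda$ begins with the leading non-commutative non-terminal of $\alpha'$, so $\cproc{\M(w)\,\lambda}$ has the shape $M\,X\,\gamma\in\DelayedControl\cdot\CallStack^{\M}$ with $M=\M(w)\in\TermCache$ required by rule~(\ref{def:con_mult_sem_disp}). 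That rule then yields
$$\cproc{\M(w)\,\lambda}\parallel\pi'\ChanPar\Gamma'\;\toCM\;\cproc{\lambda}\parallel\pi'\oplus\Pi(\M(w))\ChanPar\Gamma'\oplus\Gamma(\M(w)),$$
and, as $\Pi(\M(w))=\Pi(w)$ and $\Gamma(\M(w))=\Gamma(w)$ by definition, the right-hand side is exactly $\cproc{\lambda}\parallel\pi\ChanPar\Gamma''$. Since $\cproc{\lambda}\parallel\pi$ was arbitrary, this establishes the second $\toCM$-step.

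The hard part will be the bookkeeping in this second step: checking that $\M(w)$ genuinely is a term-cache element of $\M\seqpar{\alpha}$ (which hinges on commutative non-terminals deriving only commutative terminals, and on derivations of $C_{1}\cdots C_{n}$ factoring through the $C_{i}$), and that this summary cache sits precisely above the topmost remaining non-commutative separator of $\alpha'$, so that rule~(\ref{def:con_mult_sem_disp}) is enabled with exactly that cache and no residual commutative non-terminals inside it. Matching the freshly-spawned processes $\Pi(w)$ and the channel increment $\Gamma(w)$ across the two semantics is then immediate from the definitions $\Pi(w):=\Pi(\M(w))$, $\Gamma(w):=\Gamma(\M(w))$. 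The degenerate case $\alpha'=\epsilon$, in which the standard-semantics process terminates, would be handled in the same fashion but using rule~(\ref{def:con_mult_sem_non-term}) in place of~(\ref{def:con_mult_sem_disp}).
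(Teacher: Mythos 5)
Your proposal is correct and matches the paper's own argument: the first step is discharged by Lemma~\ref{apx:lemma:leftmost_term_simulation}, and the second by noting that $M=\M(w)\in\TermCache$ prepended to any $\lambda\in\M\seqpar{\alpha'}$ gives an element of $\M\seqpar{\cproc{\alpha\alpha'}\parallel\Pi'}$ to which rule~(\ref{def:con_mult_sem_disp}) applies, your extra bookkeeping (factoring the derivation of $w$ through the $C_i$ and checking $\M(w)$ is a term cache) only making explicit what the paper asserts. The one divergence is your closing remark on $\alpha'=\epsilon$: rule~(\ref{def:con_mult_sem_non-term}) requires a $\MixedCache$ and leaves the program-point labels of $w$ in the residual cache, so it does not yield $\M\seqpar{\epsilon}$ exactly --- but the paper's own proof likewise only treats the case where $\alpha'$ begins with a non-commutative non-terminal, so this edge case is no worse than in the original.
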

\begin{proof}
For
 $\M\seqpar{\cproc{a\alpha\alpha'} \parallel \Pi} \ChanPar \Gamma \toCM 
  \M\seqpar{\cproc{\alpha\alpha'} \parallel \Pi'} \ChanPar \Gamma' $
we appeal to Lemma \ref{apx:lemma:leftmost_term_simulation}.
Now since $\alpha \toSF^* w$ we have 
$\cproc{\alpha\alpha'} \parallel \Pi' \ChanPar \Gamma' \toCF^* 
 \cproc{\alpha'} \parallel \Pi' \oplus \Pi(w) \ChanPar \Gamma' \oplus \Gamma(w)$.

Let $M = \M(w)$ and
take $\cproc{\gamma} \parallel \pi' \oplus \Pi(M) \ChanPar \Gamma' \oplus \Gamma(M) \in 
{\M\seqpar{\cproc{\alpha'} \parallel \Pi' \oplus \Pi(w)} \ChanPar \Gamma' \oplus \Gamma(w)}$
Then note
$\cproc{M\gamma} \parallel \pi' \in \M\seqpar{\cproc{\alpha\alpha'} \parallel \Pi'}$ 
and using rule \ref{def:con_mult_sem_disp}
$$\cproc{M\gamma} \parallel \pi' \ChanPar \Gamma' \toCM \cproc{\gamma} \parallel \pi' \oplus \Pi(M) \ChanPar \Gamma' \oplus \Gamma(M).$$
\end{proof}

\begin{lemma}\label{apx:lemma:nonterm_sideeffect_partialpop}
If 
$\cproc{a\alpha\alpha'}   \parallel \Pi \ChanPar \Gamma \toCF 
	\cproc{\alpha\alpha'} \parallel \Pi' \ChanPar \Gamma' \toCF^* 
	\cproc{w\alpha'}      \parallel \Pi' \ChanPar \Gamma' \toCF^* 
	\cproc{w'\alpha'}     \parallel \Pi'' \ChanPar \Gamma''$ 
where ${a \in \Sigma}$, 
	  $\alpha \in \ComN^*$, 
	  $\alpha' \in (\NComN \ComN)^*$, 
	  $\alpha'' \in \NComN^*$ and
	  ${\alpha \toSF^* w \in \CommWords}$, 
	  $w' \in \CommNonTermWords$,
	  $\Pi'' = \Pi' \oplus \Pi(w)$, 
	  $\Gamma'' = \Gamma' \oplus \Gamma(w)$
then 
$$\M\seqpar{\cproc{a\alpha\alpha'} \parallel \Pi} \ChanPar \Gamma \toCM \M\seqpar{\cproc{\alpha\alpha'} \parallel \Pi'} \ChanPar \Gamma' \toCM  \cproc{\M(w') \cdot \M\seqpar{\alpha'}} \parallel \M\seqpar{\Pi''} \ChanPar \Gamma''.$$
\end{lemma}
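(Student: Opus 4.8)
The plan is to follow the proof of Lemma~\ref{apx:lemma:term_sideeffect_pop} verbatim, with one change: the commutative prefix $\alpha$ now derives a word $w$ whose non-terminals are \emph{not} fully evaluated, so the clean-up on the alternative side is done by rule~(\ref{def:con_mult_sem_non-term}) (``block and abandon the pre-computation'') rather than rule~(\ref{def:con_mult_sem_disp}). Write $\alpha = C_1\cdots C_p$ with each $C_i\in\ComN$. Since a commutative non-terminal can only rewrite to symbols in $\ComSigma\cup\ComN$, every word derivable from $\alpha$ — in particular $w$ — lies in $(\ComSigma\cup\ComN)^{*}$; hence $\M(w)\in\MixedCache$, and $w'$ is just $w$ with all terminal occurrences deleted, so $\M(w')=\M(w)\restriction\ComN$. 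I record this at the outset since it is what lets the same $M$-argument go through in the mixed-cache case.

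\textbf{First step.} The transition $\M\seqpar{\cproc{a\alpha\alpha'}\parallel\Pi}\ChanPar\Gamma \toCM \M\seqpar{\cproc{\alpha\alpha'}\parallel\Pi'}\ChanPar\Gamma'$ is read off directly from Lemma~\ref{apx:lemma:leftmost_term_simulation}: Claim~1 when $a\in\ComSigma$ (so that $\Pi'=\Pi\oplus\Pi(a)$, $\Gamma'=\Gamma\oplus\Gamma(a)$, matching the standard step), Claim~2 when $a$ is a receive; here one uses $\alpha\in\ComN^{*}\subseteq(\NonT\cup\{\epsilon\})\ComN^{*}$ to meet those hypotheses.

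\textbf{Second step.} It remains to establish $\M\seqpar{\cproc{\alpha\alpha'}\parallel\Pi'}\ChanPar\Gamma' \toCM \cproc{\M(w')\cdot\M\seqpar{\alpha'}}\parallel\M\seqpar{\Pi''}\ChanPar\Gamma''$, i.e.\ that every state on the right is the $\toCM$-image of some state on the left. The crucial observation is that the cache $\M(w)$ already occurs inside $\M\seqpar{\alpha}$: a derivation $C_1\cdots C_p\toSF^{*}w$ splits into per-factor derivations $C_i\toSF^{*}w_i$ with $w\eqvI w_1\cdots w_p$ (a standard decomposition for context-free derivations — note the $C_i$, being mutually commutative, already form the congruence class of the multiset $[C_1,\dots,C_p]$), so $\M(w)=\bigoplus_i\M(w_i)\in\M\seqpar{C_1\cdots C_p}=\M\seqpar{\alpha}$ by $\eqvI$-invariance of Parikh images. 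Now fix a target state $\cproc{\M(w')\cdot\delta}\parallel\pi$ with $\delta\in\M\seqpar{\alpha'}$ and $\pi\in\M\seqpar{\Pi''}$; since $\Pi''=\Pi'\oplus\Pi(w)$ and $\M\seqpar{\cdot}$ leaves bare spawned processes intact, $\pi=\pi'\oplus\Pi(w)$ for some $\pi'\in\M\seqpar{\Pi'}$. Then $\M(w)\cdot\delta\in\M\seqpar{\alpha}\cdot\M\seqpar{\alpha'}=\M\seqpar{\alpha\alpha'}$, hence $\cproc{\M(w)\cdot\delta}\parallel\pi'\in\M\seqpar{\cproc{\alpha\alpha'}\parallel\Pi'}$, and applying rule~(\ref{def:con_mult_sem_non-term}) with cache $M:=\M(w)\in\MixedCache$ (legitimate, as $\M(w)\cdot\delta\in\DelayedControl\cdot\CallStack^{\M}$) gives
\[
\cproc{\M(w)\cdot\delta}\parallel\pi'\ChanPar\Gamma' \;\toCM\; \cproc{\bigl(\M(w)\restriction(\ComN\cup\PPL)\bigr)\cdot\delta}\parallel\pi'\oplus\Pi(w)\ChanPar\Gamma'\oplus\Gamma(w),
\]
using $\Pi(M)=\Pi(w)$ and $\Gamma(M)=\Gamma(w)$. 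As $\Gamma'\oplus\Gamma(w)=\Gamma''$ and $\pi'\oplus\Pi(w)=\pi$, this is the claimed target once the residual cache $\M(w)\restriction(\ComN\cup\PPL)$ is identified with $\M(w')$; ranging over all $\delta$ and $\pi'$ then yields the $\toCM$ step in the required set sense.

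\textbf{Where the work is.} Two points need care. The minor one is the decomposition fact $\M(w)\in\M\seqpar{C_1\cdots C_p}$, proved by induction on the length of the $\toSF$-derivation, but requiring precision about ``leftmost'' derivations over congruence classes. The more delicate one is the last identification: rule~(\ref{def:con_mult_sem_non-term}) keeps the program-point labels of $w$ inside the residual cache, whereas on the standard side these labels are consumed when $w$ is rewritten to $w'$. So the honest statement of the residue is $\M(w)\restriction(\ComN\cup\PPL)$, which agrees with $\M(w')=\M(w)\restriction\ComN$ on the non-terminals; one has to argue that the retained labels are immaterial to the correspondence being maintained (they can only enlarge the set of labels covered, which is exactly what the alternative coverability problem allows). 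Everything else is a transcription of Lemmas~\ref{apx:lemma:mseqpar_sim_SF}--\ref{apx:lemma:term_sideeffect_pop}.
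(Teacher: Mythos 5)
Your proof is correct and follows essentially the same route as the paper's: the first transition is read off Lemma~\ref{apx:lemma:leftmost_term_simulation}, and the second is obtained by exhibiting, for each target state $\cproc{\M(w')\cdot\delta}\parallel\pi$, the predecessor with cache $\M(w)$ inside $\M\seqpar{\cproc{\alpha\alpha'}\parallel\Pi'}$ and applying rule~(\ref{def:con_mult_sem_non-term}) with $\Pi(\M(w))=\Pi(w)$ and $\Gamma(\M(w))=\Gamma(w)$. The label discrepancy you flag --- the rule's residue is $\M(w)\restriction(\ComN\cup\PPL)$ whereas $\M(w')=\M(w)\restriction\ComN$ --- is genuine but is silently identified in the paper's own proof (which simply names the residue $\M(w')$), so your explicit observation that the retained program-point labels are harmless for the correspondence being maintained makes your write-up, if anything, more careful than the original.
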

\begin{proof}
For
 $\M\seqpar{\cproc{a\alpha\alpha'} \parallel \Pi} \ChanPar \Gamma \toCM 
  \M\seqpar{\cproc{\alpha\alpha'} \parallel \Pi'} \ChanPar \Gamma' $
we appeal to Lemma \ref{apx:lemma:leftmost_term_simulation}.
Now since $\alpha \toSF^* w$ we have 
$\cproc{\alpha\alpha'} \parallel \Pi' \ChanPar \Gamma' \toCF^* 
\cproc{w'\alpha'} \parallel \Pi' \oplus \Pi(w) \ChanPar \Gamma' \oplus \Gamma(w)$.

Let $M' = M(w')$ then

$\cproc{M'\gamma} \parallel \pi' \oplus \Pi(M) \ChanPar \Gamma' \oplus \Gamma(M) \in \cproc{\M(w') \cdot \M\seqpar{\alpha'}} \parallel \M\seqpar{\Pi' \oplus \Pi(w)} \ChanPar \Gamma' \oplus \Gamma(w)$
and
$\cproc{M\gamma} \parallel \pi' \in \M\seqpar{\cproc{\alpha\alpha'} \parallel \Pi'}$ such that $M = \M(w)$.
Using rule \ref{def:con_mult_sem_non-term}
$$\cproc{M\gamma} \parallel \pi' \ChanPar \Gamma' \toCM \cproc{M'\gamma} \parallel \pi' \oplus \Pi(M) \ChanPar \Gamma' \oplus \Gamma(M).$$
\end{proof}

\begin{lemma}\label{apx:lemma:sim_fromcom_pop_or_nonterm}
Let $X \in \NonT$, $\alpha \in (\NComN\ComN^*)^*$, $w \in \ComSigma$ and $\beta,\alpha' \in \ComN^*$
\begin{enumerate}
	\item If $\cproc{X\beta\alpha} \ChanPar \Gamma(\epsilon) \Gamma(\epsilon) \toCF^* \cproc{\alpha} \parallel \Pi(w) \ChanPar \Gamma(w)$
	      then 
	      ${\M\seqpar{\cproc{X\beta\alpha}} \ChanPar \Gamma(\epsilon) \toCM^* \M\seqpar{\cproc{\alpha} \parallel \Pi(w)} \ChanPar \Gamma(w)}$
	\item If $\cproc{X\beta\alpha} \ChanPar \Gamma(\epsilon) \toCM^* \cproc{\alpha'\alpha} \parallel \Pi(w) \ChanPar \Gamma(w) $, then $\M\seqpar{\cproc{X\beta\alpha}} \ChanPar \Gamma(\epsilon) \toCF^* {\cproc{\M(\alpha')\cdot\M\seqpar{\alpha}}} \parallel \M\seqpar{\Pi(w)} \ChanPar \Gamma(w)$
\end{enumerate}
\end{lemma}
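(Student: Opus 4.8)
The plan is to treat both parts uniformly: analyse how the standard run consumes the commutative block $X\beta$ at the head of the process, read off from it a single precomputed cache lying in $\M\seqpar{X\beta}$, and then fire exactly one dispatch rule of the alternative concurrent semantics. This is essentially the argument already used for Lemmas~\ref{apx:lemma:term_sideeffect_pop} and~\ref{apx:lemma:nonterm_sideeffect_partialpop}, but with the entire head block $X\beta$ --- rather than a cache sitting below a leading terminal --- being summarised at once.

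First I would record the structural fact that, because $\alpha$ begins with a non-commutative non-terminal (or $\alpha=\epsilon$) and $X\beta = Y_0 Y_1 \cdots Y_n$ consists solely of commutative non-terminals, no step of a standard run from $\cproc{X\beta\alpha}$ can reach past $X\beta$ before it is wholly consumed: under $\eqvI$ the leading separator of $\alpha$ cannot be permuted left of any $Y_i$, so every leftmost step rewrites some $Y_i$, and every terminal emitted is commutative (a commutative non-terminal never produces a receive action). Hence in Part~1 the run factors, up to $\eqvI$, into a context-free derivation $Y_i \toSF^* w_i$ with each $w_i$ terminal and $\M(w)=\bigoplus_i \M(w_i)$, interleaved with the emissions that realise the side-effects $\Pi(w)$ and $\Gamma(w)$; in Part~2 it factors identically except that the residue of the block is a word $\alpha'\in\ComN^*$, so $Y_i \toSF^* u_i$ with $\bigoplus_i \M(u_i) = \M(w)\oplus\M(\alpha')$. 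Extracting this decomposition is the one place I expect to invoke the standard splitting of context-free leftmost derivations over a sequence of non-terminals, noting that $\M(\cdot)$ is $\eqvI$-invariant so the resulting multiset is well defined.

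For Part~1, the multiset $\M(w)$ then lies in the precomputed set $\M\seqpar{X\beta}$, hence for every $\gamma'\in\M\seqpar{\alpha}$ the element $\M(w)\cdot\gamma'$ belongs to $\M\seqpar{X\beta\alpha}$; since $w\in\ComSigma^*$ this is a $\DelayedControl$-state carrying a $\TermCache$ on top of the separator heading $\M\seqpar{\alpha}$, so rule~(\ref{def:con_mult_sem_disp}) applies once, dispatching $\Pi(\M(w))=\Pi(w)$ and $\Gamma(\M(w))=\Gamma(w)$ and landing in $\M\seqpar{\cproc{\alpha}\parallel\Pi(w)}\ChanPar\Gamma(w)$; as $\gamma'$ is arbitrary and the set-lifted $\toCM$ only requires a predecessor for each target, this finishes Part~1 (the degenerate case $\alpha=\epsilon$ being handled separately). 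For Part~2 I would instead note that $M^{\star}:=\M(w)\oplus\M(\alpha')\in\M\seqpar{X\beta}$ genuinely contains non-terminals, so $M^{\star}\cdot\gamma'$ is a $\MixedCache$-state and rule~(\ref{def:con_mult_sem_non-term}) applies: it dispatches $\Pi(M^{\star})=\Pi(w)$ and $\Gamma(M^{\star})=\Gamma(w)$ (the non-terminal part of $M^{\star}$ contributes no spawns or sends) and leaves the blocked residue $M^{\star}\restriction(\ComN\cup\PPL)=\M(\alpha')$ --- up to the inert label symbols of $w$ --- on top, which is exactly an element of $\cproc{\M(\alpha')\cdot\M\seqpar{\alpha}}\parallel\M\seqpar{\Pi(w)}\ChanPar\Gamma(w)$.

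The main obstacle I anticipate --- and the reason the two cases are separated, and the caches $\Cache$ and $\DelayedControl$ defined as disjoint unions --- is the bookkeeping of the commutative head block under $\eqvI$: one must verify that whatever Parikh image the standard run has produced and left behind is always realisable as a \emph{single} element of $\M\seqpar{X\beta}$, and one must apply the dispatch rules~(\ref{def:con_mult_sem_disp}) and~(\ref{def:con_mult_sem_non-term}) to configurations of exactly the right syntactic type (a $\TermCache$ versus a genuine $\MixedCache$ sitting on a separator). Once these points are settled, threading the single-block simulations through the remainder of a longer run, using the already-established Lemmas~\ref{apx:lemma:leftmost_term_simulation}--\ref{apx:lemma:nonterm_sideeffect_partialpop}, should be routine.
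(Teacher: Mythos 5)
There is a genuine gap: you have silently strengthened the hypothesis to \lq\lq$X\beta$ consists solely of commutative non-terminals\rq\rq, i.e.\ $X \in \ComN$, whereas the lemma is stated for arbitrary $X \in \NonT$ and is needed in that generality. When $X \in \NComN$ the summarisation function gives $\M\seqpar{X\beta\alpha} = X \cdot \M\seqpar{\beta\alpha}$: the leading non-commutative non-terminal is \emph{not} folded into any cache, so no element of the source set has the shape $\M(w)\cdot\gamma'$ (or $M^{\star}\cdot\gamma'$), and a single application of rule~(\ref{def:con_mult_sem_disp}) or~(\ref{def:con_mult_sem_non-term}) cannot account for the actions emitted while $X$ itself is being derived. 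This case is not hypothetical: Claim~2 of the lemma is invoked in the proof of Proposition~\ref{apx:prop:conc_reduction_simulation} precisely with a head $X \in \NComN$ (the sub-case $\alpha = X\alpha_0\alpha_1$, $\alpha' \in \ComN^*(\NComN\ComN^*)^*$), and the proof of Lemma~\ref{apx:lemma:sim_fromncom_push_or_nonterm} reuses Claim~1's argument for a head non-terminal that need not be commutative. The paper's proof covers this by first unfolding the head with genuine $\toSM$ steps --- it extracts a derivation $X\beta\alpha \toSF^* a\alpha_0\alpha$ with $a \in \ComSigma \cup \{\epsilon\}$, tracks it in the alternative semantics via Lemma~\ref{apx:lemma:mseqpar_sim_SF}, and only then dispatches via Lemma~\ref{apx:lemma:term_sideeffect_pop} (resp.\ Lemma~\ref{apx:lemma:nonterm_sideeffect_partialpop}). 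Your plan never invokes Lemma~\ref{apx:lemma:mseqpar_sim_SF} or any sequential-simulation step, so the $X \in \NComN$ case simply has no proof; the fix is exactly that missing ingredient (unfold $X$ by rules~(\ref{def:seq_mult_sem_nonblock})--(\ref{def:seq_mult_sem_action}) lifted through rule~(\ref{def:con_mult_sem_interleave}), then dispatch).

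For the restricted situation $X \in \ComN$ your route is sound and in fact a mild simplification of the paper's: instead of emitting a leading terminal $a$ and caching only the remainder $\alpha_0$ (as Lemmas~\ref{apx:lemma:term_sideeffect_pop} and~\ref{apx:lemma:nonterm_sideeffect_partialpop} do), you dispatch the whole pre-cached block in one step, and your $\TermCache$/$\MixedCache$ case split correctly mirrors the choice between rules~(\ref{def:con_mult_sem_disp}) and~(\ref{def:con_mult_sem_non-term}). The factorisation of the standard run into derivations of the head block, and the handling of residual program-point labels under $M \restriction (\ComN \cup \PPL)$, are glossed at about the same level of rigour as in the paper itself, so I would not count those against you; the missing non-commutative-head case is the substantive defect.
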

\begin{proof}[Claim 1]
Then $X\beta\alpha \toSF^* a\alpha_0\alpha$ where 
$a \in \ComSigma \union \{\epsilon\}$ such that $a\alpha_0 \toSF^* w$ 
so by Lemma \ref{apx:lemma:mseqpar_sim_SF} $\M\seqpar{X\beta\alpha} \toSM^* \M\seqpar{a\alpha_0\alpha}$.
By Lemma \ref{apx:lemma:term_sideeffect_pop}
$\M\seqpar{\cproc{a\alpha_0\alpha}} \ChanPar \Gamma(\epsilon) \toCM^* \M\seqpar{\cproc{\alpha} \parallel \Pi(w)} \ChanPar \Gamma(w)$ and clearly also $\M\seqpar{\cproc{X\beta\alpha}} \ChanPar \Gamma(\epsilon) \toCM^* \M\seqpar{\cproc{a\alpha_0\alpha}} \parallel \Pi(w) \ChanPar \Gamma(w)$.
\end{proof}
\begin{proof}[Claim 2]
Then $X\beta\alpha \toSF^* a\alpha_0\alpha$ where 
$a \in \ComSigma \union \{\epsilon\}$ such that $a\alpha_0 \toSF^* w\alpha'$ 
so by Lemma \ref{apx:lemma:mseqpar_sim_SF} $\M\seqpar{X\beta\alpha} \toSM^* \M\seqpar{a\alpha_0\alpha}$.
By Lemma \ref{apx:lemma:nonterm_sideeffect_partialpop}
$\M\seqpar{\cproc{a\alpha_0\alpha}} \ChanPar \Gamma(\epsilon) \toCM^* \M\seqpar{\cproc{\alpha'\alpha} \parallel \Pi(w)} \ChanPar \Gamma(w)$ and clearly also $\M\seqpar{\cproc{X\alpha}} \ChanPar \Gamma(\epsilon) \toCM^* \M\seqpar{\cproc{a\alpha_0\alpha}} \parallel \Pi(w) \ChanPar \Gamma(w)$.
\end{proof}

\begin{lemma}\label{apx:lemma:sim_fromncom_push_or_nonterm}
Let $X \in \NComN$, $\beta,\beta' \in \ComN^*$ and $\alpha,\alpha' \in (\NComN\ComN^*)^*$.
\begin{enumerate}
	\item If $\cproc{X\beta\alpha} \ChanPar \Gamma(\epsilon) \toCF^*  
	          \cproc{\alpha'\alpha} \parallel \Pi(w) \ChanPar \Gamma(w)$
	      then 
	      $\M\seqpar{\cproc{X\beta\alpha}} \ChanPar \Gamma(\epsilon) \toCM^* 
	      \M\seqpar{\cproc{\alpha'\alpha} \parallel \Pi(w)} \ChanPar \Gamma(w)$
	\item If $\cproc{X\beta\alpha} \ChanPar \Gamma(\epsilon) \toCF^* \cproc{(\rec{c}{m})\beta'\alpha'\alpha} \parallel \Pi(w) \ChanPar \Gamma(w)$, then $\M\seqpar{\cproc{X\alpha} \parallel \Pi(w)} \ChanPar \Gamma(\epsilon) \toCM^* \M\seqpar{\cproc{(\rec{c}{m})\beta'\alpha'\alpha}} \ChanPar \Gamma(\epsilon)$.
\end{enumerate}
\end{lemma}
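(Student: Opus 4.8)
The plan is to prove both claims by induction on the length of the $\toCF^*$ run, using the already-established companion Lemma~\ref{apx:lemma:sim_fromcom_pop_or_nonterm} (which covers an arbitrary non-terminal at the head whose computation eventually pops or blocks on a cached non-terminal) and case-splitting on the $\calG$-rule used to reduce the leftmost non-terminal. Two features of the situation keep this manageable. First, the source configuration $\cproc{X\beta\alpha}\ChanPar\Gamma(\epsilon)$ has a single process and empty channels, so along the run the only additions to the context are spawned processes $\Pi(w)$ and messages $\Gamma(w)$ for a word $w$ over $\ComSigma$; using that send, spawn and label actions are all commutative, the run may be taken in a normal form in which the head process performs a leftmost derivation and dispatches its commutative side-effects as it goes, and, in the situation of Claim~1, never performs a receive. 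Second, because $X\in\NComN$ the first reduction step uses a rule of one of the forms $X\to a\,B$, $X\to a$ (with $a\in\Sigma\cup\{\epsilon\}$), or $X\to B\,C$; in particular $X$ cannot be absorbed into a cache by rule~(\ref{def:seq_mult_sem_nonblock}), which is exactly why a lemma separate from Lemma~\ref{apx:lemma:sim_fromcom_pop_or_nonterm} is needed.

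For $X\to a\,B$ and $X\to a$ with $a$ a label, send or spawn, I would perform the matching single $\toCM$-step via Lemma~\ref{apx:lemma:leftmost_term_simulation} and then apply the induction hypothesis to the shorter tail run from $\cproc{B\,\beta\alpha}$ or $\cproc{\beta\alpha}$, which now has a non-terminal at its head and is discharged either by Lemma~\ref{apx:lemma:sim_fromcom_pop_or_nonterm} or by the induction hypothesis according to the behaviour of that subrun. When $a=\rec{c}{m}$ and the head becomes blocked on this receive the run has already reached the shape of Claim~2, so it suffices to lift the leftmost derivation of $X$ up to that point with Lemma~\ref{apx:lemma:mseqpar_sim_SF}. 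The essential case is $X\to B\,C$ with $C$ non-commutative: here $X\beta\alpha\toSF B\,C\,\beta\alpha$ is a genuine push, with $C\,\beta\,\alpha\in(\NComN\ComN^{*})^{*}$; I would lift this step via rule~(\ref{def:seq_mult_sem_block}) to $\M\seqpar{\cproc{X\beta\alpha}}\toCM\M\seqpar{\cproc{B\,C\,\beta\alpha}}$ and then invoke Lemma~\ref{apx:lemma:sim_fromcom_pop_or_nonterm} on the head $B$ over the extended context $C\,\beta\,\alpha$, distinguishing whether the $B$-subcomputation pops completely (its Claim~1, via Lemma~\ref{apx:lemma:term_sideeffect_pop}), stalls on a non-terminal (its Claim~2, via Lemma~\ref{apx:lemma:nonterm_sideeffect_partialpop}), or blocks on a receive, and in each case continuing by the induction hypothesis with head $C$ over context $\beta\,\alpha$. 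The remaining case $X\to B\,C$ with $C$ commutative is analogous, with $C\,\beta$ absorbed into the commutative prefix of the new head $B$ instead of onto the stack.

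The step I expect to be the main obstacle is the bookkeeping imposed by the flattening map $\M\seqpar{\cdot}$: a standard-semantics process $w\,\delta$ with $w\in\ComN^{*}$ corresponds to a whole family of $\toCM$-states $M\,\gamma$ with $M=\M(v)$ for $v$ derivable from $w$, so each inductive step must pick the right member of the relevant set $\M\seqpar{\cdot}$ and, crucially, must pair the standard-semantics steps that render the commutative side-effects of a popped frame effective with exactly one of the two $\toCM$-rules (\ref{def:con_mult_sem_disp}) (cache purely commutative, dispatched in full) and (\ref{def:con_mult_sem_non-term}) (cache still containing commutative non-terminals, so only the send/spawn/label content is dispatched and the process then blocks). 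Aligning these, and in particular accounting for commutative sub-computations that do not terminate — where rule~(\ref{def:seq_mult_sem_nonblock}) must abandon the precomputation partway through — is precisely what Lemma~\ref{apx:lemma:nonterm_sideeffect_partialpop} is for, and threading it through the push case of Claim~1 is the crux of the argument.
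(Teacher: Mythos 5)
Your argument is essentially sound and reaches the same target, but it is organised differently from the paper's proof. The paper does not induct on the length of the $\toCF$-run at all: for Claim~1 it factorises the run once, into a prefix $\cproc{X\beta\alpha} \ChanPar \Gamma(\epsilon) \toCF^{*} \cproc{X'\alpha'\alpha} \parallel \Pi(w_0) \ChanPar \Gamma(w_0)$ that is lifted wholesale by Lemma~\ref{apx:lemma:leftmost_term_simulation_with_comsideeffects}, followed by a final phase in which the remaining head $X'$ produces only commutative side-effects and pops, handled by re-using the argument of Lemma~\ref{apx:lemma:sim_fromcom_pop_or_nonterm} (Claim~1); for Claim~2 it exposes the receive as $\cproc{X'X''\beta'\alpha'\alpha}$ with $X''\toSF^{*}\rec{c}{m}$ and again concludes with Lemmas~\ref{apx:lemma:leftmost_term_simulation_with_comsideeffects} and~\ref{apx:lemma:mseqpar_sim_SF}. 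Your induction on the run with a case split on the first $\calG$-rule applied to $X$ effectively re-proves inline the content of Lemma~\ref{apx:lemma:leftmost_term_simulation_with_comsideeffects}, which you never invoke; what the paper's route buys is that all the step-by-step bookkeeping is concentrated in that one previously proved lemma, while your route buys a more self-contained and arguably more transparent treatment of the push case $X\to B\,C$ with $C\in\NComN$, which the paper's two-line argument leaves largely implicit.

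Two points need tightening for your induction to go through formally. First, after the initial step the configuration is no longer of the form required by the lemma (a single process with empty channels): it carries ambient spawned processes $\Pi(w_0)$ and channel contents $\Gamma(w_0)$, so the statement you induct on must be strengthened with an ambient context (or an accumulated emitted word), exactly the generality that Lemmas~\ref{apx:lemma:leftmost_term_simulation} and~\ref{apx:lemma:leftmost_term_simulation_with_comsideeffects} already build in; as stated, the IH is not applicable to the tail run. Second, in the push case your phrase \lq\lq in each case continuing by the induction hypothesis with head $C$\rq\rq\ is only right when the $B$-subcomputation pops completely; if it stalls on commutative non-terminals or blocks on a receive the run never reaches $C$ (and the stall outcome is in fact excluded by the final shape demanded in Claim~1, while the blocking outcome needs the IH applied to $B$ itself when $B\in\NComN$, since Lemma~\ref{apx:lemma:sim_fromcom_pop_or_nonterm} does not cover receives). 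Both are repairs of bookkeeping and case-bounding rather than of the underlying idea, which matches the paper's.
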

\begin{proof}[Claim 1]
Then 
$$\cproc{X\beta\alpha} \ChanPar \Gamma(\epsilon) \toCF^* 
\cproc{X'\alpha'\alpha} \parallel \Pi(w_0) \ChanPar \Gamma(w_0)$$
such that $X' \toSF^* a\alpha_0$ where 
$a \in \ComSigma \union \{\epsilon\}$, $\alpha_0 \in \ComN^*$ such that $a\alpha_0 \toSF^* w_1$ and $w = w_0w_1$. 
By Lemma \ref{apx:lemma:leftmost_term_simulation_with_comsideeffects} 
$$\M\seqpar{\cproc{X\beta\alpha}} \ChanPar \Gamma(\epsilon) \toCM^* \M\seqpar{\cproc{X'\alpha'\alpha} \parallel \Pi(w_0)} \ChanPar \Gamma(w_0)$$
Then the proof of Lemma \ref{apx:lemma:sim_fromcom_pop_or_nonterm} Claim 1 applies to give the result.
\end{proof}
\begin{proof}[Claim 2]
Then 
$$\cproc{X\beta\alpha} \ChanPar \Gamma(\epsilon) \toCM^* \cproc{X'X''\beta'\alpha'\alpha} \parallel \Pi(w_0) \ChanPar \Gamma(w_0)$$
such that $X'' \toSF^* \rec{c}{m}$, and $X' \toSF^* w_1$ where $w=w_0w_1$. Hence
$$\cproc{X'X''\beta'\alpha'\alpha} \parallel \Pi(w_0) \ChanPar \Gamma(w_0) \toCM^* \cproc{(\rec{c}{m})\beta'\alpha'\alpha} \parallel \Pi(w) \ChanPar \Gamma(w).$$

so by Lemma \ref{apx:lemma:leftmost_term_simulation_with_comsideeffects} and Lemma \ref{apx:lemma:mseqpar_sim_SF}.
$$\M\seqpar{\cproc{X\beta\alpha}} \ChanPar \Gamma(\epsilon) \toCM^* 
  \M\seqpar{\cproc{(\rec{c}{m})\beta'\alpha'\alpha} \parallel \Pi(w)} \ChanPar \Gamma(w).$$
\end{proof}

For $\alpha_1,\ldots,\alpha_m \in \ComN^*$ and $Z_1,\ldots, Z_{m-1} \in \NComN$ define
\begin{align*}
\seqM(\alpha_1 Z_1\cdots \alpha_{m-1}Z_{m-1}\alpha_m) &:= \M(\alpha_1)Z_1 \cdots \M(\alpha_{m-1})Z_{m-1}\M(\alpha_m)\\
% \seqM(\cproc{\alpha}) &:= \cproc{\seqM(\alpha)}\\
\seqM(\Pi \parallel \Pi') &:= \seqM(\Pi) \parallel \seqM(\Pi')
\end{align*}

\begin{proposition}\label{apx:prop:conc_reduction_simulation}
If $\cproc{S} \ChanPar \Gamma(\epsilon) \toCF^* \Pi' \ChanPar \Gamma'$
then $\seqM(\cproc{S}) \ChanPar \Gamma(\epsilon) \toCM^* \seqM(\Pi') \ChanPar \Gamma'$
\end{proposition}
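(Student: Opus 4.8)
The plan is to argue by induction on the length $n$ of the run $\cproc{S} \ChanPar \Gamma(\epsilon) \toCF^{n} \Pi' \ChanPar \Gamma'$, after strengthening the statement so that the start configuration may be \emph{any} configuration of APCPS shape (every process of the form $h\,\vec{C}_0\,X_1\,\vec{C}_1\cdots X_j\,\vec{C}_j$ with $X_i \in \NComN$, each $\vec{C}_i \in \ComN^{*}$, and $h$ the control-state head), with $\seqM$ read as the evident extension to such configurations of the map $\seqM$ of this appendix. Because each $\toCF$-step rewrites a single process and touches only the channels, and because both $\toCM$ and $\seqM$ are compatible with parallel composition (just as already observed above for $\M\seqpar{\cdot}$), it is enough to follow the process selected at each step while carrying the rest of $\Pi$ unchanged; the one subtlety, explained below, is that certain groups of consecutive standard steps must be handled as a block rather than one at a time.

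The routine part is the simulation of every standard step that does \emph{not} push a commutative non-terminal. A leftmost derivation by a rule of type (i) or (ii), or of type (iii) $A\to B\,C$ with $C$ non-commutative, is matched directly by rule~(\ref{def:seq_mult_sem_action}), (\ref{def:seq_mult_sem_tailrec}) or (\ref{def:seq_mult_sem_block}) lifted to configurations via~(\ref{def:con_mult_sem_interleave}) --- Lemma~\ref{apx:lemma:mseqpar_sim_SF} does this book-keeping; a leading $\snd{c}{m}$, $\nu X$ or label of some process is matched by~(\ref{def:con_mult_sem_send}), (\ref{def:con_mult_sem_spawn}) or (\ref{def:con_mult_sem_label}), and a leading $\rec{c}{m}$ by~(\ref{def:con_mult_sem_rec}): these are exactly Lemmas~\ref{apx:lemma:leftmost_term_simulation} and~\ref{apx:lemma:leftmost_term_simulation_with_comsideeffects}. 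The real work is a type-(iii) step $A\to B\,C$ with $C$ \emph{commutative}: the standard semantics pushes $C$ onto the stack and may thereafter expand it incrementally, interleaved with the rest of the run, or even leave it partly expanded forever, whereas the alternative semantics has no rule that keeps a commutative non-terminal on the stack as a separate entry and must summarise $C$ \emph{in one shot} by rule~(\ref{def:seq_mult_sem_nonblock}). So here we cannot go step by step. Instead we cut the run at each such push and use a \emph{look-ahead}: we read off from the whole remaining run the word $v \in (\ComN\union\ComSigma)^{*}$ to which $C$'s sub-derivation has grown by the end (so $C \toSF^{*} v$), and fire~(\ref{def:seq_mult_sem_nonblock}) with exactly this $v$, folding $\M(v)$ into the cache at once. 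Later, when in the constructed $\toCM$-run that cache reaches the top --- which is precisely when the $\NComN$-separator guarding it is popped --- rules~(\ref{def:con_mult_sem_disp}) and~(\ref{def:con_mult_sem_non-term}) release its accumulated commutative effects in a single burst. That this batch release reproduces the piecemeal dispatch performed by the standard run is the content of Lemmas~\ref{apx:lemma:term_sideeffect_pop} and~\ref{apx:lemma:nonterm_sideeffect_partialpop}, and Lemmas~\ref{apx:lemma:sim_fromcom_pop_or_nonterm} and~\ref{apx:lemma:sim_fromncom_push_or_nonterm} package the resulting ``one process runs a burst of sequential-and-commutative activity from a non-terminal head until it pops that head or blocks on a $\rec{c}{m}$'' into the shape the induction consumes.

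The main obstacle is that the $\toCM$-run built this way is not the standard run with $\seqM$ applied pointwise but a \emph{re-ordering} of it: all side effects of a commutative non-terminal are collapsed to one dispatch point, which may be earlier than where some of those effects were emitted originally, and moreover at every stage we must stay at the \emph{specific} element $\seqM(\cdot)$ inside the sets $\M\seqpar{\cdot}$ on which the lemmas are phrased. One must check both that this re-ordering is always realisable and that it lands \emph{exactly} on $\seqM(\Pi') \ChanPar \Gamma'$. Two observations make it go through: (a) $\snd{c}{m}$, $\nu X$ and labels are commutative, so channels and the process multiset accumulate them order-insensitively --- batching them and moving them forward only enables more $\toCM$-behaviour and never alters what is finally present; and (b) the dispatch point of a cache is the pop of its guarding $\NComN$-separator, which in the standard run strictly precedes any observation (by a $\rec{c}{m}$) of the cache's contents, so no receive that fired in the standard run can be starved in the re-ordered run. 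Turning this into a precise invariant --- one that records, for every commutative non-terminal ever pushed, its already-dispatched prefix and its still-cached remainder, and verifies that these recombine into a legal $\toCM$-run ending precisely at $\seqM(\Pi')$ --- is the technical heart of the argument, and is where essentially all of Lemmas~\ref{apx:lemma:mseqpar_sim_SF}--\ref{apx:lemma:sim_fromncom_push_or_nonterm} are used.
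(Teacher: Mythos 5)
Your outline is sound, uses the appendix lemmas in exactly the roles they play in the paper, and correctly identifies the crux (one-shot summarisation of a commutative push, plus the re-ordering of commutative effects), but at that crux it takes a genuinely different route from the paper. You resolve the nondeterministic choice of $w$ in rule~(\ref{def:seq_mult_sem_nonblock}) by explicit look-ahead into the remainder of the standard run and then thread one specific $\toCM$-configuration forward, which is why you are forced to posit an invariant recording, for each pushed commutative non-terminal, its already-dispatched part and cached remainder --- and that invariant, which you yourself call the technical heart, is left unconstructed. The paper never threads a specific element at all: it works with the set-valued map $\M\seqpar{\cdot}$, which collects \emph{all} admissible precomputations, and lifts $\toCM$ to sets by the backward clause ``$U \toCM V$ iff every element of $V$ has a predecessor in $U$''; a chain of such set-steps from the singleton $\M\seqpar{\cproc{S}}$ yields an honest $\toCM$-run to \emph{each} element of the final set, and the particular element $\seqM(\Pi')$ is extracted only in the last line via $\seqM(\alpha)\in\M\seqpar{\alpha}$. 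This makes the choice of cached word self-resolving and eliminates precisely the bookkeeping your plan still owes. The paper also replaces your induction on raw run length and informal observations (a)/(b) by a concrete normalisation of the standard run (one active process per segment, receives delayed, sends and spawns hoisted, processes that have reached their final shape frozen) and an induction on the number of segments between configurations of a designated shape set $P$, with a case analysis on the shape of the active process; this is where the legitimacy of your re-ordering is actually proved. One small correction: a cache is dispatched when the material \emph{above} it is exhausted --- the $\NComN$ separator lies \emph{below} the cache and is exposed only after rules~(\ref{def:con_mult_sem_disp})/(\ref{def:con_mult_sem_non-term}) fire --- so your earlier phrasing (``when the cache reaches the top'') is the right one, not ``when the guarding separator is popped''. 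In short: same architecture and lemmas, workable plan, but your forward-threading with look-ahead buys directness at the cost of an invariant the paper's set-based formulation is specifically designed to avoid.
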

\begin{proof}
Let $\Pi^f \in \M[\Control]$ and define the set
$P_{\Pi^f} = \{ \alpha \mid \exists \Pi. \cproc{\alpha} \parallel \Pi = \Pi^f \}$.
further define the set of configurations
$P := \{ \Pi \ChanPar \Gamma \mid \forall \cproc{\alpha} \in \Pi, \alpha \in \NonT(\NComN\ComN^*)^* \union \NComSigma\ComN^*(\NComN\ComN^*)^* \union \NonT \union P_{\Pi^f}\}$.

Now suppose that for some $\Pi, \Pi'$ and $\Gamma, \Gamma'$
$$\Pi \ChanPar \Gamma := \Pi_0 \ChanPar \Gamma_0 \toCF^* 
  \Pi_1 \ChanPar \Gamma_1 \toCF^* \cdots \toCF^* \Pi_n \ChanPar \Gamma_n =: \Pi' \ChanPar \Gamma'$$ 
such that $\Pi_i \ChanPar \Gamma_i \in P$ for $i = 0,\ldots n$. Without loss of generality we can assume that for all $i = 0,...,n$, $\Pi_i = \Pi_i^a \parallel \Pi_i^f$ such that for all 
$\cproc{\alpha} \in \Pi_i^f$ we have $\cproc{\alpha} \in P_{\Pi^f}$ and 
$\cproc{\alpha}$ is not involved in any transitions in 
$\Pi_i \ChanPar \Gamma_i \toCF^* \Pi_n \ChanPar \Gamma_n$. 
Note that we are not loosing generality, since a reduction 
$\cproc{\alpha} \parallel \Pi \toCF^* \cproc{\alpha} \parallel \Pi'$ can either be pre-empted or goes through a process state in $\NComSigma\ComN^*(\NComN\ComN^*)^*$. 
Note this also means that $\Pi_{i+1}^f = \Pi_{i}^f \parallel {\Pi'}_{i}^f$. We further assume w.l.o.g that for each $i$ it is the case that 
$\Pi^a_i = \cproc{\alpha} \parallel \Pi'_i$ and $\Pi^a_{i+1} = \cproc{\alpha'} \parallel \Pi'_i \oplus \Pi(w)$ 
and $\Gamma_{k+1} \oplus \Gamma(w') = \Gamma_k \oplus \Gamma(w)$ 
for some $w \in \ComSigma^*$ and $w' \in \{\epsilon\} \union \ComSigma$,
i.e. during each $\Pi^a_i \ChanPar \Gamma_i \toCF^* \Pi^a_{i+1} \ChanPar \Gamma_{i+1}$ only one process makes progress (note this can be achieved by delaying receptions and performing sends and spawns as early as possible) and none of the intermediate steps are configurations of $P$.

We will prove by induction on $n$: 
$$\M\seqpar{\Pi^a_0} \parallel \tilde{\Pi}^f_0 \ChanPar \Gamma_0 \toCM^* \M\seqpar{\Pi^a_1} \parallel \tilde{\Pi}^f_1 \ChanPar \Gamma_1 \toCM^* \cdots \toCM^* \M\seqpar{\Pi_n} \parallel \tilde{\Pi}^f_n \ChanPar \Gamma_n$$
where for all $i = 0,\ldots n$ and $\cproc{\alpha} \in \Pi_i^f$ we have either $\Pi_i^f(\cproc{\alpha}) = \tilde{\Pi}^f_i(\M\seqpar{\cproc{\alpha}})$ or $\Pi_i^f(\cproc{\alpha}) = \tilde{\Pi}^f_i(\cproc{\M(\alpha_0) \cdot \M\seqpar{\alpha_1}})$, $\alpha = \alpha_0\alpha_1$.
\begin{itemize}
	\item $n = 0$. \newline
	The claim holds trivially.
	\item $n = k+1$, assuming the claim holds for $k$. \newline
	To prove the inductive claim we need to show that from
	$\Pi_k = \cproc{\alpha} \parallel \Pi'_k$, $\Pi_{k+1} =\cproc{\alpha'} \parallel \Pi'_k \oplus \Pi(w)$
	and
	$\Gamma_{k+1} \oplus \Gamma(w') = \Gamma_k \oplus \Gamma{w}$
	where $\Pi_k \ChanPar \Gamma_k \toCF^* \Pi_{k+1} \ChanPar \Gamma_{k+1}$,
	we can infer $\M\seqpar{\Pi_k} \ChanPar \Gamma_k \toCF^* \M\seqpar{\Pi_{k+1}} \ChanPar \Gamma_{k+1}$.
	We will do so by a case analysis on the shape of $\alpha$ and $\alpha'$.
	\begin{itemize}
		\item $\alpha, \alpha' \in (\NComN\ComN^*)^*$ \newline
		Then $\alpha = X\alpha_0\alpha_1$, $X \in \NComN$, $\alpha_0 \in \ComN^*$, 
		$\alpha_1 \in (\NComN\ComN^*)^*$ and $\alpha' = \alpha'_0\alpha_1$ where $\alpha'_0 \in \epsilon \union (\NComN\ComN^*)^*$, i.e. either we increase the call-stack or we pop one non-commutative non-terminal off the call-stack. Otherwise we would end up either in an intermediate configuration in $P$ or in a different case.
		\begin{itemize}
			\item Case $\alpha'_0 = \epsilon$.\newline
			Then $\cproc{X\alpha_0\alpha_1} \parallel \Pi'_k \ChanPar \Gamma_k \toCF^* 
				\cproc{X'\alpha_2\alpha_0\alpha_1} \parallel \Pi'_k \ChanPar \Gamma_k$ such that 
				${X' \in \ComN}$, $\alpha_2 \in \ComN^*$ and 
				$\cproc{X'\alpha_2\alpha_0\alpha_1} \parallel \Pi'_k \ChanPar \Gamma_k \toCF^*
					\cproc{w\alpha_1} \parallel \Pi'_k \ChanPar \Gamma_k \toCF^* 
					\cproc{\alpha_1} \parallel \Pi'_k \oplus \Pi(w) \ChanPar \Gamma_k \oplus \Gamma(w)$, where $w \in \ComSigma^*$ such that 
					$\Pi_{k+1} =\cproc{\alpha'} \parallel \Pi'_k \oplus \Pi(w)$
					and
					$\Gamma_{k+1} = \Gamma_k \oplus \Gamma{w}$.
					Lemma \ref{apx:lemma:leftmost_term_simulation} then allows us to conclude that
					$\M\seqpar{\cproc{X\alpha_0\alpha_1} \parallel \Pi'_k} \ChanPar \Gamma_k 
					\toCM^* 
				    \M\seqpar{\cproc{X'\alpha_2\alpha_0\alpha_1} \parallel \Pi'_k} \ChanPar \Gamma_k$ and
				    Lemma \ref{apx:lemma:sim_fromcom_pop_or_nonterm}.1 gives us
				    $\M\seqpar{\cproc{X'\alpha_2\alpha_0\alpha_1} \parallel \Pi'_k} \ChanPar \Gamma_k \toCM^*
				    \M\seqpar{\cproc{\alpha_1} \parallel \Pi'_k \oplus \Pi(w)} \ChanPar \Gamma_k \oplus \Gamma(w) = \Pi_{k+1} \ChanPar \Gamma_{k+1}$.
			\item Case $\alpha'_0 \neq \epsilon$.\newline
				Follows directly from Lemma \ref{apx:lemma:sim_fromncom_push_or_nonterm}.1
		\end{itemize}
		\item $\alpha \in \NComSigma\ComN^*(\NComN\ComN^*)^*$ and $\alpha' \in (\NComN\ComN^*)^*$ \newline
		Follows from Lemma \ref{apx:lemma:term_sideeffect_pop}.
		\item $\alpha \in (\NComN\ComN^*)^*$ and $\alpha' \in \NComSigma\ComN^*(\NComN\ComN^*)^*$ \newline
		Follows from Lemma \ref{apx:lemma:sim_fromncom_push_or_nonterm}.2
		\item $\alpha \in \NonT$ and $\alpha' \in (\NComN\ComN^*)^*$\newline
		We can assume that $\alpha \in \ComN$ since otherwise a case above already applies.
		By the definition of $\ComN$ we can thus infer that $\alpha' = \epsilon$ since otherwise 
		$\alpha$ would not be commutative.
		Thus Lemma \ref{apx:lemma:sim_fromcom_pop_or_nonterm}.1 applies.
		\item $\alpha \in \NonT$ and $\alpha' \in \NComSigma\ComN^*(\NComN\ComN^*)^*$\newline
		There is nothing to prove for this case as, similarly to the case above, either $\alpha \in \NComN$ and so a case above applies or $\alpha \in \ComN$ but then $\alpha' \notin \NComSigma\ComN^*(\NComN\ComN^*)^*$ which is impossible; so the former must be the case.
		\item $\alpha \in (\NComN\ComN^*)^*$ and $\alpha' \in P_{\Pi_f}$\newline
			If $\alpha' \in (\NComN\ComN^*)^* \union \NComSigma\ComN^*(\NComN\ComN^*)^*$ the above cases apply. Otherwise it must be the case that $\alpha' \in \ComN^*(\NComN\ComN^*)^* \union \ComSigma\ComN^*(\NComN\ComN^*)^*$.
			\begin{itemize}
				\item $\alpha' \in \ComN^*(\NComN\ComN^*)^*$ \newline
				So it must be the case that
				$\alpha = X\alpha_0\alpha_1$, $X \in \NComN$, $\alpha_0 \in \ComN^*$ $\alpha_1 \in (\NComN\ComN^*)^*$ and $\alpha' = \alpha'_0\alpha'_1\alpha_1$ where 
				$\alpha'_1 \in (\NComN\ComN^*)^*$, $\alpha'_0 \in \ComN^*$
				Lemma \ref{apx:lemma:sim_fromcom_pop_or_nonterm}.2 applies to give
				$$\M\seqpar{\cproc{X\alpha_0\alpha_1} \parallel \Pi_k} \ChanPar \Gamma_k \toCM^* \cproc{\M(\alpha'_0) \cdot \M\seqpar{\alpha'_1\alpha_1}}\parallel \M\seqpar{\Pi_k} \parallel \M\seqpar{\Pi(w)} \ChanPar \Gamma_k \oplus \Gamma(w)$$

				\item $\alpha' \in \ComSigma\ComN^*(\NComN\ComN^*)^*$\newline
				Follows from Lemma \ref{apx:lemma:leftmost_term_simulation_with_comsideeffects}
			\end{itemize}
		\item $\alpha \in \NComSigma\ComN^*(\NComN\ComN^*)^*$ and $\alpha' \in P_{\Pi_f}$\newline
			Unless $\alpha' \in \ComN^*(\NComN\ComN^*)^* \union \ComSigma\ComN^*(\NComN\ComN^*)^*$ this case is covered by a case above.
			The remaining follows from Lemma \ref{apx:lemma:nonterm_sideeffect_partialpop}.
		\item $\alpha \in \NonT$ and $\alpha' \in P_{\Pi_f}$\newline
			Unless $\alpha' \in \ComN^*(\NComN\ComN^*)^* \union \ComSigma\ComN^*(\NComN\ComN^*)^*$ this case is covered by a case above.
			The remaining follows from Lemma \ref{apx:lemma:mseqpar_sim_SF} and Lemma \ref{apx:lemma:leftmost_term_simulation_with_comsideeffects}
	\end{itemize}
	This concludes the proof of the inductive step.
\end{itemize}
Now we apply the above for the case that
$\Pi_0 \ChanPar \Gamma_0 = \cproc{S} \ChanPar \Gamma(\epsilon)$ and $\Pi^f := \Pi'$.
We can then see that $\M\seqpar{\cproc{S}} \ChanPar \Gamma(\epsilon) \toCM^* \tilde{\Pi}^f \ChanPar \Gamma'$.

Then since for all $\alpha \in \ComN^*(\NComN\ComN^*)^*$ it is the case that 
$\seqM(\alpha) \in \M\seqpar{\alpha}$ and further for 
$\alpha_0 \in \ComN^*$, $\alpha_1 (\NComN\ComN^*)^*$
$\seqM(\alpha_0\alpha_1) \in \M(\alpha_0) \cdot \M\seqpar{\alpha_1}$ we can deduce from the definition of \toCM on sets of configurations that
$$\seqM(\cproc{S}) \toCM^* \seqM(\Pi') \ChanPar \Gamma'$$ which concludes the proof.
\end{proof}

\subsection{Direction: $\Rightarrow$}

%!TEX root = main.tex

\begin{align*}
\ceil{M} =& \left\{ \left\{C_i \mapsto \sum_{(C_i,w) \in M'} M'(C_i,w)\right\} \,\left|\,
	\begin{aligned}
	&M = \bigoplus_{(C_i,w) \in M'}\bigoplus_{j=1}^{M'(C_i,w)} \M(w),\\ 
	&M' \in \M[\{(C,w) \mid C\toSF^* w, C \in \NonT\}] 	
	\end{aligned} 
	\right.\right\}\\
\sem{M} =& \{\alpha \in \Control \mid \M(\alpha) = \ceil{M}\}\\
\sem{\gamma_1\cdots \gamma_n} =& \sem{\gamma_1}\cdots\sem{\gamma_n} \text{ where } \gamma_i \in \NonT \union \Cache \union \Sigma
\end{align*}

Define for $V, W \subseteq \M[\Control]$
\begin{align*}
V \parallel W &= \{\Pi \parallel \Pi' \mid \Pi \in V, \Pi' \in W\}\\
\intertext{and $\Pi,\Pi' \in \M[\Control^\M]$}
\sem{\Pi \parallel \Pi'} &= \sem{\Pi} \parallel \sem{\Pi'}
\end{align*}

\begin{definition}[Simulation Relation]
Let $R \subseteq S \times S'$ where $(S,\rightarrow_S), {(S', \rightarrow_{S'})}$ are transition systems. We say R is a \emph{weak $(S,S')$-simulation} just if
$$(q,p) \in R \text{ and } q \rightarrow_S q' \implies p \rightarrow^*_{S'} p' \text{ and } (q',p') \in R.$$
\end{definition}

Let $\simulatedS \is \{ (\gamma,\alpha) \mid \alpha \in \sem{\gamma}\} \subseteq \Control^\M \times \Control$ and $\simulatedC \is \{ (\Pi_1 \ChanPar \Gamma, \Pi_2 \ChanPar \Gamma) \mid \Pi_2 \in \sem{\Pi_1} \}$.

\begin{proposition}[Sequential Simulation]\label{apx:prop:sequential_simulation}
$\simulatedS$ is a weak simulation relation.
\end{proposition}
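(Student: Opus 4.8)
The plan is to verify the weak-simulation condition directly, instantiating the generic $S,S'$ of the definition of weak simulation with the alternative and the standard sequential transition systems $(\Control^\M,{\toSM})$ and $(\Control,{\toSF})$. So suppose $(\gamma,\alpha)\in\simulatedS$, i.e.\ $\alpha\in\sem{\gamma}$, and $\gamma\toSM\gamma'$; I must produce $\alpha'$ with $\alpha\toSF^*\alpha'$ and $\alpha'\in\sem{\gamma'}$. Every $\toSM$-step instantiates one of the four rules of Definition~\ref{def:alternative_operational_semantics}, and in each the source has the shape $\gamma = A\,M\,\gamma_1$ with $A\in\NonT$, $M\in\Cache$, $\gamma_1\in\CallStack^\M$. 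Since $\sem{\cdot}$ is defined compositionally by concatenation and is the singleton $\{x\}$ on a symbol $x\in\NonT\cup\Sigma$, membership $\alpha\in\sem{A\,M\,\gamma_1}$ forces $\alpha$ to have a representative $A\,\beta\,\delta$ with $\beta\in\sem{M}$ and $\delta\in\sem{\gamma_1}$; in particular the leftmost symbol of $\alpha$, as an $\eqvI$-class, may be taken to be $A$, so every $\calG$-rule with left-hand side $A$ is applicable to $\alpha$ under $\toSF$. I then do a case analysis on the $\toSM$-rule used; in each case one $\toSF$-step with the same $\calG$-rule $A\to\rho$ will suffice.

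For rules (\ref{def:seq_mult_sem_action}), (\ref{def:seq_mult_sem_tailrec}) and (\ref{def:seq_mult_sem_block}) the match is immediate: firing $A\to\rho$ rewrites $\alpha = A\,\beta\,\delta$ to $\alpha' := \rho\,\beta\,\delta$, while $\gamma'$ is respectively $a\,M\,\gamma_1$, $a\,B\,M\,\gamma_1$, or $B\,C\,M\,\gamma_1$; unfolding the concatenation defining $\sem{\gamma'}$ --- in the last case noting that the freshly pushed non-commutative $C$ sits between $B$'s empty cache and $M$, so that $\{B\}\{C\}\sem{M}\sem{\gamma_1}\subseteq\sem{B\,C\,M\,\gamma_1}$ --- gives $\alpha'\in\sem{\gamma'}$ at once, since $\beta\in\sem{M}$ and $\delta\in\sem{\gamma_1}$ are untouched and the caches of $\gamma'$ agree with those of $\gamma$. (The degenerate subcase $\rho=\epsilon$ of rule (\ref{def:seq_mult_sem_action}) is covered because $\epsilon$ is the unit of concatenation.)

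The one case with real content is rule (\ref{def:seq_mult_sem_nonblock}): $A\to B\,C$ with $C$ commutative and $C\toSF^* w\in(\ComN\cup\ComSigma)^*$, giving $\gamma = A\,M\,\gamma_1\toSM B\,(\M(w)\oplus M)\,\gamma_1 = \gamma'$. I again fire $A\to B\,C$ once in the standard semantics, $\alpha = A\,\beta\,\delta\toSF B\,C\,\beta\,\delta =: \alpha'$ --- crucially, $C$ itself is never rewritten in this run. It remains to check $\alpha'\in\sem{\gamma'}=\{B\}\cdot\sem{\M(w)\oplus M}\cdot\sem{\gamma_1}$, i.e.\ that $C\,\beta\in\sem{\M(w)\oplus M}$, equivalently $\M(C\,\beta)=\mset{C}\oplus\M(\beta)\in\ceil{\M(w)\oplus M}$. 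From $\beta\in\sem{M}$ we obtain, by the definition of $\ceil{\cdot}$, a multiset $M'$ of pairs $(C_i,w_i)$ with $C_i\toSF^* w_i$ whose combined Parikh images equal $M$ and whose first-component multiplicities reproduce $\M(\beta)$; enlarging $M'$ by the extra pair $(C,w)$ --- admissible precisely because $C\in\NonT$ and $C\toSF^* w$ --- yields combined Parikh image $M\oplus\M(w)$ and first-component multiplicities $\M(\beta)\oplus\mset{C}$, which is exactly a witness for $\mset{C}\oplus\M(\beta)\in\ceil{\M(w)\oplus M}$.

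Assembling the cases gives $\alpha\toSF^*\alpha'$ (one step throughout) with $(\gamma',\alpha')\in\simulatedS$, which is the claim. I expect the main obstacle to be bookkeeping rather than conceptual: unwinding $\ceil{\cdot}$ in the last case so that the pair-multiset witnessing $\ceil{M}$-membership extends to one witnessing $\ceil{\M(w)\oplus M}$-membership, and keeping the compositional reading of $\sem{\cdot}$ consistent with the quotient by $\eqvI$ so that leftmost rewriting at $A$ is well-defined on classes. Note that never rewriting the commutative $C$ in the standard run is precisely what lets the argument ignore the possibility that a non-commutative $B$ would otherwise block such a rewrite.
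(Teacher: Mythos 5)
Your proposal is correct and follows essentially the same route as the paper's proof: a case analysis on the four $\toSM$-rules, matching each with a single $\toSF$-step using the same $\calG$-rule, with the only nontrivial case being rule (\ref{def:seq_mult_sem_nonblock}), which both you and the paper settle by extending the witness multiset of pairs in the definition of $\ceil{\cdot}$ by the new pair $(C,w)$ to show $\mset{C}\oplus\M(\beta)\in\ceil{\M(w)\oplus M}$. No gaps; the bookkeeping you flag is exactly the content of the paper's Claim~4.
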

\begin{proof}
Let $\gamma \in \Control^M$ and $\alpha \in \Control$ such that $\alpha \in \sem{\gamma}$ and $\gamma \toSM \gamma'$.

Since $\gamma \toSM \gamma'$ we know that $\gamma = XM\gamma_0$ and $\gamma' = \gamma_1\gamma_0$.
Hence by definition of $\sem{\gamma}$ it must be that $\alpha = X\beta\alpha_0$ such that $\beta \in \sem{M}$ and $\alpha_0 \in \sem{\gamma_0}$.

We will prove that there exists a $\alpha' \simulatesS \gamma'$ such that $\alpha \toSF \alpha'$ by case analysis on the type of rule used for $\gamma \toSM \gamma'$.
\begin{enumerate}
	\item[Claim 1 and 2:] $X \rightarrow a \in G$, $a \in \Sigma \union \{\epsilon\}$. 
			\emph{(trivial)}\newline
			Clearly $X\beta\alpha_0 \toSF a\beta\alpha_0$, and $\gamma_1 = aM$. Clearly $a\beta\alpha_0 =: \alpha' \in \sem{aM\gamma_0} = \sem{\gamma'}$ and so $\alpha' \simulatesS \gamma'$. 
	\item[Claim 3:] $X \rightarrow aA \in G$. 
			\emph{(trivial)}\newline
			Then $X\beta\alpha_0 \toSF aA\beta\alpha_0$ and $\gamma_1 = aAM$.
			Clearly $aA\beta\alpha_0 =: \alpha' \in \sem{aAM\gamma_0} = \sem{\gamma'}$ and so $\gamma' \simulatedS \alpha'$.
	\item[Claim 4:] $X \rightarrow AB \in G$, $B \in \ComN$, $B \toSF^* w$.
			\emph{(non-trivial)}\newline
			Then $X\beta\alpha_0 \toSF AB\beta\alpha_0$. 
			To prove $A\, B\, \beta\alpha_0 =: \alpha' \in \sem{A(\M(w) \oplus M)\gamma_0} =\sem{\gamma'}$ we need to show that 
			$B\, \beta \in \sem{\M(w) \oplus M}$.
			Since $\beta \in \sem{M}$ we know that $\M(\beta) = \ceil{M}$. 
			It remains to prove $\M(B) \oplus \M(\beta) \in \ceil{\M(w) \oplus M}$.
			Since ${\M(\beta) \in \ceil{M}}$ there exists $M' \in \M[\{(C,w) \mid C\toSF^* w, C \in \NonT\}]$ such that 

			$$\M(\beta) = \left\{C_i \mapsto \sum_{(C_i,w) \in M'} M'(C_i,w)\right\}$$
			
			and $M = \bigoplus_{(C_i,w) \in M'}\bigoplus_{j=1}^{M'(C_i,w)} \M(w)$.
			Then writing $M'_B := M'\oplus \M((B,w))$ it is the case that $M'_B \in \M[\{(C,w) \mid C\toSF^* w, C \in \NonT\}]$ and 
			\begin{align*}
			M \oplus \M(w) &= \left(\bigoplus_{(C_i,w') \in M'}\bigoplus_{j=1}^{M'(C_i,w')} \M(w')\right) \oplus \M(w)\\
							&= \bigoplus_{(C_i,w') \in M'_B}\bigoplus_{j=1}^{M'_B(C_i,w')} \M(w')
			\end{align*}
 			Thus we can conclude that
 			$$f_B := \left\{C_i \mapsto \sum_{(C_i,w) \in M'_B} M'_B(C_i,w)\right\} \in \ceil{\M(w) \oplus M}$$
 			and since $M'_B(B,w) = M'(B,w) + 1$ and $M'_B(B',w') = M'(B',w')$ if either $B \neq B'$ or $w \neq w'$, it is the case that $f_B(B) = \M(\beta)(B) + 1$ and $f_B(C) = \M(\gamma)(C)$ if $C \neq B$. Hence $f_B = \M(B) \oplus \M(\beta)$ and so $\M(B) \oplus \M(\beta) \in \ceil{\M(w) \oplus M}$ which implies $\alpha' \simulatesS \gamma'$ and concludes the proof of this case.
	\item[Claim 5:] $X \rightarrow AB \in G$, $B \in \NComN$.
		  \emph{(trivial)}\newline
		  Then $X\beta\alpha_0 \toSF AB\beta\alpha_0$ and $\gamma_1 = ABM$.
		  Clearly $AB\beta\alpha_0 = \alpha' \in \sem{ABM\gamma_0} = \sem{\gamma'}$ and so $\alpha' \simulatesS \gamma'$.
\end{enumerate}
The claim holds for all cases which concludes the proof.
\end{proof}

\begin{proposition}[Concurrent Simulation]\label{apx:prop:concurrent_simulation}
$\simulatedC$ is a weak simulation relation.
\end{proposition}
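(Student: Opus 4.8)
\begin{proof*}
The plan is to check the weak-simulation condition for $\simulatedC$ head-on: assuming $\Pi_2 \in \sem{\Pi_1}$ and $\Pi_1 \ChanPar \Gamma \toCM \Pi_1' \ChanPar \Gamma'$, I would produce some $\Pi_2' \in \sem{\Pi_1'}$ with $\Pi_2 \ChanPar \Gamma \toCF^* \Pi_2' \ChanPar \Gamma'$, proceeding by case analysis on the $\toCM$-rule that fires. Since $\sem{\cdot}$ distributes over process-word concatenation and $\sem{\Pi \parallel \Pi'} = \sem{\Pi} \parallel \sem{\Pi'}$, in each case it suffices to simulate the step of the single process $\cproc{\gamma}$ that fires the rule, leaving the parallel context untouched apart from freshly spawned processes (each such $\cproc{X}$ trivially lies in $\sem{\cproc{X}}$).

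The routine cases come first. For rule~(\ref{def:con_mult_sem_interleave}) we have $\gamma \toSM \gamma'$, and for any $\alpha \in \sem{\gamma}$ Proposition~\ref{apx:prop:sequential_simulation} supplies $\alpha' \in \sem{\gamma'}$ with $\alpha \toSF^* \alpha'$; lifting through the standard concurrent rule~(\ref{def:conc_faith_inter}) gives the match with $\Gamma$ unchanged on both sides. For rules~(\ref{def:con_mult_sem_rec}), (\ref{def:con_mult_sem_spawn}), (\ref{def:con_mult_sem_send}) and~(\ref{def:con_mult_sem_label}) the control begins with a terminal $a$, so every $\alpha \in \sem{a\gamma}$ is of the form $a\,\alpha_0$ with $\alpha_0 \in \sem{\gamma}$, and the matching standard axiom~(\ref{def:conc_faith_rec}), (\ref{def:conc_faith_spawn}), (\ref{def:conc_faith_send}) or~(\ref{def:conc_faith_label}) performs the identical update to the channels while $\cproc{\alpha_0} \in \sem{\cproc{\gamma}}$. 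For rule~(\ref{def:con_mult_sem_disp}) the control is a terminal cache $M \in \TermCache$, so any $\alpha \in \sem{M\,X\,\gamma}$ has the form $\beta\,X\,\alpha_0$ where $\beta$ is a word of commutative non-terminals carrying terminating derivations $C \toSF^* w \in \ComSigma^*$ whose Parikh images sum to $M$ --- this is exactly the content of $\M(\beta) \in \ceil{M}$. Running those derivations in the standard semantics and then emitting the resulting words of commutative actions --- sends to the channels, spawns to new processes, labels discarded, none of which blocks --- reaches $\cproc{X\,\alpha_0} \parallel \Pi_2 \oplus \Pi(M) \ChanPar \Gamma \oplus \Gamma(M)$, and $X\,\alpha_0 \in \sem{X\gamma}$. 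Working modulo $\eqvI$ throughout is harmless, and in fact convenient: the commutative symbols may be reordered freely so as to expose whichever symbol we wish to rewrite or consume next.

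The remaining case, rule~(\ref{def:con_mult_sem_non-term}), is the one I expect to be the main obstacle. Here $M \in \MixedCache$ is split into its send/spawn part $M_d$, which is dispatched to $\Gamma(M)$ and $\Pi(M)$, and its retained part $M' = M \restriction (\ComN \union \PPL)$, which contains commutative non-terminals \emph{and} program-point labels. The subtlety is that the standard semantics cannot literally park a label on a process stack --- it would immediately be consumed --- so the retained labels must be realised by keeping the relevant commutative non-terminals only \emph{partially} derived, so that those labels remain ``held inside'' them. I would therefore isolate an auxiliary lemma: if $\beta$ is a word of commutative non-terminals with $\M(\beta) \in \ceil{M}$ and $M = M_d \oplus M_k$ is the split above, then $\cproc{\beta\,\alpha_0} \parallel \Pi_2 \ChanPar \Gamma \toCF^* \cproc{\beta'\,\alpha_0} \parallel \Pi_2 \oplus \Pi(M_d) \ChanPar \Gamma \oplus \Gamma(M_d)$ for some word $\beta'$ of commutative non-terminals with $\M(\beta') \in \ceil{M_k}$. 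The lemma would be proved by induction on the derivations witnessing $\M(\beta) \in \ceil{M}$: advance each chosen (partial) derivation just far enough to expose its sends and spawns, float these to the head modulo $\eqvI$ and emit them, and stop with the remaining sub-non-terminals on the stack, which by construction promise precisely the retained labels and non-terminals. The delicate point --- and the reason this needs a separate argument rather than the bookkeeping that suffices elsewhere --- is arranging the stopping point so that all sends and spawns are exposed while the labels are still unproduced; here one uses crucially that every descendant of a commutative non-terminal is itself commutative. Instantiating the lemma yields the standard run matching rule~(\ref{def:con_mult_sem_non-term}), with rule~(\ref{def:con_mult_sem_disp}) recovered as the degenerate case $M_k = \emptyset$, completing the case analysis.
\end{proof*}
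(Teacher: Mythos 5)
Your treatment of rules (\ref{def:con_mult_sem_interleave})--(\ref{def:con_mult_sem_label}) and of rule (\ref{def:con_mult_sem_disp}) coincides with the paper's proof (you even note explicitly that cached labels are discharged by rule (\ref{def:conc_faith_label}), which the paper leaves tacit). The gap is in the auxiliary lemma you isolate for rule (\ref{def:con_mult_sem_non-term}): it is false. Take the $\calG$-rules $C \rightarrow l\,D$, $D \rightarrow (\snd{c}{m})\,E$, $E \rightarrow l'$; all three non-terminals are commutative. Via the derivation $C \toSF^* l\,(\snd{c}{m})\,E$ we get $\M(C) = \mset{C} \in \ceil{M}$ for $M = \mset{l,\snd{c}{m},E} \in \MixedCache$, so $\cproc{C\,X\,\alpha_0}$ is a legitimate $\sem{\cdot}$-representative of $\cproc{M\,X\,\gamma}$, and rule (\ref{def:con_mult_sem_non-term}) fires with, in your notation, $M_d = \mset{\snd{c}{m}}$ and $M_k = M \restriction (\ComN \union \PPL) = \mset{l,E}$. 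But the only way to expose the send in the standard semantics is first to apply $C \rightarrow l\,D$, which puts the literal terminal $l$ on the process word; a produced terminal can never be re-absorbed into a non-terminal, so after $\Gamma(M_d)$ has been emitted the residue is never a word of non-terminals \emph{promising} $\mset{l,E}$. Indeed no multiset of non-terminals of this grammar has derivations with combined Parikh image $\mset{l,E}$, i.e.\ $\ceil{M_k} = \emptyset$, so the conclusion $\M(\beta') \in \ceil{M_k}$ of your lemma is unsatisfiable even though its hypotheses hold. The fact that descendants of commutative non-terminals are commutative does not help here: a type-(\ref{item:tail-rec}) rule can force a label out strictly before the send beneath it.

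The paper's own proof of this case does not stop early at all: it runs the cached derivation to completion, $\beta \toSF^* w$ with $\M(w) = M$, splits $w \eqvI w_0 w_1$ with $w_0 \in \ComSigma^*$ and $w_1 \in \ComN^*$, discharges every terminal of $w_0$, and keeps the purely non-terminal residue $w_1$, matching it against the cache $\M(w_1) = M \restriction \ComN$; in other words the appendix argues for the variant of rule (\ref{def:con_mult_sem_non-term}) in which the retained cache is $M \restriction \ComN$ rather than $M \restriction (\ComN \union \PPL)$. You have therefore put your finger on a genuine wrinkle --- with labels retained in $M'$, the example above makes $\sem{\cproc{M'\,X\,\gamma}}$ empty, so \emph{no} construction of a standard run can close this case for $\simulatedC$ as defined --- but the repair cannot be your early-stopping lemma; it must be made in the rule or in the definition of $\sem{\cdot}$ (effectively treating retained labels as already discharged, as the appendix does), after which the paper's simpler ``derive fully, then dispatch'' argument suffices.
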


\begin{proof}
Let $\Pi_1 \in \M[\Control^\M]$, $\Pi_2 \in \M[\Control]$ and $\Gamma, \Gamma' \in \Chan \rightarrow \M[\Msg]$ such that $\Pi_1 \ChanPar \Gamma \simulatedC \Pi'_1 \ChanPar \Gamma$ and suppose that $\Pi_1 \ChanPar \Gamma \toCM \Pi'_1 \ChanPar \Gamma'$.

We will prove that there exists a $\Pi'_2$ such that $\Pi'_1 \ChanPar \Gamma' \simulatedC \Pi'_2 \ChanPar \Gamma'$ and $\Pi_2 \ChanPar \Gamma \toCF^* \Pi'_2 \ChanPar \Gamma'$ by case analysis on the rule used for $\Pi_1 \ChanPar \Gamma \toCM \Pi'_1 \ChanPar \Gamma'$.
\begin{itemize}
	\item Rule \ref{def:con_mult_sem_interleave} \newline
	Follows immediately by Proposition \ref{apx:prop:sequential_simulation}.
	\item Rule \ref{def:con_mult_sem_rec} \newline
	Then $\Pi_1 = \cproc{(\rec{c}{m})\, \gamma} \parallel \Pi^0_1$ and 
	$\Gamma = \Gamma' \oplus \Gamma(\snd{c}{m})$
	and $\Pi'_1 = \cproc{\gamma}\parallel \Pi^0_1$.
    Hence $\Pi_2 = \cproc{(\rec{c}{m})\alpha} \parallel \Pi^0_2 \in \sem{\Pi_1}$ with 
    $\alpha \in \sem{\gamma}$ and so
    $\cproc{\alpha}\parallel \Pi^0_2 \in \sem{\Pi'_1}$.
    And using rule \ref{def:conc_faith_rec}
    $\cproc{(\rec{c}{m})\, \alpha} \parallel \Pi^0_2 \ChanPar  \Gamma' \oplus \Gamma(\snd{c}{m}) \toCF 
    \cproc{\alpha}\parallel \Pi^0_2 \ChanPar  \Gamma'$.
	\item Rule \ref{def:con_mult_sem_spawn} \newline
	Then $\Pi_1 = \cproc{(\nu X)\, \gamma} \parallel \Pi^0_1$,
    $\Pi'_1 = \cproc{\gamma} \parallel \cproc{X} \parallel \Pi^0_1$ and $\Gamma' = \Gamma$.
    Hence $\Pi_2 = \cproc{(\nu X)\, \alpha} \parallel \Pi^0_2 \in \sem{\Pi_1}$ with
    $\alpha \in \sem{\gamma}$ and
    so $\cproc{\alpha} \parallel \cproc{X} \parallel \Pi^0_2  \in \sem{\Pi'_1}$.
    By rule \ref{def:conc_faith_spawn}
    $\cproc{(\nu X)\, \alpha} \parallel \Pi^0_2 \ChanPar \Gamma \toCF
     \cproc{\alpha} \parallel \cproc{X} \parallel \Pi^0_2 \ChanPar \Gamma$.
	\item Rule \ref{def:con_mult_sem_send} \newline
	Then $\Pi_1 = \cproc{(\snd{c_j}{m})\, \gamma} \parallel \Pi^0_1$,
	$\Pi'_1 = \cproc{\gamma} \parallel \Pi^0_1$ and $\Gamma' = \Gamma \oplus \Gamma(\snd{c}{m})$.
	Then $\Pi_2 = \cproc{(\snd{c_j}{m})\, \alpha} \parallel \Pi^0_2 \in \sem{\Pi_1}$ with  
	$\alpha \in \sem{\gamma}$ and so
	$\cproc{\alpha} \parallel \Pi^0_2 \in \sem{\Pi'_1}$.
	By rule \ref{def:con_mult_sem_send} we can see
	$\cproc{(\snd{c_j}{m})\, \alpha} \parallel \Pi^0_2 \ChanPar \Gamma \toCF
	 \cproc{\alpha} \parallel \Pi^0_2 \ChanPar  \Gamma \oplus \Gamma(\snd{c}{m})$.
	\item Rule \ref{def:con_mult_sem_label}\newline
	Then $\Pi_1 = \cproc{l\, \gamma} \parallel \Pi^0_1$,  
	$\Pi'_1 = \cproc{\gamma} \parallel \Pi^0_1$ and $\Gamma' = \Gamma$.
	Then $\Pi_2 = \cproc{l\, \alpha} \parallel \Pi^0_2 \in \sem{\Pi_1}$ with
	$\alpha \in \sem{\gamma}$ and
	$\cproc{\alpha} \parallel \Pi^0_2 \in \sem{\Pi'_1}$.
	By rule \ref{def:conc_faith_label}  
	$\cproc{l\, \alpha} \parallel \Pi^0_2 \ChanPar  \Gamma \toCF
	 \cproc{\alpha} \parallel \Pi^0_2 \ChanPar  \Gamma$.
	\item Rule \ref{def:con_mult_sem_disp} \newline
	Then $\Pi_1 = \cproc{M\, X \,\gamma} \parallel \Pi^0_1$
	such that $M \in \TermCache$, $\Gamma' = \Gamma \oplus \Gamma(M)$, 
	$X \in \NComN$
	and $\Pi'_1 = \cproc{X\, \gamma}  \parallel \Pi^0_1 \parallel \Pi(M)$.
	Also $\Pi_2 = \cproc{\beta\, X \,\alpha} \parallel \Pi^0_2 \in \sem{\Pi_1}$ and
	$\beta\, X\, \alpha \in \sem{M\,X\,\gamma}$.

	Hence $X\, \alpha \in \sem{X\,\gamma}$ and $\beta \in \sem{M}$
	and hence $\M(\beta) = \ceil{M}$. Thus $\beta \toSF^* w$, $w \in \ComSigma^*$ such that 
	$\M(w) = M$. 
	Hence using rules \ref{def:conc_faith_send}, \ref{def:conc_faith_spawn} and 
	\ref{def:conc_faith_inter} repeatedly we can see that 
	$\cproc{\beta\, X \,\alpha} \parallel \Pi^0_2  \ChanPar \Gamma \toCF^* 
	 \cproc{X \,\alpha} \parallel \Pi^0_2 \parallel \Pi(w) \ChanPar \Gamma \oplus \Gamma(w) = 
	 \cproc{X \,\alpha} \parallel \Pi^0_2 \parallel \Pi(M) \ChanPar \Gamma \oplus \Gamma(M) =: \Pi'_2
	 $.
	and $\Pi'_2 \in \sem{\Pi'_1}$.
	\item Rule \ref{def:con_mult_sem_non-term} \newline
	Then $\Pi_1 = \cproc{M\, X \,\gamma}  \parallel \Pi^0_1$
	and $\Pi'_1 = \cproc{M'\,X\, \gamma}  \parallel \Pi^0_1 \parallel \Pi(M)$
	such that $m \in \MixedCache$, $M' \in \NonTermCache$, $\Gamma' = \Gamma \oplus \Gamma(M)$ and $X \in \NComN$.
	
	Also $\Pi_2 = \cproc{\beta\, X \,\alpha} \parallel \Pi^0_2 \in \sem{\Pi_1}$ and
	$\beta\, X\, \alpha \in \sem{M\,X\,\gamma}$. Thus $\beta \in \sem{M}$ and
	hence $\beta \toSF^* w$, $w \in (\ComSigma \union \ComN)^*$ such that $\M(w) = M$. Then $w \eqvI w_0w_1$ such that $w_0 \in \ComSigma^*$ and $w_1 \in \ComN$ and $M' = \M(w_1)$.

	Hence $w_1\, X\, \alpha \in \sem{M'\,X\,\gamma}$ and thus $\cproc{w_1X \,\alpha} \parallel \Pi^0_2 \oplus \Pi(M) \ChanPar \Gamma \oplus \Gamma(M) := \Pi'_2 \in \sem{\Pi'_1}$.
	
	Using rules \ref{def:conc_faith_send}, \ref{def:conc_faith_spawn} and \ref{def:conc_faith_inter} repeatedly we can see that 
	$\cproc{\beta\, X \,\alpha} \parallel \Pi^0_2  \ChanPar \Gamma \toCF^* 
	 \cproc{w_1X \,\alpha} \parallel \Pi^0_2 \parallel \Pi(w_0) \ChanPar \Gamma \oplus \Gamma(w_0) = 
	 \cproc{w_1X \,\alpha} \parallel \Pi^0_2 \parallel \Pi(w) \ChanPar \Gamma \oplus \Gamma(w) =
	 \cproc{w_1X \,\alpha} \parallel \Pi^0_2 \parallel \Pi(M) \ChanPar \Gamma \oplus \Gamma(M) = \Pi'_2
	 $.
\end{itemize}
Hence the claim holds in all cases and thus we can conclude that
$\simulatedC$ is a weak simulation relation.
\end{proof}

\begin{corollary}
Given an ACPS $P$ if $\cproc{S} \ChanPar \Gamma(\epsilon) \toCM^* \Pi \ChanPar \Gamma$ then
$\cproc{S} \ChanPar \Gamma(\epsilon) \toCF^* \Pi' \ChanPar \Gamma$ such that $\Pi \ChanPar \Gamma \simulatedS \Pi' \ChanPar \Gamma$.
\end{corollary}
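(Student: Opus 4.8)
The plan is to obtain this Corollary as an immediate consequence of Proposition~\ref{apx:prop:concurrent_simulation}, which establishes that $\simulatedC$ is a weak $(\toCM,\toCF)$-simulation relation. (Here the conclusion $\Pi\ChanPar\Gamma\simulatedS\Pi'\ChanPar\Gamma$ of the Corollary should be read as $\Pi\ChanPar\Gamma\simulatedC\Pi'\ChanPar\Gamma$, i.e.\ $\Pi'\in\sem{\Pi}$, since $\simulatedS$ relates process states rather than whole configurations.)

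First I would record the standard fact that a weak simulation lifts from single transitions to runs: if $R$ is a weak $(S,S')$-simulation and $(q,p)\in R$, then $q\to_S^{*}q'$ implies there is $p'$ with $p\to_{S'}^{*}p'$ and $(q',p')\in R$. This is proved by a trivial induction on the length of the run $q\to_S^{*}q'$; the zero-step case is immediate, and in the inductive step one applies the defining property of $R$ to the last transition and appends the resulting $S'$-run using transitivity of $\to_{S'}^{*}$.

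Next I would instantiate this with $S$ the $\toCM$-system, $S'$ the $\toCF$-system and $R=\simulatedC$, so that the only remaining task is to check that the two start configurations are $\simulatedC$-related, i.e.\ $\cproc{S}\ChanPar\Gamma(\epsilon)\simulatedC\cproc{S}\ChanPar\Gamma(\epsilon)$. This reduces to $\cproc{S}\in\sem{\cproc{S}}$, which holds because the concretisation of a single non-terminal is itself ($\sem{X}=\{X\}$) and $\sem{\cdot}$ distributes over parallel composition, while the (empty) channel components coincide trivially. Feeding an arbitrary run $\cproc{S}\ChanPar\Gamma(\epsilon)\toCM^{*}\Pi\ChanPar\Gamma$ into the lifted simulation then yields $\cproc{S}\ChanPar\Gamma(\epsilon)\toCF^{*}\Pi'\ChanPar\Gamma$ with $\Pi'\in\sem{\Pi}$, which is exactly the claim.

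There is no real obstacle here: essentially all the work is packaged inside Proposition~\ref{apx:prop:concurrent_simulation}, and what is left is the routine closure of a weak simulation under reflexive--transitive closure together with the trivial base case. The only points worth flagging are the $\simulatedS$-versus-$\simulatedC$ slip in the statement, and the fact that the simulating $\toCF$-run is forced to keep the same channel contents as the $\toCM$-run --- which is precisely what the definition of $\simulatedC$ (relating only configurations with identical channels) guarantees.
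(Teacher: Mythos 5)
Your proposal matches the paper's own argument, which simply notes that the corollary ``follows trivially by induction from Proposition~\ref{apx:prop:concurrent_simulation}'': you spell out exactly that induction on the length of the $\toCM$-run, using the weak-simulation property of $\simulatedC$ and the trivial base case $\cproc{S}\ChanPar\Gamma(\epsilon)\simulatedC\cproc{S}\ChanPar\Gamma(\epsilon)$. Your reading of the stated relation as $\simulatedC$ (rather than the $\simulatedS$ appearing in the corollary's statement) is also the intended one.
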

\begin{proof}
Follows trivially by induction from Proposition \ref{apx:prop:concurrent_simulation}.
\end{proof}

\setcounter{custthm}{0}
\begin{customtheorem}[Reduction of Program-Point Coverability]
\label{apx:thm:reduction_coverability}
  $(P;l_1, \ldots, l_n)$ is a yes-instance of Program-Point Coverabililty problem iff $(P;l_1, \ldots, l_n)$ is a yes-instance of Alternative Program-Point Coverability problem.
\end{customtheorem}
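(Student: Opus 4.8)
The plan is to prove the biconditional by exhibiting, in each direction, a simulation between the standard semantics $\toCF$ and the alternative semantics $\toCM$ that preserves the covering order $\leq$ and relates the two start configurations. I would use two maps between configurations: an \emph{abstraction} $\seqM(\cdot)$ that replaces each maximal block of commutative non-terminals and commutative terminals on a call stack by a cache recording its Parikh image, and a \emph{concretisation} $\sem{\cdot}$ that sends an alternative configuration to the set of standard configurations obtained by re-expanding every cache $M$ into a commutative call-stack fragment whose (partial) derivation produces exactly the multiset $\ceil{M}$. Since $\cproc{S}$ abstracts to $\cproc{S}$ and $\cproc{S}$ lies in $\sem{\cproc{S}}$, the start configurations correspond, and it remains to transport reachability and the covering witnesses along these maps.

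For the forward direction (a program-point yes-instance yields an alternative one): take a $\toCF$-run from $\cproc{S} \ChanPar \emptyset$ reaching $\Pi \ChanPar \Gamma$ with $\cproc{l_1\alpha_1} \parallel \cdots \parallel \cproc{l_n\alpha_n} \ChanPar \emptyset \leq \Pi \ChanPar \Gamma$. I would first normalise the run so that, between successive ``checkpoint'' configurations, exactly one process makes progress while sends and spawns are performed as early as possible and receptions as late as possible, so that each segment takes that process from one call-stack shape of interest to another while emitting only commutative side-effects. I would then show, by induction on the number of checkpoints and a case analysis on the shape transition of the active process (pushing a non-commutative non-terminal, popping one, exposing a receive, running a commutative sub-call to completion, or stopping it partway), that the abstraction of each checkpoint $\toCM$-reaches the abstraction of the next: rule (\ref{def:seq_mult_sem_nonblock}) is matched whenever a commutative sub-call is summarised into a cache, and rules (\ref{def:con_mult_sem_disp}) and (\ref{def:con_mult_sem_non-term}) are matched when the commutative block above a non-commutative non-terminal finishes, totally resp.\ partially, and its accumulated side-effects are dispatched. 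Applied to the whole run this gives a $\toCM$-reachable configuration containing, for each $i$, a process whose control state still begins with the terminal $l_i$, so the tuple is an alternative yes-instance.

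For the converse direction: I would show that the relation pairing an alternative configuration with each of its concretisations is a \emph{weak} simulation of $\toCM$ by $\toCF$. Rules (\ref{def:con_mult_sem_interleave})--(\ref{def:con_mult_sem_label}) are matched one-for-one via the sequential simulation; rule (\ref{def:seq_mult_sem_nonblock}), which merely anticipates a commutative derivation, is matched by \emph{no} standard step (this is exactly why weakness is needed); and rules (\ref{def:con_mult_sem_disp}) and (\ref{def:con_mult_sem_non-term}) are matched by executing, in the standard semantics, the commutative call-stack fragment that the cache concretises to, using (\ref{def:conc_faith_send}), (\ref{def:conc_faith_spawn}) and (\ref{def:conc_faith_inter}) repeatedly to emit its sends, spawns and labels. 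Starting from $\cproc{S} \in \sem{\cproc{S}}$ this yields, for every $\toCM$-reachable $\Pi \ChanPar \Gamma$, a $\toCF$-reachable $\Pi' \ChanPar \Gamma$ with $\Pi' \in \sem{\Pi}$. If the alternative witnesses are processes $\cproc{\lambda_i \gamma_i} \in \Pi$ with $\lambda_i = l_i$ or $\lambda_i = M_i$ and $l_i \in M_i$, then in the concretisation the corresponding process either already exposes $l_i$ or can be driven by a few further $\toCF$-steps (deriving the commutative block that replaces $M_i$ and using commutativity to bring the occurrence of $l_i$ to the front) to a state $\cproc{l_i \alpha_i}$; combining over all $i$ produces a $\toCF$-reachable standard configuration covering $\cproc{l_1\alpha_1} \parallel \cdots \parallel \cproc{l_n\alpha_n} \ChanPar \emptyset$, i.e.\ a program-point yes-instance.

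The hard part will be the forward direction. The standard semantics does not commit in advance to which commutative procedure calls to summarise, and several processes may interleave arbitrarily while commutative blocks sit in intermediate, partially-derived states, so the abstraction $\seqM(\cdot)$ (and its set-valued refinement $\M\seqpar{\cdot}$) must be chosen to be stable under (i) expanding a commutative non-terminal inside a block, so that the new symbols stay within the same cache, (ii) non-deterministically stopping such an expansion, mirrored by the choice between the non-blocking rule (\ref{def:seq_mult_sem_nonblock}) and the blocking rule (\ref{def:con_mult_sem_non-term}), and (iii) popping a non-commutative non-terminal, which unlocks and dispatches the cache just below. Getting the run normalisation right, so that every interleaved standard run is reorganised into a sequence of single-process, side-effect-deferred segments that the alternative semantics can replay, and verifying the cache/Parikh-image invariant through all these cases, is the technically heavy step; the remaining bookkeeping reduces to routine case analyses on the three grammar-rule types.
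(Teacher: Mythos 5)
Your overall architecture is the same as the paper's: the forward direction via an abstraction that replaces commutative blocks by Parikh-image caches, a normalisation of the $\toCF$-run into single-process, side-effect-deferred segments, and an induction with case analysis on the shape transition of the active process; the converse via a weak simulation whose relation pairs an alternative configuration with its lazy concretisations, with the cache-dispatch rules (\ref{def:con_mult_sem_disp}) and (\ref{def:con_mult_sem_non-term}) matched by repeatedly firing (\ref{def:conc_faith_send}), (\ref{def:conc_faith_spawn}) and (\ref{def:conc_faith_inter}), and the witnesses $l_i$ either already exposed or exposed by a few extra standard steps. This matches the paper's two propositions and their assembly into the theorem.

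There is, however, one step in your converse direction that fails as stated: you claim that rule (\ref{def:seq_mult_sem_nonblock}) is matched by \emph{no} standard step. Under the simulation relation you define (an alternative configuration related to each of its concretisations), this cannot work. The alternative step rewrites $A\,M\,\gamma \toSM B\,(\M(w)\oplus M)\,\gamma$, so every concretisation of the target has head non-terminal $B$ and a commutative fragment accounting for the \emph{enlarged} cache $\M(w)\oplus M$; the current standard configuration $A\,\beta\,\alpha$ with $\beta \in \sem{M}$ has head $A$ and a fragment accounting only for $M$, so it is in general not a concretisation of the target. The correct matching is by exactly one $\toSF$ step applying $A \rightarrow B\,C$, yielding $B\,C\,\beta\,\alpha$, and then verifying that $C\,\beta \in \sem{\M(w)\oplus M}$ — i.e.\ that adding the undischarged commutative non-terminal $C$ (which can derive $w$) to the fragment re-establishes the Parikh-decomposition invariant of the concretisation. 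This is precisely the one non-trivial case of the sequential simulation in the paper (the other rules really are one-for-one, and weakness is needed not here but for the cache-dispatch rules and for degenerate dispatches that emit nothing). The fix stays entirely within your framework, but as written this case would not go through. A smaller, harmless imprecision: in the forward direction your abstraction folds commutative terminals into caches, so the witness process need not \emph{begin} with $l_i$; it may instead carry $l_i$ inside its leading cache, which is exactly why the alternative problem admits the disjunction $\lambda_i = l_i$ or $l_i \in M_i$.
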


\begin{proof}
We will first prove the $\implies$-direction.
Let $(P;l_1, \ldots, l_n)$ be a yes-instance of the Program-Point Coverabililty problem then a configuration $\cproc{l_1\alpha_1} \parallel \cdots \parallel \cproc{l_n\alpha_n} \parallel \Pi \ChanPar \Gamma$ for some $\alpha_1, \ldots, \alpha_n \in \CommWords$ is $\toCF$ reachable. By Proposition \ref{apx:prop:conc_reduction_simulation}
$\seqM(\cproc{l_1\alpha_1} \parallel \cdots \parallel \cproc{l_n\alpha_n}) \parallel \seqM(\Pi) \ChanPar \Gamma$ is reachable for $\toCM$ and thus $(P;l_1, \ldots, l_n)$ is a yes-instance of the Alternative Program-Point Coverabililty problem.

For the $\implied$-direction let 
$(P;l_1, \ldots, l_n)$ be a yes-instance of the Alternative Program-Point Coverabililty problem.
Then a configuration $\cproc{\gamma_1} \parallel \cdots \parallel \cproc{\gamma_n} \parallel \Pi \ChanPar \Gamma$ is $\toCM$ reachable and for $i = 1, \ldots, n$  either $\gamma_i = l_i\gamma'_i$ or $\gamma_i = M_i\gamma'_i$ such that $l_i \in M_i$. By Proposition \ref{apx:prop:concurrent_simulation} we can conclude that 
$\cproc{\alpha_1} \parallel \cdots \parallel \cproc{\alpha_n} \parallel \Pi' \ChanPar \Gamma$ is $\toCF$ reachable 
such that 
$\cproc{\gamma_1} \parallel \cdots \parallel \cproc{\gamma_n} \parallel \Pi \ChanPar \Gamma \simulatedC \cproc{\alpha_1} \parallel \cdots \parallel \cproc{\alpha_n} \parallel \Pi' \ChanPar \Gamma$.
That means for $i = 1, \ldots, n$ either $\alpha_i = l_i\alpha'_i$ or $\alpha_i = \beta_i\alpha'_i$
such that $\beta_i \in \ComN^*$ and $\beta_i \toSF^* w_i^0w_i^1$ such that $\M(w_i^1) = M_i$ and $w_i^0 \in \ComSigma^*$. It follows, by using $\eqvI$ where necessary and choosing rewrite rules to expose $l_i$, that $\beta_i \toSF^* {w'}_i^0l_i\beta'_i$  where ${w'}_i^0 \in \ComSigma^*$ and $\beta_i \in \ComN^*$. Hence
$\cproc{\alpha_1} \parallel \cdots \parallel \cproc{\alpha_n} \parallel \Pi' \ChanPar \Gamma \toCF^*
\cproc{\alpha''_1} \parallel \cdots \parallel \cproc{\alpha''_n} \parallel \Pi' \ChanPar \Gamma  \oplus \Gamma({w'}_0^0\cdots {w'}_n^0)$
where either $\alpha''_i = \alpha_i = l_i\alpha_i$ and ${w'}_i^0 = \epsilon$ or
$\alpha''_i = l_i\beta'_i$. Thus we can conclude that $(P;l_1, \ldots, l_n)$ is a yes-instance for the Program-Point Coverability problem.
\end{proof}
%!TEX root = main.tex
\section{Proof of Lemma \ref{lemma:monotonicity_of_inst_chan}}

\begin{lemma}[Sequential Monotonicity]\label{apx:lemma:seq_monotonicity}
The transition relation $\toSM$ is monotone with respect to $\leq_{\Control^{\leq k}}$.
\end{lemma}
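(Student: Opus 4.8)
The plan is a direct case analysis on which of the four rules of Definition~\ref{def:alternative_operational_semantics} is used, exploiting the key structural fact that every $\toSM$-transition has a left-hand side of the form $A\,M\,\gamma$ with $A\in\NonT$, and that both the applicability of the rule and the shape of the successor are determined \emph{solely} by $A$ and the chosen $\calG$-rule (and, for rule~(\ref{def:seq_mult_sem_nonblock}), by a chosen derivation $C\toSF^* w$), never by the cache $M$ nor by the tail $\gamma$ of the call stack.

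First I would unpack the hypothesis. Suppose $A\,M\,\gamma\toSM\gamma_2$ and $A\,M\,\gamma\leq_{\Control^{\leq k}}\delta$. Viewed inside $\Control^{\leq k}=\ControlState\cdot\CallStack^{\leq k}$, the control state of $A\,M\,\gamma$ is the pair $(A,M)$, which lies in the $\NonT\cdot\Cache$ summand of $\InControl\subseteq\ControlState$. Because $\leq_{\Control^{\leq k}}=\leq_{\ControlState}\cdot\leq_{\CallStack^{\leq k}}$, because $\leq_{\ControlState}$ and $\leq_{\InControl}$ are each disjoint-union orderings, and because the concatenation and disjoint-union orderings relate only elements sitting in matching summands, $\delta$ is forced to have the form $A\,M'\,\gamma'$ with $M\leq_{\Cache}M'$ (so in particular $M,M'$ in the same summand of $\Cache=\TermCache\disjointunion\MixedCache$) and $\gamma\leq_{\CallStack^{\leq k}}\gamma'$ (so in particular $\gamma,\gamma'$ carry the same skeleton $X_1\cdots X_j$ of non-commutative non-terminals, with corresponding caches related by $\leq_{\Cache}$).

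Next I would re-fire exactly the same $\calG$-rule at $\delta=A\,M'\,\gamma'$, reusing, in the case of rule~(\ref{def:seq_mult_sem_nonblock}), the same $w$ with $C\toSF^* w$; this is legitimate since no side condition mentions $M$ or $\gamma$. This yields a successor $\delta_2$ of the same syntactic shape as $\gamma_2$ but with $M$ replaced by $M'$ (or $\M(w)\oplus M$ by $\M(w)\oplus M'$) and $\gamma$ by $\gamma'$: rule~(\ref{def:seq_mult_sem_action}) gives $a\,M'\,\gamma'$, rule~(\ref{def:seq_mult_sem_tailrec}) gives $a\,B\,M'\,\gamma'$, rule~(\ref{def:seq_mult_sem_block}) gives $B\,C\,M'\,\gamma'$, and rule~(\ref{def:seq_mult_sem_nonblock}) gives $B\,(\M(w)\oplus M')\,\gamma'$. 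In each case $\gamma_2$ and $\delta_2$ land in the same summand of $\ControlState$ and have the same call-stack skeleton — hence the same skeleton length, so $\delta_2\in\Control^{\leq k}$ whenever $\gamma_2$ is — and the required inequality $\gamma_2\leq_{\Control^{\leq k}}\delta_2$ then reduces to $M\leq_{\Cache}M'$ and $\gamma\leq_{\CallStack^{\leq k}}\gamma'$. For rule~(\ref{def:seq_mult_sem_nonblock}) one additionally invokes monotonicity of multiset union, $\M(w)\oplus M\leq_{\M}\M(w)\oplus M'$; for rule~(\ref{def:seq_mult_sem_block}) one checks that prepending the common symbol $C$, with caches $M$ resp.\ $M'$, to the matched stacks $\gamma\leq_{\CallStack^{\leq k}}\gamma'$ preserves the call-stack ordering.

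The one point that genuinely needs care — and the closest thing to an obstacle in an otherwise routine argument — is the disjoint-union bookkeeping: one must check that in every case the fresh cache attached to $\delta_2$ is assigned the \emph{same} summand ($\TermCache$ or $\MixedCache$) as the one attached to $\gamma_2$, so the two successors are genuinely $\leq_{\Cache}$-comparable rather than merely equal as multisets. This is precisely the situation the disjoint-union presentation of $\Cache$ was introduced to support (cf.\ the footnote accompanying the definition of $\Config^{\leq k}$): the summand of the successor's cache is a function only of the rule applied, of whether the chosen $w$ lies in $\ComSigma^*$ or contains a commutative non-terminal, and of the common summand of $M$ — all of which the two transitions share. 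Everything else is immediate from the definitions of the component orderings.
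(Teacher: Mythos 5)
Your proposal is correct and takes essentially the same route as the paper's proof: both arguments force $\delta$ to have the same head non-terminal and the same stack skeleton with pointwise $\leq_{\Cache}$-larger caches, then re-fire the identical $\calG$-rule (reusing the same $w$ for rule~(\ref{def:seq_mult_sem_nonblock})) and conclude by monotonicity of $\oplus$ and of the component orderings, noting that the skeleton length is unchanged so the $\leq k$ constraint is preserved. Your extra care about the disjoint-union summand bookkeeping is a point the paper's proof simply treats as obvious.
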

\begin{proof}
	Suppose $\gamma, \gamma', \delta \in \Control$ such that $\gamma \leq \delta$ and $\gamma \toSM \gamma'$.
	We will show that there $\exists \delta'$  such that $\delta \toSM \delta'$ and $\gamma \leq \delta'$. 
	We conclude from the definition of $\leq_{\Control^{\leq k}}$ on $\Control^{\leq k}$ and the fact that $\gamma \toSM \gamma'$ that $\gamma = X_1M_1X_2M_2\cdots X_jM_j$ and $\delta = X_1M'_1X_2M'_2\cdots X_jM'_j$ with $M_i \leq_{\Cache} M'_i$ for $1 \leq i \leq j \leq k$.
	Our proof will be by case analysis on $\gamma \toSM \gamma'$.
	\begin{itemize}
	    \item $\gamma \toSM \gamma'$ using Rule \ref{def:seq_mult_sem_nonblock}.\newline
		Thus there is a $X_1 \rightarrow BC$ rule and $C \toSF^* w$ and $\gamma' = B(\M(w) \oplus M_1)X_2M_2\cdots X_jM_j$.
		Hence $\delta \toSM B(\M(w) \oplus M'_1)X_2M'_2\cdots X_jM'_j =: \delta'$. Clearly $(\M(w) \oplus M_1) \leq_{\Cache} (\M(w) \oplus M'_1)$ and thus $\gamma' \leq_{\Control^{\leq k}} \delta'$.
		\item $\gamma \toSM \gamma'$ using Rule \ref{def:seq_mult_sem_block}.\newline
		Thus there is a $X_1 \rightarrow BC$ rule, $C \in \NComN$ and $\gamma' = B\,C\,M_1X_2M_2\cdots X_jM_j$. Further since $\gamma'$ in $\Control^{\leq k}$ it is the case that $j < k$.
		Hence $\delta \toSM B\,C\,M'_1X_2M'_2\cdots X_jM'_j =: \delta'$, $\delta \in \Control^{\leq k}$ since $j < k$ and obviously $\gamma' \leq_{\Control^{\leq k}} \delta'$.
		\item $\gamma \toSM \gamma'$ using Rule \ref{def:seq_mult_sem_tailrec}.\newline
		Thus there is a $X_1 \rightarrow aB$ rule, $a \in \Sigma \union \{\epsilon\}$ and $\gamma' = a\,B\,M_1X_2M_2\cdots X_jM_j$.
		Hence $\delta \toSM a\,B\,M'_1X_2M'_2\cdots X_jM'_j =: \delta'$ and obviously $\gamma' \leq_{\Control^{\leq k}} \delta'$.
		\item $\gamma \toSM \gamma'$ using Rule \ref{def:seq_mult_sem_action}.\newline
		Thus there is a $X_1 \rightarrow a$ rule, $a \in \Sigma \union \{\epsilon\}$ and $\gamma' = a\,M_1X_2M_2\cdots X_jM_j$.
		Hence $\delta \toSM a\,M'_1X_2M'_2\cdots X_jM'_j =: \delta'$ and obviously $\gamma' \leq_{\Control^{\leq k}} \delta'$.
	\end{itemize}
\end{proof}

\begin{lemma}[Monotonicity]\label{apx:lemma:monotonicity_of_inst_chan}
The transition relation $\toCM$ is monotone with respect to $\leq_{\Config}$.
\end{lemma}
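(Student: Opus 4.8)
The plan is to prove monotonicity by a case analysis on the rule of $\toCM$ applied in the transition $\Pi \ChanPar \Gamma \toCM \Pi' \ChanPar \Gamma'$, after unfolding $\leq_{\Config^{\leq k}} = \leq_{\M[\Control^{\leq k}]} \times \leq_{\Queues}$. The key preliminary observation I would record is that the order $\leq_{\Control^{\leq k}}$, being assembled by (WQO-a), (WQO-d) and (WQO-e) out of the \emph{equalities} $\eqX{\Sigma}$ and $\eqX{\NonT}$ and the cache order $\leq_{\Cache}$, is \emph{shape-preserving}: whenever $\gamma \leq_{\Control^{\leq k}} \delta$, the two processes lie in the same summand of $\ControlState$, agree on the leading terminal of the control state (when it is in $\InControl$), on the control-state non-terminal (when present), and on the whole sequence $X_1, \ldots, X_j$ of non-commutative non-terminals of the call stack; they may differ only in that every cache of $\delta$ dominates the corresponding cache of $\gamma$ under $\leq_{\M}$. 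Likewise $\Gamma(c) \leq_{\M} \Delta(c)$ for every channel $c$ when $\Gamma \leq_{\Queues} \Delta$.

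With this in hand, assume also $\Pi \ChanPar \Gamma \leq_{\Config^{\leq k}} \Theta \ChanPar \Delta$. Using the multiset order (WQO-\ref{prop:wqo:varhigman}) I would fix an embedding witnessing $\Pi \leq_{\M[\Control^{\leq k}]} \Theta$, so that the active process $\cproc{\gamma}$ of the transition is matched to some $\cproc{\delta} \in \Theta$ with $\gamma \leq_{\Control^{\leq k}} \delta$, and the remaining processes of $\Pi$ are dominated by the remaining processes of $\Theta$. The transition is then replayed on $\cproc{\delta}$. For rule (\ref{def:con_mult_sem_interleave}) I would invoke sequential monotonicity, Lemma~\ref{apx:lemma:seq_monotonicity}, to obtain $\delta \toSM \delta'$ with $\gamma' \leq_{\Control^{\leq k}} \delta'$ (and $\delta$'s control state is in $\InControl$ since $\gamma$'s is). For rules (\ref{def:con_mult_sem_rec})--(\ref{def:con_mult_sem_label}), shape-preservation guarantees that $\delta$'s control state begins with the same terminal ($\rec{c}{m}$, $\nu X$, $\snd{c}{m}$, or $l$), so the rule is applicable; for receive the message $m$ is still present since $m \in \Gamma(c) \leq_{\M} \Delta(c)$, removing or adding one fixed $m$ preserves $\leq_{\Queues}$, and the spawned $\cproc{X}$ of (\ref{def:con_mult_sem_spawn}) is produced identically on both sides. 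For rules (\ref{def:con_mult_sem_disp}) and (\ref{def:con_mult_sem_non-term}) the control state of $\gamma$ is a cache $M$ and that of $\delta$ a cache $N$ of the same kind ($\TermCache$ or $\MixedCache$) with $M \leq_{\M} N$, so the rule fires on $\cproc{\delta}$, and monotonicity of $M \mapsto \Gamma(M)$, of $M \mapsto \Pi(M)$ and of $M \mapsto M\restriction(\ComN\cup\PPL)$, together with $\Gamma \leq_{\Queues} \Delta$, deliver a dominating successor.

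The step I expect to be the main obstacle is the multiset bookkeeping, rather than any individual rule: one must verify that, after deleting the active process $\cproc{\gamma}$ from $\Pi$ and $\cproc{\delta}$ from $\Theta$ and then inserting the freshly created processes, the resulting multisets are still ordered by $\leq_{\M[\Control^{\leq k}]}$. This follows because the fixed embedding restricts to a witness for the leftover multisets, and every newly created process is produced on the larger side with at least the multiplicity it has on the smaller side — a spawned $\cproc{X}$ occurs with multiplicity one on both sides, and $M \leq_{\M} N$ gives $\Pi(M) \leq_{\M} \Pi(N)$ for the cached processes unlocked by (\ref{def:con_mult_sem_disp}) and (\ref{def:con_mult_sem_non-term}). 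Collecting the cases yields the stated monotonicity of $\toCM$ with respect to $\leq_{\Config^{\leq k}}$ (hence also the main-text Lemma~\ref{lemma:monotonicity_of_inst_chan}), from which the corollary that $(\Config^{\leq k}, \toCM, \leq_{\Config^{\leq k}})$ is a well-structured transition system is immediate.
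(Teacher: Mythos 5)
Your proposal is correct and follows essentially the same route as the paper: first sequential monotonicity of $\toSM$ (exploiting that $\leq_{\Control^{\leq k}}$ fixes the non-commutative non-terminals and the summand structure, allowing only cache growth), then a case analysis on the rules of $\toCM$ in which the transition is replayed on a dominating process and monotonicity of $\oplus$, $\Gamma(\cdot)$, $\Pi(\cdot)$ and $\restriction$ closes each case. Your explicit treatment of the multiset bookkeeping is slightly more careful than the paper's (which simply decomposes $\Pi_1 = \cproc{\gamma} \parallel \Pi_1^0$, $\Pi_2 = \cproc{\delta} \parallel \Pi_2^0$ with $\Pi_1^0 \leq \Pi_2^0$), but the substance is the same.
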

\begin{proof}
	Suppose $\Pi_1 \ChanPar \Gamma_1, \Pi'_1 \ChanPar \Gamma'_1, \Pi_2 \ChanPar \Gamma_2 \in \Config$ such that $\Pi_1 \ChanPar \Gamma_1 \leq_{\Config} \Pi_2 \ChanPar \Gamma_2$ and $\Pi_1 \ChanPar \Gamma_1 \toCM \Pi'_1 \ChanPar \Gamma'_1$.
	We will show that there $\exists \Pi'_2$  such that $\Pi_2 \ChanPar \Gamma_2 \toCM \Pi'_2 \ChanPar \Gamma'_2$ and $\Pi'_1 \ChanPar \Gamma'_1 \leq_{\Config} \Pi'_2 \ChanPar \Gamma'_2$. 
	Since $\Pi_1 \ChanPar \Gamma_1 \leq_{\Config} \Pi_2 \ChanPar \Gamma_2$ and 
	$\Pi_1 \ChanPar \Gamma \toCM \Pi'_1 \ChanPar \Gamma'$, we can infer the components of the configuration involved in the latter transition.
	That means $\Pi_1 = \gamma \parallel \Pi^0_1$  and $\Pi_2 = \delta \parallel \Pi^0_2$ such that $\Pi^0_1 \leq_{\M[\Control^{\leq k}]} \Pi^0_2$, $\Gamma_1 \leq_{\Queues} \Gamma_2$, $\gamma, \delta \in \Control^{\leq k}$ and $\gamma \leq_{\Control^{\leq k}} \delta$.
	Our proof will be by case analysis on $\Pi_1 \ChanPar \Gamma \toCM \Pi'_1 \ChanPar \Gamma'_1$.
	\begin{itemize}
		\item $\Pi_1 \ChanPar \Gamma_1 \toCM \Pi'_1 \ChanPar \Gamma'_1$ 
			  using \ref{def:con_mult_sem_interleave}.\newline
			This follows immediately by Lemma \ref{apx:lemma:seq_monotonicity}.
		\item $\Pi_1 \ChanPar \Gamma_1 \toCM \Pi'_1 \ChanPar \Gamma'_1$ 
			  using \ref{def:con_mult_sem_rec}.\newline
			Thus we can conclude 
			\begin{inparaenum}[(i)]
				\item $\gamma = \cproc{\rec{c}{m}\gamma'}$,
				\item $\Pi'_1 = \cproc{\gamma'} \parallel \Pi_1^0$,
				\item $\Gamma_1 = \Gamma'_1 \oplus \Gamma(\snd{c}{m})$. 
				Further since $\Pi_1 \ChanPar \Gamma_1 \leq_{\Config} \Pi_2 \ChanPar \Gamma_2$ we infer
				\item $\delta = \cproc{\rec{c}{m}\delta'}$ with $\gamma' \leq_{\Control^{\leq k}} \delta'$ and
				\item $\Gamma_2 = \Gamma'_2 \oplus \Gamma(\snd{c}{m})$.
			\end{inparaenum}
			
			Then
			 we have $\Pi_2 \ChanPar \Gamma_2 \toCM {\cproc{\delta'} \parallel \Pi^0_2 \ChanPar \Gamma'_2}$.

			Writing $\Pi'_2 := \cproc{\delta'} \parallel \Pi^0_2$ it remains to show $\Pi'_1 \ChanPar \Gamma'_1 \leq_{\Config} \Pi'_2 \ChanPar \Gamma'_2$.

			\begin{inparaenum}[(a)]
				Now 
				\item $\cproc{\gamma'} \leq_{\M[\Control^{\leq k}]} \cproc{\delta'}$,
				\item $\Pi^0_1 \leq_{\M[\Control^{\leq k}]} \Pi^0_2$ by assumption and
				\item since $\Gamma_1 \leq_{\Queue} \Gamma_2$ and clearly $\Gamma'_1 \leq_{\Queue} \Gamma'_2$.
			\end{inparaenum}

			Hence we conclude $\Pi'_1 \ChanPar \Gamma'_1 \leq_{\Config} \Pi'_2 \ChanPar \Gamma'_2$.

		\item $\Pi_1 \ChanPar \Gamma_1 \toCM \Pi'_1 \ChanPar \Gamma'_1$ 
			  using \ref{def:con_mult_sem_spawn}.\newline
			Thus we can conclude 
			\begin{inparaenum}[(i)]
				\item $\gamma = \cproc{(\nu X)\,\gamma'}$,
				\item $\Pi'_1 = \cproc{\gamma'} \parallel \Pi_1^0 \parallel \cproc{X}$,
				\item $\Gamma'_1 = \Gamma_1$. 
				Further since $\Pi_1 \ChanPar \Gamma_1 \leq_{\Config} \Pi_2 \ChanPar \Gamma_2$ we infer
				\item $\delta = \cproc{(\nu X)\,\delta'}$ with $\gamma' \leq_{\Control^{\leq k}} \delta'$.
			\end{inparaenum}
			
			Then
			 we have $\Pi_2 \ChanPar \Gamma_2 \toCM {\cproc{\delta'} \parallel \Pi^0_2 \parallel \cproc{X} \ChanPar \Gamma_2}$.

			Writing $\Pi'_2 := \cproc{\delta'} \parallel \Pi^0_2 \parallel \cproc{X}$ it remains to show $\Pi'_1 \ChanPar \Gamma_1 \leq_{\Config} \Pi'_2 \ChanPar \Gamma_2$.

			\begin{inparaenum}[(a)]
				Now 
				\item $\cproc{\gamma'} \leq_{\M[\Control^{\leq k}]} \cproc{\delta'}$,
				\item $\Pi^0_1 \leq_{\M[\Control^{\leq k}]} \Pi^0_2$ by assumption and
				\item clearly $\cproc{X} \leq_{\M[\Control^{\leq k}]} \cproc{X}$.
			\end{inparaenum}

			Hence we conclude $\Pi'_1 \ChanPar \Gamma'_1 \leq_{\Config} \Pi'_2 \ChanPar \Gamma'_2$.

		\item $\Pi_1 \ChanPar \Gamma_1 \toCM \Pi'_1 \ChanPar \Gamma'_1$ 
			  using \ref{def:con_mult_sem_send}.\newline
			Thus we can conclude 
			\begin{inparaenum}[(i)]
				\item $\gamma = \cproc{\snd{c}{m}\,\gamma'}$,
				\item $\Pi'_1 = \cproc{\gamma'} \parallel \Pi_1^0$,
				\item $\Gamma'_1 = \Gamma_1 \oplus \Gamma(\snd{c}{m})$. 
				Further since $\Pi_1 \ChanPar \Gamma_1 \leq_{\Config} \Pi_2 \ChanPar \Gamma_2$ we infer
				\item $\delta = \cproc{\snd{c}{m}\,\delta'}$ with $\gamma' \leq_{\Control^{\leq k}} \delta'$.
			\end{inparaenum}
			
			Then
			 we have $\Pi_2 \ChanPar \Gamma_2 \toCM {\cproc{\delta'} \parallel \Pi^0_2 \ChanPar 
			 \Gamma_2 \oplus \Gamma(\snd{c}{m})}$.

			Writing $\Pi'_2 := \cproc{\delta'} \parallel \Pi^0_2$ and $\Gamma'_2 := \Gamma_2 \oplus \Gamma(\snd{c}{m})$ it remains to show $\Pi'_1 \ChanPar \Gamma'_1 \leq_{\Config} \Pi'_2 \ChanPar \Gamma'_2$.

			\begin{inparaenum}[(a)]
				Now 
				\item $\cproc{\gamma'} \leq_{\M[\Control^{\leq k}]} \cproc{\delta'}$,
				\item $\Pi^0_1 \leq_{\M[\Control^{\leq k}]} \Pi^0_2$ by assumption and
				\item since $\Gamma_1 \leq_{\Queues} \Gamma_2$, $\oplus$ and $\Gamma(\cdot)$ monotonic we have $\Gamma_1 \oplus \Gamma(\snd{c}{m}) \leq_{\Queues} \Gamma_2 \oplus \Gamma(\snd{c}{m})$.
			\end{inparaenum}

			Hence we conclude $\Pi'_1 \ChanPar \Gamma'_1 \leq_{\Config} \Pi'_2 \ChanPar \Gamma'_2$.

		\item $\Pi_1 \ChanPar \Gamma_1 \toCM \Pi'_1 \ChanPar \Gamma'_1$ 
			  using \ref{def:con_mult_sem_label}.\newline
			Thus we can conclude 
			\begin{inparaenum}[(i)]
				\item $\gamma = \cproc{l\,\gamma'}$,
				\item $\Pi'_1 = \cproc{\gamma'} \parallel \Pi_1^0$,
				\item $\Gamma'_1 = \Gamma_1$. 
				Further since $\Pi_1 \ChanPar \Gamma_1 \leq_{\Config} \Pi_2 \ChanPar \Gamma_2$ we infer
				\item $\delta = \cproc{l\,\delta'}$ with $\gamma' \leq_{\Control^{\leq k}} \delta'$.
			\end{inparaenum}
			
			Then
			 we have $\Pi_2 \ChanPar \Gamma_2 \toCM {\cproc{\delta'} \parallel \Pi^0_2 \ChanPar 
			 \Gamma_2}$.

			Writing $\Pi'_2 := \cproc{\delta'} \parallel \Pi^0_2$ it is trivial to see $\Pi'_1 \ChanPar \Gamma_1 \leq_{\Config} \Pi'_2 \ChanPar \Gamma_2$.

		\item $\Pi_1 \ChanPar \Gamma_1 \toCM \Pi'_1 \ChanPar \Gamma'_1$ 
			  using \ref{def:con_mult_sem_disp}.\newline
			Thus we can conclude 
			\begin{inparaenum}[(i)]
				\item $\gamma = \cproc{M_1X\gamma_1}$,
				\item $\Pi'_1 = \cproc{\gamma'} \parallel \Pi_1^0 \parallel \Pi(M_1)$,
				\item $\gamma' = \cproc{X\gamma_1}$ and
				\item $\Gamma'_1 = \Gamma_1 \oplus \Gamma(M_1)$. 
				Further since $\Pi_1 \ChanPar \Gamma_1 \leq_{\Config} \Pi_2 \ChanPar \Gamma_2$ we infer
				\item $\delta = \cproc{M_2X\delta_1}$ with $M_1 \leq_{\Cache} M_2$ and $X\gamma_1 \leq_{\Control^{\leq k}} X\delta_1$.
			\end{inparaenum}
			
			Let $\delta' := X\delta_1$ then clearly $\gamma' \leq_{\Control^{\leq k}} \delta'$ and
			 we have $\Pi_2 \ChanPar \Gamma_2 \toCM {\cproc{\delta'} \parallel \Pi^0_2 \parallel \Pi(M_2) \ChanPar \Gamma_2 \oplus \Gamma(M_2)}$.

			Writing $\Pi'_2 := \delta' \parallel \Pi^0_2 \parallel \Pi(M_2)$ and $\Gamma'_2 := \Gamma_2 \oplus \Gamma(M_2)$ we will now show $\Pi'_1 \ChanPar \Gamma'_1 \leq_{\Config} \Pi'_2 \ChanPar \Gamma'_2$.

			\begin{inparaenum}[(a)]
				Now 
				\item $\cproc{\gamma'} \leq_{\M[\Control^{\leq k}]} \cproc{\delta'}$,
				\item $\Pi^0_1 \leq_{\M[\Control^{\leq k}]} \Pi^0_2$ by assumption,
				\item since $M_1 \leq_{\Cache} M_2$ and since $\Pi(\cdot)$ is clearly monotonic
						$\Pi(M_1) \leq_{\M[\Control^{\leq k}]} \Pi(M_2)$ and
				\item lastly since $\Gamma(\cdot)$ is monotonic we can conclude
					  $\Pi(M_1) \leq_{\Queues} \Pi(M_2)$.
			\end{inparaenum}

			Hence we conclude $\Pi'_1 \ChanPar \Gamma'_1 \leq_{\Config} \Pi'_2 \ChanPar \Gamma'_2$.
		\item $\Pi_1 \ChanPar \Gamma_1 \toCM \Pi'_1 \ChanPar \Gamma'_1$ 
			  using \ref{def:con_mult_sem_non-term}.\newline
			Thus we can conclude 
			\begin{inparaenum}[(i)]
				\item $\gamma = \cproc{M_1X\gamma_1}$ with $M_1 \in \MixedCache$,
				\item $\Pi'_1 = \cproc{\gamma'} \parallel \Pi_1^0 \parallel \Pi(M_1)$,
				\item $\gamma' = \cproc{M'_1X\gamma_1}$ with $M'_1 \in \NonTermCache$ and $M'_1 = {M_1 \restriction (\NonT \union \PPL)}$,
				\item $\Gamma'_1 = \Gamma_1 \oplus \Gamma(M_1)$. 
				Further since $\Pi_1 \ChanPar \Gamma_1 \leq_{\Config} \Pi_2 \ChanPar \Gamma_2$ we infer
				\item $\delta = \cproc{M_2X\delta_1}$ with $M_1 \leq_{\Cache} M_2$ and $X\gamma_1 \leq_{\Control^{\leq k}} X\delta_1$.
			\end{inparaenum}
			
			Let $M'_2 := M_2 \restriction (\NonT \union \PPL)$ and $\delta' := M'_2X\delta_1$ then
			 we have $\Pi_2 \ChanPar \Gamma_2 \toCM {\cproc{\delta'} \parallel \Pi^0_2 \parallel \Pi(M_2) \ChanPar \Gamma_2 \oplus \Gamma(M_2)}$.

			Writing $\Pi'_2 := \delta' \parallel \Pi^0_2 \parallel \Pi(M_2)$ and $\Gamma'_2 := \Gamma_2 \oplus \Gamma(M_2)$ we will now show $\Pi'_1 \ChanPar \Gamma'_1 \leq_{\Config} \Pi'_2 \ChanPar \Gamma'_2$.

			\begin{inparaenum}[(a)]
				Now 
				since $\cdot \restriction \cdot$ is monotonic in the first argument and $M_1 \leq_{\Cache} M_2$ we conclude $M'_1 \leq_{\Cache} M'_2$ and thus
				\item $\cproc{\gamma'} \leq_{\M[\Control^{\leq k}]} \cproc{\delta'}$,
				\item $\Pi^0_1 \leq_{\M[\Control^{\leq k}]} \Pi^0_2$ by assumption,
				\item and since $\Pi(\cdot)$ is monotonic
						$\Pi(M_1) \leq_{\M[\Control^{\leq k}]} \Pi(M_2)$ and
				\item lastly since $\Gamma(\cdot)$ is monotonic we can conclude
					  $\Pi(M_1) \leq_{\Queues} \Pi(M_2)$.
			\end{inparaenum}

			Hence we conclude $\Pi'_1 \ChanPar \Gamma'_1 \leq_{\Config} \Pi'_2 \ChanPar \Gamma'_2$.
  	\end{itemize}
\end{proof}

\begin{corollary}\label{apx:cor:CM_WSTS}
The transition system ${(\M[\Control^{\leq k}] \times (\Chan \rightarrow \M[\Msg]),\toCM,\leq)}$ 
is a well-structured transition system.
\end{corollary}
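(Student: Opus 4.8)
The plan is to assemble the two ingredients demanded by the definition of a well-structured transition system: a well-quasi-order on the state space and monotonicity of the transition relation. Recall that a WSTS is a quadruple $(S,\rightarrow,\leq,s_0)$ with $s_0 \in S$, $\leq$ a wqo on $S$, and $\rightarrow$ monotone for $\leq$. Here we take $S := \M[\Control^{\leq k}] \times (\Chan \rightarrow \M[\Msg])$, ${\rightarrow} := \toCM$, ${\leq} := \leq_{\Config^{\leq k}}$, and $s_0 := [\cproc{S}] \ChanPar \emptyset$, which plainly lies in $S$.

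First I would note that $\leq_{\Config^{\leq k}}$ is a wqo on $S$; this is exactly the content of the construction in the preceding subsection. In brief: $\leq_{\M}$ is a wqo on each of $\TermCache$, $\NonTermCache$, $\MixedCache$ and $\Queue$; since $\Chan$ is finite, the generalised Dickson's lemma makes $\Queues = \Chan \rightarrow \M[\Msg]$ a wqo; the disjoint-union, concatenation and finite-set closure properties (WQO-a)--(WQO-e) then propagate the wqo property through $\Cache$, $\CallStack^{\leq k}$ (a \emph{finite} product of at most $k$ copies of $\NComN \cdot \Cache$), $\DelayedControl$, $\InControl$, $\ControlState$ and hence $\Control^{\leq k}$; one application of (WQO-\ref{prop:wqo:varhigman}) gives a wqo for $\M[\Control^{\leq k}]$; and a final use of Dickson's lemma combines this with $\leq_{\Queues}$ to yield $\leq_{\Config^{\leq k}}$.

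Second, monotonicity of $\toCM$ with respect to $\leq_{\Config^{\leq k}}$ is precisely Lemma~\ref{apx:lemma:monotonicity_of_inst_chan} (equivalently Lemma~\ref{lemma:monotonicity_of_inst_chan}), already established by case analysis on the rule applied. Putting these together, $(\M[\Control^{\leq k}] \times (\Chan \rightarrow \M[\Msg]),\, \toCM,\, \leq_{\Config^{\leq k}},\, [\cproc{S}] \ChanPar \emptyset)$ satisfies every clause of the definition, so it is a well-structured transition system.

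There is no genuine obstacle in this corollary: all the work has been front-loaded. The delicate points lie earlier — in the wqo construction, where finiteness of $\Sigma$, $\NonT$, $\Chan$ together with the bound $k$ on the number of non-commutative non-terminals is what keeps $\CallStack^{\leq k}$ a finite product rather than an unbounded sequence (for which no wqo would be available); and in the monotonicity lemma, where the subtle case is rule~(\ref{def:seq_mult_sem_nonblock}), which must enlarge the precomputed cache compatibly with the ordering. The corollary itself is a one-line assembly of these facts.
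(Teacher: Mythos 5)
Your proposal is correct and follows essentially the same route as the paper: the paper's own proof is a one-liner citing the monotonicity lemma (Lemma~\ref{lemma:monotonicity_of_inst_chan}), with the well-quasi-order on $\Config^{\leq k}$ taken as already established by the construction in the preceding subsection, exactly as you do. Your additional recap of the wqo construction and the explicit identification of the start configuration are accurate but not new content.
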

\begin{proof}
Follows immediately from Lemma \ref{apx:lemma:monotonicity_of_inst_chan}.
\end{proof}

\begin{theorem}
  The Program-Point Coverability problem for unbounded spawning $k$-ACPS is decidable.
\end{theorem}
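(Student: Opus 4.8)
\begin{proof*}
The plan is to reduce the question to the coverability problem for a well-structured transition system. By Theorem~\ref{thm:reduction_coverability} it suffices to decide the \emph{alternative} program-point coverability problem for the given APCPS $\calG$. Since $\calG$ has $k$-shaped stacks, every $\toCM$-reachable configuration lies in $\Config^{\leq k}$, so we may work inside the transition system $\left(\Config^{\leq k}, \toCM, \leq_{\Config^{\leq k}}\right)$, which by the corollary to Lemma~\ref{lemma:monotonicity_of_inst_chan} is a WSTS (the wqo $\leq_{\Config^{\leq k}}$ has been assembled above, and in particular unbounded process creation is accommodated by the multiset ordering on $\M\left[\Control^{\leq k}\right]$). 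For a WSTS, coverability is decidable \cite{Finkel:01} provided $\uparrow\Pred(\uparrow\{\gamma\})$ is effectively computable for each state $\gamma$.

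Next I would observe that $(P; l_1,\ldots,l_n)$ is a yes-instance of the alternative program-point coverability problem if, and only if, some configuration in a fixed upward-closed set is $\toCM$-reachable from $\mset{\cproc{S}} \ChanPar \emptyset$: namely the upward closure (w.r.t.\ $\leq_{\Config^{\leq k}}$) of the finitely many configurations $\cproc{\lambda_1}\parallel\cdots\parallel\cproc{\lambda_n} \ChanPar \emptyset$, where each $\lambda_i$ is either $l_i$ or the singleton cache $\mset{l_i}$ (with empty call stacks, no other processes, and all channels empty). Each such $\lambda_i$ is a $\leq$-minimal control state, so this upward-closed set captures exactly the witness condition in the problem definition. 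Thus it remains only to supply computability of $\uparrow\Pred(\uparrow\{\gamma\})$.

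The only non-routine part is the case analysis on $\Pred$. For the concurrent rules (\ref{def:con_mult_sem_rec})--(\ref{def:con_mult_sem_non-term}) and the sequential rules (\ref{def:seq_mult_sem_block})--(\ref{def:seq_mult_sem_action}), a predecessor of a given successor is obtained by a bounded, explicitly describable edit (prepending a bounded control prefix and/or cache, removing at most one message from a channel, removing at most one spawned process), and one checks that the set of such predecessors, closed upwards, is a finite union of effectively presentable cones. The delicate case is rule (\ref{def:seq_mult_sem_nonblock}): computing the predecessors whose top cache is covered by a given $M' \in \Cache$ reduces to computing $P_{M'} := \uparrow\{(C,M) \mid C \in \ComN,\ C \toSF^* w,\ M' \leq_{\M} M \oplus \M(w)\}$. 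Here $C \toSF^* w$ is a derivation of the commutative context-free grammar obtained from the commutative fragment of $\calG$; by the bounded-index theorem \cite{Esparza:ArXiv:2010} it may be taken of index $j$ for an \emph{a priori} bound $j$, and by the Petri-net encoding of CCFG \cite{Esparza:1997,Ganty:2012}---augmented with a budget counter that enforces the index bound and a transition enabled only when the full budget is restored, which detects termination of a derivation---the Parikh images of terminating derivations from $C$ are precisely the markings reachable in a Petri net. A backward-coverability computation on that net then yields $P_{M'}$, so $\uparrow\Pred(\uparrow\{\gamma\})$ is computable in all cases.

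Finally, running the WSTS backward-coverability algorithm from the finitely many target configurations identified in the second step decides the alternative program-point coverability problem, and hence, by Theorem~\ref{thm:reduction_coverability}, the program-point coverability problem. I expect the main obstacle to be the rigorous treatment of rule (\ref{def:seq_mult_sem_nonblock}): making precise how the Petri-net model of the commutative fragment of $\calG$ interacts with caches so that $P_{M'}$ is genuinely upward-closed and effectively computable, and verifying that every predecessor set is closed under $\uparrow$, so that the hypotheses of the Finkel--Schnoebelen framework are met.
\end{proof*}
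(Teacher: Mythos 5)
Your route is the paper's route: reduce to the alternative problem via Theorem~\ref{thm:reduction_coverability}, use that for a $k$-shaped APCPS all $\toCM$-reachable configurations lie in $\Config^{\leq k}$ so that the corollary to Lemma~\ref{lemma:monotonicity_of_inst_chan} gives a WSTS, and handle $\uparrow\Pred(\uparrow\{\gamma\})$ by noting that only rule~(\ref{def:seq_mult_sem_nonblock}) is delicate and discharging it through $P_{M'}$, the bounded-index theorem and the budget-counter Petri-net encoding of CCFGs with backwards coverability --- all of this matches the paper. The genuine gap is in your identification of the target upward-closed set. You take the upward closure of the configurations $\cproc{\lambda_1}\parallel\cdots\parallel\cproc{\lambda_n}\ChanPar\emptyset$ with \emph{empty} call stacks and claim that, since each $\lambda_i$ is $\leq$-minimal, this ``captures exactly the witness condition''. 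It does not, because of how $\leq_{\Config^{\leq k}}$ is assembled: the wqo on $\CallStack^{\leq k}$ only relates two stacks $X_1M_1\cdots X_jM_j$ and $X_1M'_1\cdots X_jM'_j$ with the \emph{same} sequence of non-commutative non-terminals (only the caches may grow), so stacks with different skeletons --- in particular the empty stack versus any non-empty one --- are incomparable. Hence a reachable witness process $\cproc{l_i\,\gamma_i}$ whose stack $\gamma_i$ contains some $X\in\NComN$ is \emph{not} above your minimal configuration $\cproc{l_i}$, and your coverability query can answer ``no'' on a yes-instance. Concretely, with rules $S\rightarrow A\,B$, $B$ non-commutative and blocking (say a never-satisfiable receive), and $A\rightarrow l$, the label $l$ is only ever exposed with $B$ still on the stack, so the instance $(\calG;l)$ is a yes-instance that your reduction misses.

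The paper repairs exactly this point: its target set is the upward closure of the \emph{finite} family of configurations $\cproc{\widehat{l_1}X_1^1\cdots X_1^{j_1}}\parallel\cdots\parallel\cproc{\widehat{l_n}X_n^1\cdots X_n^{j_n}}$ with all channels empty, where each $\widehat{l_i}$ is $l_i$ or $\mset{l_i}$ and the $X_i^j$ range over all sequences of non-commutative non-terminals of length at most $k$ (with empty caches in between). Any witness stack $X_1M_1\cdots X_jM_j$ then dominates the corresponding empty-cache skeleton, so this upward-closed set is exactly the witness set, and the $k$-shape bound is precisely what makes the enumeration of skeletons finite. This is a local fix --- the remainder of your argument (WSTS structure, predecessor computation, and the treatment of rule~(\ref{def:seq_mult_sem_nonblock})) coincides with the paper's proof --- but as written the reduction to coverability is incomplete, so the decidability claim does not yet follow.
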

\begin{proof}
By Theorem \ref{thm:reduction_coverability} it suffices to show that the Alternative Program-Point Coverability problem is decidable, which follows from Corollary \ref{apx:cor:CM_WSTS}
and the fact that the set 

$$U := \uparrow\{\cproc{\widehat{l_1}X_1^1\cdots X^1_{j_1}} \parallel \cdots \parallel \cproc{\widehat{l_n}X_n^1\cdots X_n^{j_n}} \ChanPar \Gamma(\mset{}) \mid X_i^j \in \NComN, \widehat{l_i} = l_i \text{ or } \mset{l_i} \text{ and } 0 \leq j_i \leq k\}$$

is upward-closed and $(P;l_1,...,l_n)$ is a yes-instance for the Alternative Program-Point Coverability problem iff an element of $U$ is $\toCM$-reachable.
\end{proof}
%!TEX root = main.tex
\newpage
\section{Notation}

\begin{tabular}{l l}
	$M$             	& multiset							\\
	$\mu,\nu$		& general sequence					\\
	$m$ 				& message							\\
	$c$ 				& channel							\\
	$A,B,C,X,Y,Z$ 	\qquad \qquad & non-terminal i.e.~element of $\NonT$						\\
	$l$ 				& label								\\
	$a$ 				& terminal i.e.~element of $\Sigma$		\\
	$w$				& word over $\Sigma$				\\
	$\alpha,\beta$	& word over $(\Sigma \union \NonT)$	\\
	$\gamma,\delta$	& word in $\Control^\M$				\\
	$i,j,n,k$		& integer							\\
	$\Gamma$		& channel 							\\
	$\Pi$			& set of processes					\\
	$U,V,W$			& set 								\\
	$I$				& Independence relation				\\
	$D$				& dependence relation				\\
	$R$				& relation 							\\
	$u,v$			& general element					\\
 	$\toSF$ & standard sequential semantics \\
 	$\toCF$ & standard concurrent semantics \\
 	$\toSM$ & alternative sequential semantics \\
 	$\toCM$ & alternative concurrent semantics \\
	
\end{tabular}

\else%
\fi%

\end{document}